\newif\ifdoublecol
\definecolor{col1}{HTML}{3891A6}
\definecolor{col2}{HTML}{EF5B5B}
\definecolor{col3}{HTML}{3DDC97}
\pgfplotsset{compat=1.15}
  \pgfplotsset{plot coordinates/math parser=false}
  \newlength\figureheight
  \newlength\figurewidth
   \pgfplotsset{compat=1.11,
    /pgfplots/ybar legend/.style={
    /pgfplots/legend image code/.code={%
       \draw[##1,/tikz/.cd,yshift=-0.25em]
        (0cm,0cm) rectangle (3em,8pt);},
   },
}
\pgfplotsset{
  compat=1.9,
  unit code/.code 2 args={\si{#1#2}} % from manual, for using siunitx to typeset units
}
\newlength{\Oldarrayrulewidth}
\definecolor{intnull}{RGB}{213,229,255}
\definecolor{inteins}{RGB}{128,179,255}
\definecolor{intzwei}{RGB}{42,127,255}
\definecolor{intdrei}{RGB}{0,85,212}
\definecolor{intvier}{RGB}{0,51,128}
\definecolor{intfunf}{RGB}{0,34,85}
\newtheorem{lemma}{Lemma}
\newtheorem{corollary}{Corollary}
\newtheorem{remarknum}{Remark} 
\newcommand{\herm}{^{\mbox{\scriptsize H}}}
\newcommand{\vbar}{\raisebox{.17ex}{\rule{.04em}{1.35ex}}}
\newcommand{\vbarind}{\raisebox{.01ex}{\rule{.04em}{1.1ex}}}
\newcommand{\R}{\ifmmode{\rm I}\hspace{-.2em}{\rm R} \else ${\rm I}\hspace{-.2em}{\rm R}$ \fi}
\newcommand{\T}{\ifmmode{\rm I}\hspace{-.2em}{\rm T} \else ${\rm I}\hspace{-.2em}{\rm T}$ \fi}
\newcommand{\N}{\ifmmode{\rm I}\hspace{-.2em}{\rm N} \else \mbox{${\rm I}\hspace{-.2em}{\rm N}$} \fi}
\newcommand{\B}{\ifmmode{\rm I}\hspace{-.2em}{\rm B} \else \mbox{${\rm I}\hspace{-.2em}{\rm B}$} \fi}
\newcommand{\Hil}{\ifmmode{\rm I}\hspace{-.2em}{\rm H} \else \mbox{${\rm I}\hspace{-.2em}{\rm H}$} \fi}
\newcommand{\C}{\ifmmode\hspace{.2em}\vbar\hspace{-.31em}{\rm C} \else \mbox{$\hspace{.2em}\vbar\hspace{-.31em}{\rm C}$} \fi}
\newcommand{\Cind}{\ifmmode\hspace{.2em}\vbarind\hspace{-.25em}{\rm C} \else \mbox{$\hspace{.2em}\vbarind\hspace{-.25em}{\rm C}$} \fi}
\newcommand{\Q}{\ifmmode\hspace{.2em}\vbar\hspace{-.31em}{\rm Q} \else \mbox{$\hspace{.2em}\vbar\hspace{-.31em}{\rm Q}$} \fi}
\newcommand{\Z}{\ifmmode{\rm Z}\hspace{-.28em}{\rm Z} \else ${\rm Z}\hspace{-.28em}{\rm Z}$ \fi}
\newtheorem{thm}{Theorem}
\newtheorem{lem}{Lemma}%[section]
\newtheorem{exmp}{Example}%[section]
\theoremstyle{definition}
\newcommand{\CB}[0]{{\mathcal{B}}}
\newcommand{\CD}[0]{{\mathcal{D}}}
\newcommand{\CE}[0]{{\mathcal{E}}}
\newcommand{\CF}[0]{{\mathcal{F}}}
\newcommand{\CK}[0]{{\mathcal{K}}}
\newcommand{\CL}[0]{{\mathcal{L}}}
\newcommand{\CM}[0]{{\mathcal{M}}}
\newcommand{\CN}[0]{{\mathcal{N}}}
\newcommand{\CP}[0]{{\mathcal{P}}}
\newcommand{\CQ}[0]{{\mathcal{Q}}}
\newcommand{\CT}[0]{{\mathcal{T}}}
\newcommand{\CV}[0]{{\mathcal{V}}}
\newcommand{\Ba}[0]{{\mathbf{a}}}
\newcommand{\Bb}[0]{{\mathbf{b}}}
\newcommand{\Bu}[0]{{\mathbf{u}}}
\newcommand{\Bw}[0]{{\mathbf{w}}}
\newcommand{\Bx}[0]{{\mathbf{x}}}
\newcommand{\By}[0]{{\mathbf{y}}}
\newcommand{\Bz}[0]{{\mathbf{z}}}
\newcommand{\BA}[0]{{\mathbf{A}}}
\newcommand{\BB}[0]{{\mathbf{B}}}
\newcommand{\BH}[0]{{\mathbf{H}}}
\newcommand{\BI}[0]{{\mathbf{I}}}
\newcommand{\BU}[0]{{\mathbf{U}}}
\newcommand{\BW}[0]{{\mathbf{W}}}
\newcommand{\SfB}[0]{{\mathsf{B}}}
\newcommand{\SfE}[0]{{\mathsf{E}}}
\newcommand{\SfP}[0]{{\mathsf{P}}}
\newcommand{\SfS}[0]{{\mathsf{S}}}
\DeclareAcronym{ADMM}{
    short = ADMM,
    long = alternating direction method of multipliers,
    list = Alternating Direction Method of Multipliers,
    tag = abbrev
}
\DeclareAcronym{AoA}{
    short = AoA,
    long = angle-of-arrival,
    list = Angle-of-Arrival,
    tag = abbrev
}
\DeclareAcronym{SISO}{
    short = SISO,
    long = single-input single-output,
    list = single-input single-output,
    tag = abbrev
}
\DeclareAcronym{MRT}{
    short = MRT,
    long = maximum ratio transmitter,
    list = maximum ratio transmitter,
    tag = abbrev
}
\DeclareAcronym{PDA}{
    short = PDA,
    long = placement delivery array,
    list = placement delivery array,
    tag = abbrev
}
\DeclareAcronym{EE}{
    short = EE,
    long = energy efficiency,
    list = energy efficiency,
    tag = abbrev
}
\DeclareAcronym{MDS}{
    short = MDS,
    long = maximum distance separation,
    list = maximum distance separation,
    tag = abbrev
}
\DeclareAcronym{SIC}{
    short = SIC,
    long = successive-interference-cancellation,
    list = successive-interference-cancellation,
    tag = abbrev
}
\DeclareAcronym{MAC}{
    short = MAC,
    long = multiple-access-channel,
    list = multiple-access-channel,
    tag = abbrev
}
\DeclareAcronym{AoD}{
    short = AoD,
    long = angle-of-departure,
    list = Angle-of-Departure,
    tag = abbrev
}
\DeclareAcronym{BB}{
    short = BB,
    long = base band,
    list = Base Band,
    tag = abbrev
}
\DeclareAcronym{BC}{
    short = BC,
    long = broadcast channel,
    list = Broadcast Channel,
    tag = abbrev
}
\DeclareAcronym{BS}{
    short = BS,
    long = base station,
    list = Base Station,
    tag = abbrev
}
\DeclareAcronym{BR}{
    short = BR,
    long = best response,
    list = Best Response, 
    tag = abbrev
}
\DeclareAcronym{CB}{
    short = CB,
    long = coordinated beamforming,
    list = Coordinated Beamforming,
    tag = abbrev
}
\DeclareAcronym{CC}{
    short = CC,
    long = coded caching,
    list = Coded Caching,
    tag = abbrev
}
\DeclareAcronym{CE}{
    short = CE,
    long = channel estimation,
    list = Channel Estimation,
    tag = abbrev
}
\DeclareAcronym{CoMP}{
    short = CoMP,
    long = coordinated multi-point transmission,
    list = Coordinated Multi-Point Transmission,
    tag = abbrev
}
\DeclareAcronym{CRAN}{
    short = C-RAN,
    long = cloud radio access network,
    list = Cloud Radio Access Network,
    tag = abbrev
}
\DeclareAcronym{CSE}{
    short = CSE,
    long = channel specific estimation,
    list = Channel Specific Estimation,
    tag = abbrev
}
\DeclareAcronym{CSI}{
    short = CSI,
    long = channel state information,
    list = Channel State Information,
    tag = abbrev
}
\DeclareAcronym{CSIT}{
    short = CSIT,
    long = channel state information at the transmitter,
    list = Channel State Information at the Transmitter,
    tag = abbrev
}
\DeclareAcronym{CU}{
    short = CU,
    long = central unit,
    list = Central Unit,
    tag = abbrev
}
\DeclareAcronym{D2D}{
    short = D2D,
    long = device-to-device,
    list = Device-to-Device,
    tag = abbrev
}
\DeclareAcronym{DE-ADMM}{
    short = DE-ADMM,
    long = direct estimation with alternating direction method of multipliers,
    list = Direct Estimation with Alternating Direction Method of Multipliers,
    tag = abbrev
}
\DeclareAcronym{DE-BR}{
    short = DE-BR,
    long = direct estimation with best response,
    list = Direct Estimation with Best Response,
    tag = abbrev
}
\DeclareAcronym{DE-SG}{
    short = DE-SG,
    long = direct estimation with stochastic gradient,
    list = Direct Estimation with Stochastic Gradient,
    tag = abbrev
}
\DeclareAcronym{DFT}{
	short = DFT,
	long = discrete fourier transform,
	list = Discrete Fourier Transform,
	tag = abbrev
}
\DeclareAcronym{DoF}{
    short = DoF,
    long = degrees of freedom,
    list = Degrees of Freedom,
    tag = abbrev
}
\DeclareAcronym{DL}{
    short = DL,
    long = downlink,
    list = Downlink,
    tag = abbrev
}
\DeclareAcronym{GD}{
	short = GD, 
	long = gradient descent,
	list = Gradeitn Descent,
	tag = abbrev
}
\DeclareAcronym{IBC}{
    short = IBC,
    long = interfering broadcast channel,
    list = Interfering Broadcast Channel,
    tag = abbrev
}
\DeclareAcronym{i.i.d.}{
    short = i.i.d.,
    long = independent and identically distributed,
    list = Independent and Identically Distributed,
    tag = abbrev
}
\DeclareAcronym{JP}{
    short = JP,
    long = joint processing,
    list = Joint Processing,
    tag = abbrev
}
\DeclareAcronym{KKT}{
    short = KKT,
    long = Karush-Kuhn-Tucker,
    tag = abbrev
}
\DeclareAcronym{LOS}{
	short = LOS,
	long = line-of-sight,
	list = Line-of-Sight,
	tag = abbrev
}
\DeclareAcronym{LS}{
    short = LS,
    long = least squares,
    list = Least Squares,
    tag = abbrev
}
\DeclareAcronym{LTE}{
    short = LTE,
    long = Long Term Evolution,
    tag = abbrev
}
\DeclareAcronym{LTE-A}{
    short = LTE-A,
    long = Long Term Evolution Advanced,
    tag = abbrev
}
\DeclareAcronym{MIMO}{
    short = MIMO,
    long = multiple-input multiple-output,
    list = Multiple-Input Multiple-Output,
    tag = abbrev
}
\DeclareAcronym{MISO}{
    short = MISO,
    long = multiple-input single-output,
    list = Multiple-Input Single-Output,
    tag = abbrev
}
\DeclareAcronym{MSE}{
    short = MSE,
    long = mean-squared error,
    list = Mean-Squared Error,
    tag = abbrev
}
\DeclareAcronym{MMSE}{
    short = MMSE,
    long = minimum mean-squared error,
    list = Minimum Mean-Squared Error,
    tag = abbrev
}
\DeclareAcronym{mmWave}{
	short = mmWave,
	long = millimeter wave,
	list = Millimeter Wave,
	tag = abbrev
}
\DeclareAcronym{MU-MIMO}{
    short = MU-MIMO,
    long = multi-user \ac{MIMO},
    list = Multi-User \ac{MIMO},
    tag = abbrev
}
\DeclareAcronym{OTA}{
    short = OTA,
    long = over-the-air,
    list = Over-the-Air,
    tag = abbrev
}
\DeclareAcronym{PSD}{
    short = PSD,
    long = positive semidefinite,
    list = Positive Semidefinite,
    tag = abbrev
}
\DeclareAcronym{QoS}{
	short = QoS,
	long = quality of service,
	list = Quality of Service,
	tag = abbrev
}
\DeclareAcronym{RCP}{
	short = RCP,
	long = remote central processor,
	list = Remote Central Processor,
	tag = abbrev
}
\DeclareAcronym{RRH}{
    short = RRH,
    long = remote radio head,
    list = Remote Radio Head,
    tag = abbrev
}
\DeclareAcronym{RSSI}{
    short = RSSI,
    long = received signal strength indicator,
    list = Received Signal Strength Indicator,
    tag = abbrev
}
\DeclareAcronym{RX}{
	short = RX,
	long = receiver,
	list = Receiver,
	tag = abbrev
}
\DeclareAcronym{SCA}{
    short = SCA,
    long = successive convex approximation,
    list = Successive Convex Approximation,
    tag = abbrev
}
\DeclareAcronym{SG}{
    short = SG,
    long = stochastic gradient,
    list = Stochastic Gradient,
    tag = abbrev
}
\DeclareAcronym{SNR}{
    short = SNR,
    long = signal-to-noise ratio,
    list = Signal-to-Noise Ratio,
    tag = abbrev
}
\DeclareAcronym{SINR}{
    short = SINR,
    long = signal-to-interference-plus-noise ratio,
    list = Signal-to-Interference-plus-Noise Ratio,
    tag = abbrev
}
\DeclareAcronym{SOCP}{
	short = SOCP, 
	long = second order cone program,
	list = Second Order Cone Program,
	tag = abbrev
}
\DeclareAcronym{SSE}{
    short = SSE,
    long = stream specific estimation,
    list = Stream Specific Estimation,
    tag = abbrev
}
\DeclareAcronym{SVD}{
	short = SVD,
	long = singular value decomposition,
	list = Singular Value Decomposition,
	tag = abbrev
}
\DeclareAcronym{TDD}{
	short = TDD,
	long = time division duplex,
	list = Time Division Duplex,
	tag = abbrev
}
\DeclareAcronym{TX}{
	short = TX,
	long = transmitter,
	list = Transmitter,
	tag = abbrev
}
\DeclareAcronym{UE}{
    short = UE,
    long = user equipment,
    list = User Equipment,
    tag = abbrev
}
\DeclareAcronym{UL}{
    short = UL,
    long = uplink,
    list = Uplink,
    tag = abbrev
}
\DeclareAcronym{ULA}{
	short = ULA,
	long = uniform linear array,
	list = Uniform Linear Array,
	tag = abbrev
}
\DeclareAcronym{UPA}{
    short = UPA,
    long = uniform planar array,
    list = Uniform Planar Array,
    tag = abbrev
}
\DeclareAcronym{WMMSE}{
    short = WMMSE,
    long = weighted minimum mean-squared error,
    list = Weighted Minimum Mean-Squared Error,
    tag = abbrev
}
\DeclareAcronym{WMSEMin}{
    short = WMSEMin,
    long = weighted sum \ac{MSE} minimization,
    list = Weighted sum \ac{MSE} Minimization,
    tag = abbrev
}
\DeclareAcronym{WBAN}{
	short = WBAN,
	long = wireless body area network,
	list = Wireless Body Area Network,
	tag = abbrev
}
\DeclareAcronym{WSRMax}{
    short = WSRMax,
    long = weighted sum rate maximization,
    list = Weighted Sum Rate Maximization,
    tag = abbrev
}
\begin{document}

\title{%Transmission Schemes for Enhanced 
%Achievable DoF and Transmit Design for 
Cache-Aided MIMO Communications:\\DoF Analysis and Transmitter Optimization 
}

% Optimizing Achievable Transmission Schemes for Enhanced Degrees of Freedom in Cache-Aided MIMO Wireless Communication

\author{\IEEEauthorblockN{Mohammad NaseriTehrani,
%~\IEEEmembership{Student~Member,~IEEE,}  
MohammadJavad Salehi,%~\IEEEmembership{Member,~IEEE,} 
~and Antti T\"olli}%~\IEEEmembership{Senior~Member,~IEEE}}
\thanks{
The authors are affiliated with the University of Oulu, Finland. Emails: \{firstname.lastname@oulu.fi\}.
This work was supported by Infotech Oulu and by the Research Council of Finland under grants no. 343586 (CAMAIDE) and 346208 (6G Flagship).
This article has been presented in part in~\cite{naseritehrani2024enhanced} and~\cite{naseritehrani2024multicast}.\color{black}
%\textit{(Corresponding~authors:)} are with the Centre for Wireless Communications, University of Oulu, FIN-90014 Oulu, Finland. (e-mail:First-Name.Last-Name@oulu.fi)
}. 

%Mohammad NaseriTehrani, MohammadJavad Salehi and Antti T\"olli 

}

\maketitle

%!TeX root = d2d-cc.tex

\begin{abstract}
%This paper takes a comprehensive view of c
Cache-aided MIMO communications aims to jointly exploit both coded caching~(CC) and spatial multiplexing gains to enhance communication efficiency.  %The contributions are four-fold. 
In this paper, we analyze both the achievable degrees of freedom~(DoF) under linear processing constraint and the finite-SNR performance of a MIMO-CC system with CC gain \(t\), where a server with \(L\) transmit antennas communicates with \(K\) users, each equipped with \(G\) receive antennas. We first demonstrate that the enhanced DoF of %, compared to the state of the art, 
\(\max_{\beta, \Omega} \Omega \times \beta\) is achievable with linear processing, where the number of users \(\Omega\) served in each transmission is fine-tuned to maximize DoF, and \(\beta \le \min\big(G, \nicefrac{L \binom{\Omega-1}{t}}{\big(1 + (\Omega - t - 1)\binom{\Omega-1}{t}}\big)\big)\) represents the number of parallel streams decoded by each user. 
%Second, %for the considered MIMO-CC setup, 
%we introduce an effective transmit covariance matrix design aimed at maximizing the symmetric rate, solved iteratively via successive convex approximation. 
Then, we propose a new class of MIMO-CC schemes using a novel scheduling mechanism leveraging maximal multicasting opportunities to maximize delivery rates at given SNR levels while still adhering to linear processing constraints. 
This new class of schemes is paired with an efficient linear multicast beamformer design, resulting in a more practical, high-performance solution for integrating CC in future MIMO systems.
\end{abstract}

\begin{IEEEkeywords}
\noindent Coded caching, MIMO communications, scheduling, beamforming, degrees of freedom
\end{IEEEkeywords}

\section{Introduction}
\label{section:intro}
Mobile data traffic is continuously growing due to exponentially increasing volumes of multimedia content and the rising popularity of emerging applications such as mobile immersive viewing and extended reality~
%\cite{salehi2022enhancing}.
\cite{rajatheva2020whiteb,salehi2022enhancing}.
%It is apparent that t
The existing wireless network infrastructure is under considerable strain due to the particularly demanding requirements of these applications, ranging from high throughput to ultra-low latency data delivery. 
 This has motivated the development of new innovative techniques, among which,
%\par 
coded caching~(CC), originally proposed in the pioneering work~\cite{maddah2014fundamental}, is particularly promising as it offers a new degree-of-freedom~(DoF) gain that scales proportionally to the cumulative cache size across all network users. In fact, CC enables the use of the onboard memory of network devices as a new communication resource, appealing especially for multimedia applications where the content is cacheable by nature~\cite{salehi2022enhancing,akcay2023optimal,bayat2021coded}. 
% To realize this gain coded caching involves a cache placement phase and a subsequent delivery phase. During cache placement, content from a library of files available at the transmitter is pre-cached into the receiver caches. The optimal performance of the CC scheme is shown to be even achievable when assuming basic uncoded cache placement, allowing for coding exclusively during the delivery phase~\cite{shariatpanahi2018physical}.
To enable this new gain,
in the so-called placement phase, content from a library of files is proactively stored in the receiver caches. This is then followed by a delivery phase, where carefully built codewords are multicast to groups of users of size $t+1$, where the CC gain $t\equiv\frac{KM}{N}$ represents the cumulative cache size across all $K$
users, each with a cache memory large enough to store $M$ files, normalized by the library size of $N$ files.
%CC encompasses a cache placement phase, followed by a delivery phase. In the cache placement phase, content from a library of files is proactively stored in the receiver caches. 
%Then, during the delivery phase, carefully built codewords are multicast to groups of users of size $t+1$, where the CC gain $t\equiv\frac{KM}{N}$ represents the cumulative cache size across all $K$
%users, each with a cache memory large enough to store $M$ files, normalized by the library size of $N$ files.
The codewords are built such that each user can eliminate undesired parts of the message using its cache contents.
%
%The new gain of CC results from multicasting carefully built codewords to groups of users of size $t+1$, where $t$ is the CC gain, such that each user can eliminate undesired parts of the message using its cache contents.
%Then, during the delivery phase, the missing parts of the users' requested data are delivered through carefully designed codewords.
%
%
%Remarkably, optimal performance of the CC scheme is demonstrated even with simple uncoded cache placement, enabling coding exclusively during the subsequent delivery phase~\cite{shariatpanahi2018physical}.
%The cache placement entails pre-caching content from a library of files available at the transmitter into the receiver caches. The performance of the CC scheme attains its optimal performance when uncoded cache placement is assumed, permitting coding exclusively during the delivery phase.
%%%%%%
% In fact, multicasting enhances both spectral and energy efficiency. Moreover, this gain, approaching the theoretical optimum [1], has been demonstrated to endure across diverse scenarios. 
Later, the original CC scheme of~\cite{maddah2014fundamental} was extended to more diverse network conditions and topologies, including multi-server~\cite{shariatpanahi2016multi}, wireless~\cite{tolli2017multi,shariatpanahi2018physical,salehi2019subpacketization}, D2D~\cite{ji2015fundamental,mahmoodi2023d2d}, shared-cache~\cite{parrinello2019fundamental,parrinello2020extending}, multi-access~\cite{serbetci2019multi}, dynamic~\cite{abolpour2024cache}, content-aware~\cite{Mahmoodi2023Multi-antennaDelivery}, and combinatorial~\cite{brunero2022fundamental} networks.

To explore the application of CC in wireless networks comprehensively, it is imperative to investigate the specific attributes of the wireless medium, encompassing its broadcast nature, channel fading, and varying interference. 
This is especially true in the context of multi-antenna systems, given their prominent importance in enabling higher throughput in communication systems~\cite{rajatheva2020whiteb}. In this regard,
the theoretical and practical dimensions of applying CC in multi-input single-output~(MISO) setups have undergone comprehensive exploration in prior research~\cite{tolli2017multi,tolli2018multicast,shariatpanahi2018physical,salehi2019subpacketization, lampiris2021resolving}. In contrast, only a few works have addressed the integration of multiple-input multiple-output~(MIMO) techniques with CC solutions, primarily focusing on enhancing the total DoF measured in terms of the number of simultaneously delivered parallel streams in the network~\cite{salehi2021MIMO,salehi2023multicast,cao2017fundamental,cao2019treating}. Still, many theoretical and practical aspects of applying CC techniques in MIMO systems remain largely unexplored.  

\subsection{Related Work}
\label{section:intro_rela_w}

% \color{red}
Existing works on cache-aided multi-antenna communications with CC techniques have pursued three major goals: increasing the achievable DoF, enhancing finite-SNR performance, and resolving the subpacketization bottleneck.

\subsubsection{Achievable DoF analysis} 
Early works on the integration of the original CC scheme~\cite{maddah2014fundamental} in multi-antenna communications targeted downlink {MISO} setups, revealing the interesting fact that with the CC gain of $t$ and $L$ transmit antennas, $t+L$ users can be served in parallel. In other words, in MISO-CC systems, the total DoF of $t+L$ is achievable~\cite{shariatpanahi2018physical}, and is optimal under simple constraints~\cite{lampiris2021resolving}. These studies were later extended to {MIMO} setups. In~\cite{cao2017fundamental}, the optimal DoF of cache-aided MIMO networks with three transmitters and three receivers was studied, and in~\cite{cao2019treating}, general message sets were used to introduce two inner and outer bounds on the achievable DoF of MIMO-CC schemes. However, achieving the DoF bounds required complex interference alignment techniques. 
%More recently, a new MIMO-CC scheme was introduced in~\cite{salehi2021MIMO}, where the MIMO system was viewed as an extension of the shared-cache setup for MISO systems~\cite{parrinello2019fundamental,parrinello2020extending}, and it was shown that with the CC gain $t$, $L$ antennas at the transmitter, and $G$ antennas at each receiver, if $L$ is divisible by $G$, the single-shot DoF of $Gt+L$ is achievable with a small subpacketization overhead.
More recently, a new MIMO-CC scheme was introduced in~\cite{salehi2021MIMO}, where the MIMO system was interpreted as an extension of the shared-cache setup developed for MISO systems~\cite{parrinello2019fundamental,parrinello2020extending}, and it was shown that with the CC gain $t$, $L$ antennas at the transmitter, and $G$ antennas at each receiver, when $L$ is divisible by $G$, the single-shot DoF of $Gt+L$ is achievable with a small subpacketization overhead. While the extension mechanism in~\cite{salehi2021MIMO} provides a straightforward solution to build MIMO-CC schemes using shared-cache models originally designed for MISO systems, the resulting DoF remains below that of more advanced scheduling-based solutions. %(further discussed in Section~III). 
Moreover, the resulting schemes will necessarily rely on cache-aided interference cancellation in the signal domain~\cite{salehi2022enhancing}.
In other works on MIMO-CC systems, a high-DoF transmission framework for cache-aided MIMO interference networks was designed in~\cite{liu2024coding}, and a partially connected shared-cache network with distributed single-antenna helpers jointly serving single-antenna users was studied in~\cite{akcay2025collaborative}. In the latter work, the overall network was modeled as a MIMO Gaussian broadcast channel, enabling a two-phase delivery scheme leveraging both CC and spatial multiplexing gains.
%In other works on MIMO-CC systems,~\cite{liu2024coding} designed a high-DoF transmission framework for cache-aided MIMO interference networks, and in~\cite{akcay2025collaborative}, a partially connected shared-cache network was studied, where distributed single-antenna helpers jointly served single-antenna users and the overall network was modeled as a MIMO Gaussian broadcast channel, enabling a two-phase delivery scheme that leverages both CC and spatial multiplexing gains.

\subsubsection{Finite-SNR analysis}
Pioneering works on the DoF analysis of both MISO- and MIMO-CC systems~\cite{shariatpanahi2018physical,salehi2021MIMO} relied on zero-force (ZF) beamforming at the transmitter (and matched filtering at the receivers, in the context of MIMO systems). To address the inefficient finite-SNR performance of the ZF beamforming in the MISO-CC scheme in~\cite{shariatpanahi2018physical},
an optimized design of multi-group multicast beamformers was proposed in~\cite{tolli2017multi,tolli2018multicast}. In the same works, the spatial multiplexing gain
and the number of partially overlapping multicast messages were flexibly adjusted to find an appropriate trade-off between reduced design complexity and improved finite-SNR performance. As an alternative approach, a simple iterative solution exploiting Lagrangian duality to design optimized beamformers was proposed in~\cite{mahmoodi2021low}. In the context of MIMO system, in~\cite{salehi2023multicast}, the authors developed optimized unicast and multicast beamformers tailored for the scheme in~\cite{salehi2021MIMO}. In particular, the multicast beamformer design was based on decomposing the system into multiple parallel MISO setups (for divisible $L/G$) where several multicast codewords could be transmitted simultaneously. More recently, a high-performance but highly complex covariance-based multi-group multicasting design for MIMO-CC systems was introduced in~\cite{naseritehrani2024multicast}. 
%In this paper, we have used the results of~\cite{naseritehrani2024multicast} as a benchmark to analyze the performance of our proposed linear solutions. 

\subsubsection{Subpacketization bottleneck}
Subpacketization reflects the division of each file into smaller parts for the CC operation~\cite{lampiris2018adding}. Both the original single-antenna and MISO-CC schemes of~\cite{maddah2014fundamental,shariatpanahi2018physical} required exponentially growing subpacketization (w.r.t the user count $K$), rendering them infeasible for even moderate-sized networks~\cite{lampiris2018adding}. To resolve this issue, the pioneering work in~\cite{lampiris2018adding} introduced signal-level CC operation, where (part of) the interference is regenerated from the local memory and is eliminated from the received signal before decoding at the receiver~\cite{salehi2022enhancing} (in contrast, the original MISO-CC scheme~\cite{shariatpanahi2018physical} relied on bit-level processing by multicasting carefully created XOR codewords to multiple user groups while suppressing the remaining inter-stream interference through spatial processing~\cite{shariatpanahi2018physical,tolli2017multi}). The work in~\cite{lampiris2018adding} showed that, through signal-level processing, the same optimal DoF of $t + L$ could be achieved in MISO-CC setups with much smaller subpacketization. 
%Later, the cyclic scheme in~\cite{salehi2020lowcomplexity} was developed with linearly growing subpacketization. However, such low-subpacketization designs introduce is only applicable to underlying MISO-CC scenarios with $L \ge t$ constraints.
Later, the cyclic scheme proposed in~\cite{salehi2020lowcomplexity} also employed signal-level interference cancellation to achieve linearly growing subpacketization, further improving scalability.
However, the reduced subpacketization in both schemes comes at the cost of limited applicability, as~\cite{lampiris2018adding} imposes divisibility constraints on the system parameters, requiring that both 
$\tfrac{L}{t}$ and $\tfrac{K}{t}$ are integers, and~\cite{salehi2020lowcomplexity} is applicable only to MISO-CC setups with $L \ge t$.

More recently, signal-level CC has also proven effective in addressing several practical bottlenecks of conventional CC. Most prominently, signal-level schemes allow simpler optimized beamformer designs by enabling dedicated unicast beamformers for each data stream~\cite{salehi2021low}, and facilitate extending CC applicability to use cases with location-dependent file requests~\cite{Mahmoodi2023Multi-antennaDelivery, mahmoodi2024low} and dynamic user mobility~\cite{abolpour2024cache,abolpour2024resource}. However, there is a noticeable performance loss in terms of the achievable finite-SNR rate due to the lack of multicast beamforming gain available in the bit-level approach~\cite{salehi2019subpacketization,salehi2022multi}.
% In signal-level Interference cancellation, each receiver needs to reconstruct the interfering terms from its cached bits by applying appropriate channel multipliers. Indeed, this requires access to the cache at PHY-layer, along with additional control signaling to specify how to reconstruct the interference terms  from the cached data~\cite{salehi2022enhancing}. This mechanism is analogous to the superposition coding with successive interference cancellation in NOMA systems~\cite{ding2017application,dai2018survey}, however, cache-based regeneration is inherently \emph{simpler}, since the interfering symbols to be reconstructed are already known locally, eliminating both error propagation and the strict encoding/decoding order. In contrast, Bit-level Interference cancellation is done in the bit domain, \emph{after} the received signal is decoded by the receiver. This makes the process oblivious to the physical layer, thus reducing implementation complexity and signaling overhead while maintaining the multicasting gain of coded caching.
%
In signal-level interference cancellation, each receiver reconstructs the interfering terms from its cached content, requiring PHY-layer cache access and related control signaling. This is similar to superposition coding with successive interference cancellation~\cite{ding2017application,dai2018survey}, but somewhat simpler since the interfering symbols are locally known, avoiding error propagation and decoding order constraints~\cite{salehi2022enhancing}.
%In contrast, bit-level interference cancellation operates in the bit domain after decoding, remaining PHY-layer agnostic and thus reducing implementation complexity and signaling overhead while preserving the coded caching multicasting gain.
 %%%%%%%%%%%%%%%%%%%%%%%%%%%
% Signal-level Interference  cancellation as a physical layer scheme, is analogous to the superposition coding (SPC) with successive interference cancellation (SIC) in NOMA systems~\cite{ding2017application,dai2018survey}, where
% each receiver needs to reconstruct the interfering terms (in baseband) from its cached bits by applying appropriate channel multipliers. Indeed, this requires access to the cache at PHY, along with additional control signaling to indicate how the interference terms should be reconstructed from the cached data~\cite{salehi2022enhancing}. However, compared to conventional SPC-SIC, cache-based regeneration is inherently \emph{simpler}, since the interfering symbols to be reconstructed are already known locally, eliminating both error propagation and the strict encoding/decoding order. Bit-level Interference cancellation is done in the bit domain, \emph{after} the received signal is decoded by the receiver. This makes the process oblivious to the physical layer, thus reducing implementation complexity and signaling overhead while maintaining the multicasting gain of coded caching.
%Nevertheless, the flexibility offered by signal-level CC enabled by the use of dedicated unicast beamformers for each data stream, has motivated continued research on signal-level schemes. 
\color{black}
%%%%%%%%%%%%%%%%%%%%%%%%%%%%%%%%%%%%%%%%
\subsection{Main Contributions}
\label{section:intro_contrib}
% Motivated by the ever-increasing demand for multimedia content and immersive applications as an inseparable part of the next generation of wireless communications, we proceed to integrate coded caching ideas with MIMO technology. This Integration offers numerous advantages but requires comprehensive analysis.
%The main contributions of this work are two-fold: 
%color{red}
In this work, we propose a novel CC-based content delivery framework that integrates MIMO systems with CC, well suited for a broad range of multimedia applications with cacheable content, ranging from collaborative multi-user XR to video streaming services that require high data rate connectivity with low latency. 
In particular, we study the integration of CC and MIMO connectivity under both asymptotic (high-SNR) and finite-SNR regimes, through unified theoretical analysis and practical algorithmic design. The resulting insights and designs represent a significant advancement beyond existing works in the literature.
%that treat MIMO and coded caching separately.
%on either MIMO or CC alone
This paper includes several contributions, falling under two major categories:

%we have refined this proof by providing a more detailed description of the delivery algorithm and a clearer correctness verification, ensuring that the numbers of missing and delivered subpackets match. 

\subsubsection{DoF analysis} We study the asymptotic performance of MIMO-CC systems in Section~\ref{section:DoF}, by analyzing the fundamental achievable DoF under linear decodability constraints.
%This part includes an enhanced version of the DoF achievability proof in~\cite{naseritehrani2024enhanced}, 
%with refined signal-level design, improved algorithmic flow, 
%with a more detailed description of the delivery algorithm and  a clearer correctness verification to ensure that the numbers of missing and delivered subpackets match
%more concrete correctness checking
%(Theorem~\ref{Th:linear_achievablity}).
The analysis is done through introduction of a signal-level MIMO-CC scheme in Theorem~\ref{Th:linear_achievablity}, where instead of serving a fixed number of users in each transmission, we judiciously select the number of users and the spatial multiplexing order per user in order to maximize the DoF. This design provides greater flexibility in selecting system parameters and eliminates the integer constraint on $\frac{L}{G}$ imposed by~\cite{salehi2021MIMO}, resulting in an enhanced DoF of $\max_{\Omega,\beta}{\Omega \times \beta}$ larger than or equal to the DoF of $Gt+L$ in~\cite{salehi2021MIMO}, where $\Omega$ represents the number of users served in each transmission, and $\beta$, where
\begin{equation*}
    \beta \le \mathrm{min}\left(G,\tfrac{L \binom{\Omega-1}{t}}{1 + (\Omega - t-1)\binom{\Omega-1}{t}}\right),
\end{equation*}
denotes the total number of parallel data streams received by each user.

%\noindent\textbf{Connection to our previous works.} 
The DoF analysis in this section is an extension of our earlier conference publications in~\cite{naseritehrani2024enhanced,naseritehrani2024multicast}, with a more detailed description of the delivery algorithm and  a clearer correctness verification to ensure that the numbers of missing and delivered subpackets match
%more concrete correctness checking
(Theorem~\ref{Th:linear_achievablity}).
Specifically, the improved achievable DoF value in MIMO-CC systems under linear decodability constraints was first proposed in~\cite{naseritehrani2024multicast} as a conjecture, without any proof (the main contribution of that work was not DoF analysis but to introduce a high performance non-linear covariance-based transmission design). The first formal achievability proof of the conjecture in~\cite{naseritehrani2024multicast} was later presented in~\cite{naseritehrani2024enhanced}, by introducing a new CC scheme with cache-aided interference cancellation in the signal domain.

In this section, we have further complemented the DoF analysis by multiple new contributions w.r.t to~\cite{naseritehrani2024multicast, naseritehrani2024enhanced}, including a stand-alone, scheme-agnostic linear decodability condition (Theorem~\ref{Th:DoF}), a one-dimensional search algorithm to find the optimized DoF (Corollary~\ref{remark1-Dof}), identification of structural limitations in the solution space for naive candidate selection (Lemma~\ref{lm:beta_G_NBachievable}), a tie-breaking rule connecting asymptotic and finite-SNR regimes by prioritizing among DoF optimal pairs based on their symmetric-rate performance (Remark~\ref{underloading}), and DoF gap analysis with state-of-the-art (Lemma~\ref{lm:wsa_Prop_achievable_DOF_gaps}).

\subsubsection{Finite-SNR analysis} Building on the DoF insights and recognizing the implementation difficulty and reduced multicasting gain of signal-level schemes, we investigate the finite-SNR performance of MIMO-CC systems in Section~\ref{section:DoF_MC} by introducing bit-level interference cancellation and full-size XOR transmission to the linear decodability constraint, and proposing a completely new class of scheduling algorithms built upon advancements in hypergraph theory (Theorems~\ref{Th:Base_scheduling} and~\ref{Th:Scheduling Mechanism}). The goal is to enable flexible design of the delivery algorithm, given the SNR value, to improve the symmetric rate. 
 %Other contributions in this part include
 In this part, we also introduce a non-trivial extension of the MISO-CC scheme in~\cite{shariatpanahi2016multi} to MIMO-CC setups as another baseline to be compared with the new class of schemes, referred to as the Ext-MS scheme (Section~\ref{section:DoF_MC}). Furthermore, as a minor contribution, we develop an iterative linear beamforming solution that integrates into the scheduling scheme and builds upon the solution in~\cite{mahmoodi2021low} but is tailored for the new class of symmetric schemes by accommodating partially overlapping codewords while ensuring linear decodability at the users (relegated to Appendix~\ref{section:Linear_BF}). The  result is a simple yet efficient solution for enabling the gain boost of CC in practical MIMO systems, where maximizing the DoF is not necessarily the primary design objective.
 %
% Other contributions in this part include introduction of a non-trivial extension of the MISO-CC scheme in~\cite{shariatpanahi2016multi} to MIMO-CC setups as another baseline to be compared with the new class of schemes (the Ext-MS scheme in Section~\ref{section:DoF_MC}), and developing a high-performance iterative linear beamforming solution that builds upon the solution in~\cite{mahmoodi2021low} but is tailored for the new class of symmetric schemes by accommodating partially overlapping codewords while ensuring linear decodability at the users (releagated to Appendix~\ref{section:Linear_BF}). The result is a simple yet efficient solution for enabling the gain boost of CC in practical MIMO systems, where maximizing the DoF is not necessarily the primary design objective.

Extensive numerical simulations show the improved performance of our proposed solution, from both DoF and symmetric rate perspectives, with respect to state-of-the-art. In particular, in the finite-SNR regime, the proposed framework achieves superior performance compared to state-of-the-art linear schemes and approaches the performance of the non-linear design in~\cite{naseritehrani2024multicast}.

\color{black}

\textbf{Notations.}
%Throughout the text, we use $(.)^H$ and $(.)^{-1}$ to denote the Hermitian of a complex matrix and the inverse of a square matrix, respectively. Let $\mathbb{C}$ and $\mathbb{N}$ denote the set of complex and natural numbers. 
Throughout the text, $(\cdot)^H$ and $(\cdot)^{-1}$ denote the Hermitian and inverse of a matrix, respectively. Let $\mathbb{C}$ and $\mathbb{N}$ denote the sets of complex and natural numbers.
For integer $J$, $[J] \equiv \{1,2,\dots,J\}$, %\color{red}
for vectors $\Ba$, $\Bb$, $\cdots$, $[\Ba\;\Bb\;\dots]$ denotes their horizontal concatenation, and for matrices $\BA$, $\BB$, $\cdots$, $[\BA\;\BB\;\dots]$ represents their horizontal concatenation. \color{black} 
 Boldface upper- and lower-case letters indicate matrices and vectors, respectively, and calligraphic letters denote sets. $|\CK|$ denotes the cardinality of the set $\CK$, and $\CK \backslash \CT$ is the set of elements in $\CK$ that are not in $\CT$. Supersets are denoted by sans-serif letters, and $|\SfB|$ indicates the size of a superset $\SfB$.
%Supersets are clarified with sans letters, and the size of a superset $\SfB$ is shown by $|\SfB|$.
%collection of sets and the number of such sets are represented by $\mathsf B$ and $|\mathsf B|$, respectively.
%Additionally, the symbol $\oplus$ represents the XOR operation performed over a finite field.
Additionally, $\oplus$ denotes the XOR operation over a finite field.

\section{System Model}
\label{section:sys_model}
%A MIMO setup is considered, where a single BS with $L$ transmit antennas serves $K$ cache-enabled users, each with $G$ receive antennas, as shown in Figure~\ref{fig:ISIT_sysmo}.
A MIMO setup is considered, in which a single BS equipped with \(L\) transmit antennas serves \(K\) cache-enabled users, each having \(G\) receive antennas, as shown in Figure~\ref{fig:ISIT_sysmo}.\footnote{
%In fact, $L$ and $G$ denote the spatial multiplexing gains at the transmitter and receivers, respectively, which may be lower than the actual antenna counts due to channel rank limitations or baseband RF chain constraints. For simplicity, however, the term “antenna count” is used throughout the text.
In fact, $L$ and $G$ represent the spatial multiplexing gain at the transmitter and receivers, respectively, which may be less than the actual number of antennas depending on the channel rank and the number of baseband RF chains. Nevertheless, the term `antenna count' is used for simplicity throughout the text.
}
Every user has a cache memory of size $MF$ bits, and requests a single, unique file from a library $\CF$ of $N$ files, each with the size of $F$ bits. %Without loss of generality, a normalized data unit is assumed, and the size $F$ is ignored in the ensuing notations. 
Without loss of generality, we assume a normalized data unit and omit the file size \(F\) in the subsequent notations. 
The coded caching gain is defined as $t \equiv \frac{KM}{N}$, representing how many replicas of the file library can be stored collectively across the cache memories of all users. In this paper, we assume $K \ge t+1$. The system operation comprises two phases: placement and delivery. In the placement phase, the users' cache memories are filled with data. Following a similar structure as~\cite{shariatpanahi2016multi}, we split each file $W \in \CF$ into $\binom{K}{t}$ packets $W_{\CP}$, where $\CP \subseteq [K]$ denotes any subset of users with $|\CP| = t$. 
%\color{red} 
Then, we store each packet $W_{\CP}$ in the cache of user $k \in [K]$ 
if and only if $k \in \CP$. In other words, user $k$ stores all packets of all files 
whose index sets include $k$, i.e., 
$\{\, W_{\CP} : W \in \CF,\; \CP \subseteq [K],\; k \in \CP \,\}$. \color{black}
%Then, we store  $W_{\CP}$, $\forall W \!\in\! \CF,\forall \CP\! : \!\!k\! \in \!\CP$ in the cache memory of user $k \!\in \![K]$. 

\begin{figure}[t]
    %\begin{subfigure}{0.3\columnwidth}
        \centering
        \includegraphics[height = 5.8cm]{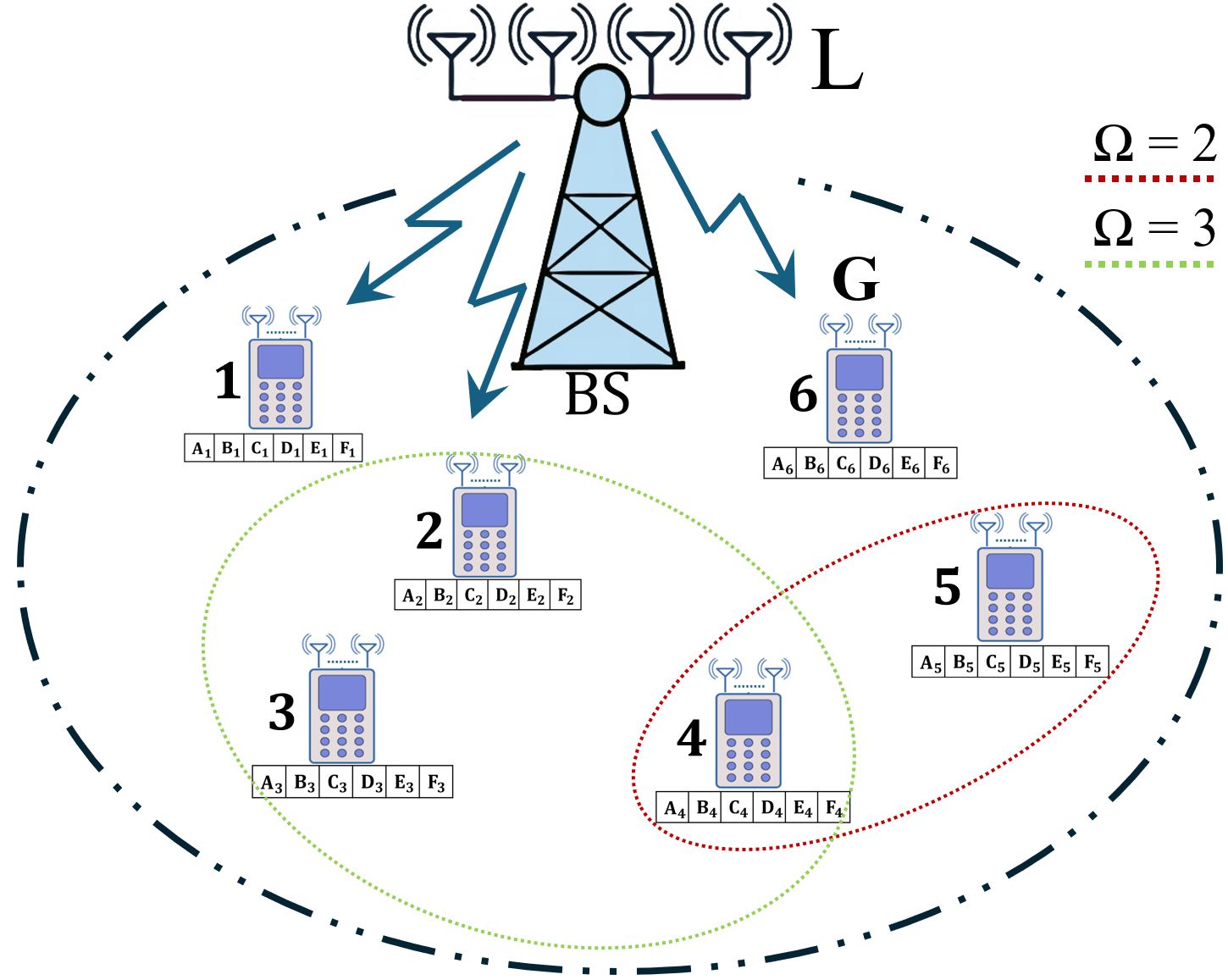} \label{fig:receiver_Prop}
    \caption{MIMO-CC system model and user selection for different $\Omega$.
    %MIMO-CC system model and selection of users for different $\Omega$
    }
    %values of target user set $|\CK|$ per interval, e.g., 
    %$\Omega$
%    \in \{2,3,4,5,6\}$
    \label{fig:ISIT_sysmo}
\end{figure}

At the beginning of the delivery phase, each user $k$ reveals its requested file $W_k \in \CF$ to the server. %\footnote{\color{red}Assuming multiple users requesting cacheable content simultaneously is well-established in the CC literature, particularly for DoF analysis~\cite{maddah2014fundamental,shariatpanahi2016multi,salehi2020lowcomplexity,tolli2017multi}. While in use cases such as collaborative XR this assumption remains valid and non-constraining~\cite{salehi2022enhancing}, for applications with dynamically varying user activity one could still apply the same algorithms developed in this paper using simple techniques such as phantom users that preserve scheduling symmetry with moderate DoF degradation~\cite{abolpour2024resource} (this DoF loss can be partly compensated for in mid-SNR through optimized beamforming~\cite{salehi2021low}).\color{black}} 
%\color{red}
Then, for every subset of users $\CK \subseteq [K]$ with size $|\CK| = \Omega$, the server creates $S$ transmission vectors $\Bx_{\CK}(s)$, $s \in [S]$, each delivering parts of the requested data to every user $k \in \CK$. Here, $t+1 \le \Omega \le K$, is a design parameter and $S$ is a constant multiplier defined by the specific delivery algorithm. In other words, data delivery is done through a total number of $S \binom{K}{\Omega}$ vectors, which are transmitted, e.g., in consecutive time intervals.
%The server then constructs and transmits (e.g., in consecutive time slots) a set of $S \binom{K}{\Omega}$ transmission vectors $\Bx_{\CK}(s)$, $s \in [S]$ , each delivering parts of the requested data to every user in a subset $\CK$ of target users with size $|\CK| = \omega$, where $t+1 \le \Omega \le K$ is a design parameter indicating the number of users served in each transmission, and $S$ is a constant multiplier defined by the specific delivery algorithm.
\color{black}
\color{black} Let us now consider a transmission vector $\Bx_{\CK}(s)$ delivering data to a subset $\CK$ of users with $|\CK| = \Omega$.
%$\Bx(n) \in \mathbb{C}^L$,
%$n \in [\binom{K}{\Omega}]$, where $\Bx(n)$ delivers parts of the requested data to every user in a subset $\CK(n)$ of users with $|\CK(n)| = \Omega$, and $\Omega$ is the number of users served in each transmission, chosen such that $t+1 \le \Omega \le t+L$ and the achievable DoF is maximized (detailed explanation provided in Section~\ref{section:DoF}). 
%for every subset of users $\CK(i) \subseteq [K]$ with $\Omega \in \{t + \frac{L}{G},\dots,t + L\} $ members ($|\CK(i)|  = \Omega$), and every transmission vector $i \in [I]$. The transmission vectors could span $I$ consecutive TDMA slots with duration $T_{i}, i \in [I]$ such that every user in $\CK(i)$ decodes its intended data. 
Upon transmission of $\Bx_{\CK}(s)$, user $k \in \CK$ receives

\begin{equation}\label{eq:RX_signal}
%\begin{array}{l}
%\color{red}
\By_k(s) = \BH_k(s) \Bx_{\CK}(s) + \Bz_k(s) \color{black}\; ,
%\end{array}
\end{equation}
where $\BH_k(s)\color{black} \in \mathbb{C}^{G \times L}$ represents the channel matrix between the server and user $k$, and $\Bz_k(s)\color{black} \sim \mathcal{CN}(\mathbf{0},N_0 \mathbf{I})$ is the noise, in interval $s$. The entries of $\BH_k(s)$ are considered independent identically distributed (i.i.d) Gaussian variables with zero-mean and unit variance, and full channel state information (CSI) is assumed to be available at the server.\footnote{In practical downlink scenarios, we commonly use Time-Division Duplex (TDD) for uplink-downlink transmissions. In this setup, the BS estimates downlink channels by leveraging uplink pilot transmissions through channel reciprocity~\cite{tolli2019distributed}.}
%The receiver noise is assumed to be circularly symmetric zero mean $\Bz_k \sim \mathcal{CN}(\mathbf{0},N_0 \mathbf{I})$.
We adopt a standard block-fading model, in which channel realizations remain constant within a coherence interval and change independently across intervals according to user mobility. Accordingly, the transmitter re-optimizes its beamformers at the beginning of each coherence interval based on the newly acquired CSI, while the higher-level caching and scheduling structure remains unchanged. \color{black}
%\color{red}
\begin{remarknum}
%     Exact meaning of `file' in our system model depends on the particular use case.
%     In VoD (video on-demand), for instance, each large multimedia file is divided into many smaller \emph{chunks}~\cite{Akcay2023OptimalNetworks,bayat2021coded}, and each chunk represents an independent `file' in the coded caching formulation. Each streaming user generates a sequence of requests for the corresponding chunks of its desired movie, and these chunks must be delivered sequentially at a rate slightly higher than the playback rate.
%     For CC operation, it suffices to divide time into consecutive time frames, each roughly with the same length as a video chunk (e.g., order of seconds), and deliver the chunks lately requested by all the users in each time frame. This requires further splitting each time frame into multiple shorter intervals, each chunk into corresponding multiple subpackets, creating codewords and transmission vectors, and sending them through multicast transmissions.
% Inevitably, the duration of these intervals is typically comparable to, or shorter than, the channel coherence time, naturally aligning the delivery process with the block-fading model.
The exact meaning of a `file' in our model depends on the use case. For example, in video-on-demand (VoD) applications, each large multimedia file is divided into smaller `chunks'~\cite{akcay2023optimal,bayat2021coded}, and each chunk is treated as an independent file in the CC formulation. Users sequentially request the chunks of their requested video stream as playback progresses.
The timeline can thus be divided into consecutive `delivery frames,' each roughly matching the chunk duration (on the order of seconds).
Within each delivery frame, the latest chunks requested by all active users are jointly delivered using multicast transmission, so assuming that $K$ users request cacheable files ``at the same time'' simply means they are active within the same frame. Also, since chunks are relatively small and the CC delivery covers multiple transmissions within each frame, the transmission intervals naturally align with the block-fading timescale.
\end{remarknum}
\color{black}
In order to define the symmetric rate, we need to know the length (in data units) of each transmission vector. Based on the delivery algorithm, each packet $W_{\CP}$ may need to be further divided into a number of equal-sized subpackets before constructing the transmission vectors. Let us use $\Theta$ to represent the final subpacketization level, encompassing the splitting factor in both the placement and delivery phases. As will be demonstrated, each transmission vector corresponds to a new set of subpackets sent in parallel. 
%As a result, the length of every transmission vector is the same as the subpacket size of $\nicefrac{1}{\Theta}$.
Using $R_{\CK}(s)\color{black}$ (file/second) to denote the max-min transmission rate of $\Bx_{\CK}(s)\color{black}$ ensuring successful decoding
%that enables successful decoding 
at every user $k \in \CK$ in interval $s$, the transmission time of $\Bx_{\CK}(s)\color{black}$ is $T_{\CK}(s) = \tfrac{1}{(\Theta R_{\CK}(s))}\color{black}$ (seconds). Let us denote the total delivery time (the sum of $T_{\CK}(s)\color{black}$ for all user subsets $\CK$ and interval indices $s$) with $T_{\mathrm{total}}$. Then, the symmetric rate is defined as $R_{sym} = \tfrac{K}{T_{\mathrm{total}}}$~(file/second), and
%can be defined as
%\begin{equation}
%\label{eq:symmetric_rate}
%\begin{array}{c}
%    R_{sym} = \frac{K}{\sum_n T_n}= \frac{K}{\sum_n {\frac{1}{\Theta R_n}}} = \frac{\Theta K}{\sum_n \frac{1}{R_n}} \; .
%\end{array} 
%\end{equation} \\
%where the number of transmit intervals amounts to $I=\binom{K}{\Omega}$. Note that equation~\eqref{eq:symmetric_rate} is valid for both symmetric $\eta  = 1$ and nonsymmetric $\eta > 1$ cases.
the goal is to design the delivery scheme to maximize $R_{sym}$.

Throughout this paper, 
%we consider only one single time interval and remove the $n$ index (the same process is repeated at every interval). 
depending on the considered delivery scheme, the transmission vector $\Bx_{\CK}(s)\color{black}$ may comprise unicast or multicast signals.
%for various subsets $\CT$ of the target user set $\CK$. 
For the `multicast' transmission, the individual data terms are first added (XOR'd) in the bit domain, and then, the modulated XOR signals are served to the users with multicast beamformers. However,
%before transmitting the modulated XOR signals to the served users using multicast beamforming, while 
for the `unicast' transmission, the user-specific modulated signals are first multiplied by corresponding unicast beamformers and then superimposed in the complex (signal) domain to form the transmission vector~\cite{salehi2022enhancing}. 
The exact composition of the transmission vector $\Bx_{\CK}(s)$ will be detailed later as we introduce each delivery algorithm.
% As a small example, assume $t=1$, and we intend to serve users~1 and~2 that request files $A$ and $B$, respectively. Each file is split into two subpackets, and $A_1$ and $B_2$ are available in the cache memories of users~1 and~2, respectively. Then, with multicasting, we include the term $(A_2 \oplus B_1)\Bw$ in the transmission vector $\Bx$, while with unicasting, $\Bx$ will include $A_2 \Bw^1 + B_1 \Bw^2$, where $\Bw$, $\Bw^1$, and $\Bw^2$ denote the transmit precoders (beamformers). It can be easily seen~\cite{salehi2022enhancing} that both users~1 and~2 can decode their requested data from $\Bx$ using either multicast or unicast transmissions. 

\section{Achievable DoF Analysis}
\label{section:DoF}

%\subsection{A Lower Bound on the DoF}
%\label{section:DoF_LB}

% Here we introduce our achievable DoF in a similar way to the GLOBECOM paper. The difference is that we put more limitations on $\beta$ to make sure the matrix rank conditions are met.

%Coded caching techniques have been shown to provide the DoF\footnote{In this paper, the term DoF is used equivalent to the total number of parallel spatial dimensions delivered in each transmission~\cite{salehi2020lowcomplexity,lampiris2018adding,salehi2021MIMO}.} of $t+1$ in SISO setups~\cite{maddah2014fundamental}. The achievable DoF is increased to $t+L$ in a MISO setup with the spatial multiplexing gain of $L$ at the transmitter~\cite{shariatpanahi2018physical}, and is shown to be optimal with uncoded cache placement and single-shot data delivery~\cite{lampiris2021resolving}. 
%For MIMO-CC setups, it has been shown that given an integer $\frac{L}{G}$, the total DoF can be increased to $Gt+L$, where $G$ is the spatial multiplexing gain at the receiver side~\cite{salehi2021MIMO}. 

In this section, we take a closer look at the {linear decodability} and achievable DoF\footnote{Here, the term DoF is used equivalent to the total number of parallel spatial dimensions delivered in each transmission~\cite{salehi2020lowcomplexity,lampiris2018adding,salehi2021MIMO}.} of MIMO-CC setups. Compared to the state-of-the-art analysis in~\cite{salehi2021MIMO}, our scheme exceeds its achievable DoF of $Gt+L$, and eliminates its restrictive integer divisibility constraint $\tfrac{L}{G}\in\mathbb{N}$. 
%\footnote{\color{red}Moreover, unlike~\cite{salehi2021MIMO} which studied the high-SNR regime only, in Section~\ref{section:DoF_MC} we extend our results to cover the finite-SNR region as well, by introducing a novel scheduling framework that allows fine-tuning the number of delivered codewords in each transmission to achieve the best symmetric rate at any given SNR value.\color{black}}. 
\color{black} 
%
% \color{red}In particular, we improve the flexibility in parameter design, and supporting the case of non-integer $\frac{L}{G}$, while exceeding the $Gt+L$ achievable DoF in~\cite{salehi2021MIMO}\color{black}. %while also supporting the case of non-integer $\frac{L}{G}$.
%
%Consider a MIMO-CC network with $K$ users %where $K \ge t+L$.
%with coded caching gain $t$ and spatial multiplexing gains of $L$ and $G$ at the transmitter and receivers, respectively. The number of users $K$ is not smaller than $t+L$.
%a sufficiently large number of users $K \ge t+L$, where the coded caching gain is $t$, and spatial multiplexing gains at the transmitter and receivers are $L$ and $G$, respectively. Assume 
%\revised{
Following the system model in Section~\ref{section:sys_model}, for every subset $\CK$ of users with size $\Omega$, we build $S$ transmission vectors $\Bx_{\CK}(s)\color{black}$, each delivering parts of the requested data to all users in $\CK$. Assume $\beta$ parallel streams are delivered to each user $k \in \CK$ in each transmission $\Bx_{\CK}(s)\color{black}$. The goal is to maximize the total number of streams per transmission (i.e., $\Omega \times \beta$) while assuring linear decodability by each target user. In this section, we consider a signal-level design, and we also assume
zero-forcing~(ZF) beamformers are employed at both the transmitter and receiver sides to null out the inter-user and inter-stream interference, respectively.\footnote{Both ZF beamformers and signal-level interference cancellation are assumed to demonstrate the achievability of the proposed DoF at high SNR. %Implementing signal-level schemes is more complex than bit-level schemes in practice~\cite{salehi2022enhancing}, and ZF beamformers perform poorly in the low-SNR regime~\cite{tolli2017multi,salehi2022multi}. 
In Section~\ref{section:DoF_MC}, we introduce a new class of MIMO-CC schemes with bit-level interference cancellation. 
Furthermore, for more practical communication at finite SNR, optimized multicast beamformer design is introduced in Appendix~\ref{section:Linear_BF}. 
%In this section, both ZF beamformers and signal-level interference cancellation are assumed to demonstrate the achievability of the proposed DoF at high SNR. Implementing signal-level schemes is more complex than bit-level schemes in practice~\cite{salehi2022enhancing}, and ZF beamformers perform poorly in the low-SNR regime~\cite{tolli2017multi,salehi2022multi}. For more practical communication at finite SNR, bit-level solutions and optimized beamformers are introduced in Section~\ref{section:DoF_MC}.
\color{black}
}% \textbf{ADD}.
In this regard, each packet \( W_{\mathcal{P},k} \) of the file \( W_k \) requested by a user \( k \) is further split into a number of smaller subpackets \( W_{\mathcal{P},k}^q \) (\( q \) is the subpacket index -- the number of subpackets is clarified shortly), and the signal-level transmission vector \(\Bx_{\CK}(s) \color{black}\) %for a particular time interval 
is modeled as 

\begin{equation}\label{sig_model_uc1}   
%\begin{array}{l} 
    \Bx_{\CK}(s) = \sum_{i \in \CK}  \sum_{W_{\CP,i}^q \in \CM_{i}(s)}   \Bw_{\CP,i}^q W_{\CP,i}^q,\color{black}
%\end{array}  
\end{equation}  
where \(\CM_i(s)\color{black} \) denotes the set of subpackets intended for user \( i \) in interval $s$ (all file fragments are considered as modulated signals for simplicity), and \( \Bw_{\CP,i}^q \) represents the corresponding transmit beamforming vector.
%the ZF transmit beamforming vector.
%
%Assume that in every transmission, we serve a total number of $t+1 \le \Omega\leq K$ users and %by exploiting zero-forcing transmit beamforming, each user wants to decode $\beta \le G$ streams simultaneously. \color{blue} %, which ensures that the interference caused by \( W_{\mathcal{P},k}^q \) is nulled out at every stream decoded by each user \( k' \in \CK \backslash (\CP \cup \{k\}) \).  
%} 
%such that the unwanted interference terms both at the transmitter and the receivers are removed by ZF beamforming. 
%The goal is to maximize the total number of parallel streams in each transmission (i.e., $\Omega \times \beta$) while assuring interference-free decoding by all users.
Before proceeding to the main results, let us review an intuitive example.

%Before considering the general case with arbitrary $\beta$, $\Omega$, and $t$, an intuitive example with small dimensions is provided below to highlight the benefits of cache-aided MIMO processing.

\begin{exmp}
\label{exmp:x2serv}
\normalfont For a setup with $K=3$, $L = 3$, $G = 2$, $t = 1$, and $\Omega = 3$, we show that in every transmission, $\beta = G = 2$ parallel data streams can be linearly decoded by each target user. %Given the smaller piece from each file, referred to as data fragments, e
% Each user stores one fragment of size $1/3$ from each file in its cache, denoted by the corresponding subscript.
In the placement phase, each file is split into $\binom{K}{t} = 3$ packets, and each user stores one packet of each file.
%Each user stores one fragment, which is $1/3$ of each file, in their cache. This is because $t=1$ implies that $M=N/3$, as indicated by the corresponding subscript.
For example, if library files are shown by $A, B, C,\cdots$, user~1 stores packets $A_{\{1\}}, B_{\{1\}}, C_{\{1\}}, \cdots$, where the size of each packet is $1/3$ of the original file.

At the beginning of the delivery phase, assume users~1-3 request files $A$-$C$, respectively (i.e., $W_{1}=A$, $W_{2}=B$, and $W_{3}=C$). In this particular example, we have only one subset $\CK$ with size $\Omega = 3$, $S = 1$, and we do not also need an additional level of subpacketization. So we can ignore $\CK$, $s$, and $q$ indices in subsequent notations\color{black}. 
The transmission vector $\Bx$ is designed as

\begin{equation*}
\begin{aligned}
%\begin{array}{l}
\Bx =  \Bw_{\{2\},1} A_{\{2\}} + \Bw_{\{1\},2} B_{\{1\}} +  \Bw_{\{3\},1} A_{\{3\}} \\[.5ex] \qquad+ \Bw_{\{1\},3} C_{\{1\}}   +  \Bw_{\{3\},2} B_{\{3\}} + \Bw_{\{2\},3} C_{\{2\}},  
%\Bx = \Bw_3 (A_2 + B_1) +  \Bw_2(A_3 + C_1)   +  \Bw_1 (B_3+ C_2),  
%\end{array}
\end{aligned}
\end{equation*}
where, for example, the beamformer vectors $\Bw_{\{2\},1}$ and $\Bw_{\{1\},2}$ are projected to the null space of user~3 such that no inter-stream interference is caused to user~3 by $A_{\{2\}}$ and $A_{\{1\}}$, respectively.
%$\Bw_{\CP,k}$ is designed to null the inter-stream interference over every stream decoded by user corresponding to $\CK\backslash(\CP\cup \{k\})$, and all file fragments are considered as modulated signals for simplicity

Let us now consider the decoding process 
%of the desired data terms requested 
by user~1, which receives
%\begin{equation*}
%\begin{array}{l}
    $\By_1 = \BH_1 \Bx + \Bz_1$.
%\end{array}
%\end{equation*}
Assuming equivalent channels $\BH_1 \Bw_{\{1\},2}$ and $\BH_1 \Bw_{\{1\},3}$ 
%$B_1$ and $C_1$ and the corresponding equivalent channels 
are estimated from the downlink precoded pilots, 
%the cache memory of user~1 and the downlink precoded pilots, respectively, 
the interference terms $\BH_1 \Bw_{\{1\},2} B_{\{1\}}$ and $ \BH_1 \Bw_{{\{1\}},3} C_{\{1\}}$ can be first reconstructed and removed from the received signal as $\tilde{\By}_1 = \By_1 - \BH_1 \Bw_{{\{1\}},2} B_{\{1\}} - \BH_1 \Bw_{{\{1\}},3} C_{\{1\}}$. The remaining received signal vector $\tilde{\By}_1$ is then multiplied by the receive beamforming vectors  $\BU_1 = $ $\big[\Bu_{1,1}, \Bu_{1,2}\big] \in \mathbb{C}^{2 \times 2}$:

\begin{align*}
y_{1,1} &= \Bu_{1,1}\herm \BH_1 \Bw_{\{2\},1} A_{\{2\}} 
+ \Bu_{1,1}\herm \BH_1 \Bw_{\{3\},1} A_{\{3\}} \\
& + \Bu_{1,1}\herm \BH_1 \Bw_{\{3\},2} B_{\{3\}} 
+ \Bu_{1,1}\herm \BH_1 \Bw_{\{2\},3} C_{\{2\}} + z_{1,1} \\[2ex]
y_{1,2} &= \Bu_{1,2}\herm \BH_1 \Bw_{\{2\},1} A_{\{2\}} 
+ \Bu_{1,2}\herm \BH_1 \Bw_{\{3\},1} A_{\{3\}} \\
& + \Bu_{1,2}\herm \BH_1 \Bw_{\{3\},2} B_{\{3\}} 
+ \Bu_{1,2}\herm \BH_1 \Bw_{\{2\},3} C_{\{2\}} + z_{1,2}
\end{align*}

% \begin{equation*}
% \begin{array}{l}
%     y_{1,1} =  \Bu_{1,1}\herm \BH_1 \Bw_{{\{2\}},1} A_{\{2\}}+ \Bu_{1,1}\herm \BH_1 \Bw_{{\{3\}},1} A_{\{3\}} \\[1.ex]\quad + \Bu_{1,1}\herm \BH_1 \Bw_{{\{3\}},2} B_{\{3\}}  
%       +\Bu_{1,1}\herm \BH_1 \Bw_{{\{2\}},3} C_{\{2\}} + z_{1,1} , \\[0.5ex]
%     y_{1,2} =  \Bu_{1,2}\herm \BH_1 \Bw_{{\{2\}},1} A_{\{2\}} +  \Bu_{1,2}\herm \BH_1 \Bw_{{\{3\}},1} A_{\{3\}} \\[1.ex]
%     \quad +  \Bu_{1,2}\herm \BH_1 \Bw_{{\{3\}},2} B_{\{3\}}  + \Bu_{1,2}\herm \BH_1 \Bw_{{\{2\}},3} C_{\{2\}}  + z_{1,2} ,
% \end{array}
% \end{equation*}
% \begin{equation}
% \begin{aligned}
%     y_{1,1} =&  \Bu_{1,1}\herm \BH_1 \Bw_3 A_2+ \Bu_{1,1}\herm \BH_1 \Bw_2 A_3 + \Bu_{1,1}\herm \BH_1 \Bw_1B_3  \\
%     &+  \Bu_{1,1}\herm \BH_1 \Bw_1 C_2 + z_{1,1} , \\
%     y_{1,2} =&  \Bu_{1,2}\herm \BH_1 \Bw_3 A_2 +  \Bu_{1,2}\herm \BH_1 \Bw_2 A_3 +  \Bu_{1,2}\herm \BH_1 \Bw_1 B_3 \\
%     &+ \Bu_{1,2}\herm \BH_1 \Bw_1 C_2  + z_{1,2} ,
% \end{aligned}
% \end{equation}
where $z_{1,i} = \Bu_{1,i}\herm \Bz_1, \ i =1,2$. For user~1 to decode $A_{\{2\}}$ from $y_{1,1}$ and $A_{\{3\}}$ from $y_{1,2}$, we enforce

\begin{align*}
\Bu_{1,i}\herm \BH_1 \Bw_{\{3\},2} &= 0, \quad 
\Bu_{1,i}\herm \BH_1 \Bw_{\{2\},3} = 0, \quad i \in [2] \\[0.9mm]
 \Rightarrow \Bw_{\{3\},2}, &\Bw_{\{2\},3} 
\in \mathrm{Null}\Big(\big[\BH_1\herm\Bu_{1,1}, \BH_1\herm\Bu_{1,2}\big]\herm\Big) \\[1ex]
\Bu_{1,1}\herm \BH_1 \Bw_{\{3\},1} &= 0 
\Rightarrow \Bu_{1,1} \in \mathrm{Null}\big(\BH_1 \Bw_{\{3\},1}\big) \\[1ex]
\Bu_{1,2}\herm \BH_1 \Bw_{\{2\},1} &= 0 
\Rightarrow \Bu_{1,2} \in \mathrm{Null}\big(\BH_1 \Bw_{\{2\},1}\big)
\end{align*}

% \begin{equation*}%\label{Dof_exmp}
%   \hspace{-0mm}  
%   \begin{array}{l}%\label{Dof_exmp}
%         \Bu_{1,i}\herm \BH_1 \Bw_{{\{3\}},2} = 0 ,\:\: \:\:\Bu_{1,i}\herm \BH_1 \Bw_{{\{2\}},3} = 0,\:\:\: i \in [2] \\[.09mm]
%         \qquad \Rightarrow \Bw_{{\{3\}},2},\Bw_{{\{2\}},3} \in \mathrm{Null}([\BH_1\herm\Bu_{1,1}, \BH_1\herm\Bu_{1,2}]\herm)\\[1ex]
%         \Bu_{1,1}\herm\BH_1\Bw_{{\{3\}},1} = 0 \Rightarrow \Bu_{1,1} \in \mathrm{Null}(\BH_1 \Bw_{{\{3\}},1}) ,\\[1ex]
%         \Bu_{1,2}\herm\BH_1\Bw_{{\{2\}},1} = 0 \Rightarrow \Bu_{1,2} \in \mathrm{Null}(\BH_1 \Bw_{{\{2\}},1}), 
%     \end{array}
% \end{equation*}
% \begin{equation}
%     \begin{aligned}~\label{Dof_exmp}
%         &\Bu_{1,i}\herm \BH_1 \Bw_1 = 0 , \ i =1,2 \Rightarrow \Bw_1 \in \mathrm{Null}([\BH_1\herm\Bu_{1,1}, \BH_1\herm\Bu_{1,2}]\herm)\\
%         &\Bu_{1,1}\herm\BH_1\Bw_2 = 0 \Rightarrow \Bu_{1,1} \in \mathrm{Null}(\BH_1 \Bw_2) ,\\
%         &\Bu_{1,2}\herm\BH_1\Bw_3 = 0 \Rightarrow \Bu_{1,2} \in \mathrm{Null}(\BH_1 \Bw_3),
%     \end{aligned}
% \end{equation}

where $\mathrm{Null}(\cdot)$ denotes the null space. These conditions are satisfied as the dimensions of 
%$[\Bu_{1,1}\herm \BH_1; \Bu_{1,2}\herm \BH_1]$ 
$[\BH_1\herm\Bu_{1,1}, \BH_1\herm\Bu_{1,2}]\herm$ and
%are $2 \times 3$ and the dimensions of 
$\BH_1 \Bw_{\CP,1}$, $\CP \in \{{\{2\}},{\{3\}}\}$ are $2 \times 3$ and $2 \times 1$, respectively (in fact, for this particular setup, \(\Bw_{{\{3\}},2}=\Bw_{{\{2\}},3}\), \(\Bw_{{\{2\}},1}=\Bw_{{\{1\}},2}\), and \(\Bw_{{\{3\}},1}=\Bw_{{\{1\}},3}\). However, this is not true for the general case).
Similar conditions hold for successful decoding by users~$2$ and~$3$, and so, each user can linearly decode $\beta = 2$ parallel streams, and the total DoF of 6 is achievable.
%Note that the MIMO-CC scheme of~\cite{salehi2021MIMO} required four transmit antennas to achieve the same DoF.
 \hfill $\qed$

%Since $\Bw_3$ and $\Bw_2$ are two linearly independent null spaces, each with dimension one, it is sufficient to design beamformers such that~\eqref{Dof_exmp} is satisfied.

%In this case, the DoF analysis is expressed s follows:
% \begin{IEEEeqnarray}{lll}
% \label{eq:examp1}
%\Bx  =  \sum_{i=1}^2 X_1^i \Bw_1^i + \sum_{i=1}^2 X_2^i \Bw_2^i  +  \sum_{i=1}^2 X_3^i \Bw_3^i
%\end{IEEEeqnarray}
%For UE1, we serve the codewords: $X_1^1 = W_2^1(1) = A_2^1$, $X_1^2 = W_3^1(1) = A_3^1$. The first stream $X_1^1 = A_2^1$ is nulled at the third user, and the second stream $X_1^2 = A_3^1$ is nulled at the second user, denoted by $\Bw_1^1  = \Bw_{3,1}$, $\Bw_1^2  = \Bw_{2,1}$, respectively. Given the Theorem~\ref{Th:DoF}, we have $\binom{3-1}{1} = 2$ linearly independent null spaces, each with dimension $3 - (3 - 1 -1)\times 2 = 1$, leading to $\beta$ streams per UE, i.e., $\beta \leq \mathrm{min} (2, 1 \times 2) $. Therefore, with $G = 2$, we need $L \geq 3$ to achieve $2$ streams per UE and attain a total DoF of $6$.
\end{exmp}
%The ensuing theorem formalizes the interdependence among parameters $\beta$, $\Omega$, and $t$. %,forming a bound for $\beta$.
% Given each packet \(W_{\mathcal{P},k}\) of the file $W_k$ requested by a user $k$ is further split into smaller subpackets \(W_{\mathcal{P},k}^q\), with $q$ as subpacket index, the signal-level transmission vector $\Bx$ to deliver the missing subpackets during an interval is modeled as: 
%     \begin{equation}
%     %\begin{aligned}\label{sig_model_uc}
%     \begin{array}{l}\label{sig_model_uc1}
%         \Bx = \sum_{k \in \CK}  \sum_{W_{\CP,k}^q \in \CM_{k}}   \Bw_{\CP,k}^q W_{\CP,k}^q,
%     \end{array}
%     \end{equation}
%     %Repeat the process until 
%     where each user in the target set $k\in \CK$, $|\CK| = \Omega$ receives $\beta$ streams (where $|\CM_k| = \beta$) and $\Bw_{\CP,k}^q$ is the ZF transmit beamforming vector nulling out the interference by $W_{\mathcal{P},k}^q$ to every stream decoded by each user $k' \in \CK \backslash (\CP \cup \{k\})$.
%\color{blue}

For the general case, given the transmission model in~\eqref{sig_model_uc1} and the received signal model in~\eqref{eq:RX_signal}, after the transmission of $\Bx_{\CK}(s)\color{black}$, a user $k \in \CK$ receives

\begin{equation}
%\begin{array}{l}
   \By_{k}(s) = \sum\limits_{i \in \CK}  \sum\limits_{W_{\CP,i}^q \in \CM_{i}(s)}  \BH_{k}(s)  \Bw_{\CP,i}^q W_{\CP,i}^q + \Bz_k(s)\color{black} .
%\end{array}
\end{equation}
In order to decode its intended $\beta$ subpackets in $\CM_{k}(s)\color{black}$, user $k$ applies a receive beamforming matrix $\BU_{k}(s)\color{black} \in \mathbb{C}^{G \times \beta}$ to $\By_{k}(s)\color{black}$. By definition, a subpacket $W_{\CP,k}^q \in \CM_{k}(s)\color{black}$ with packet index denoted by $\CP$ transmitted to user $k$ is already available in the cache memory of every user $i \in \CP$, and these users can reconstruct and remove the interference caused by transmissions of $W_{\CP,k}^q$. So, for interference-free decoding of $\Bx_{\CK}(s)\color{black}$, the ZF beamforming vector $\Bw_{\CP,k}^q$ should null out the interference caused by $W_{\CP,k}^q$ from every stream decoded by each user $i \in \CK \backslash (\CP \cup \{k\})$. 
%To maximize the caching gain, we assume $\CP \subset \CK \backslash \{k\}$ for all $W_{\CP,k}^q \in \CM_{k}$ and $k \in \CK$, i.e., the interference from each transmitted subpacket is removed by cache contents at exactly $t$ users and nulled out by ZF beamforming at the remaining $\Omega-t-1$ users. Given this system model, 
The following theorem establishes a necessary condition for the linear decodability of $\Bx_{\CK}(s)\color{black}$ in~\eqref{sig_model_uc1}.

\color{black}%are full-rank matrices by construction in~\eqref{eq:equi_intf_chan}, are full-rank matrices by construction in~\eqref{eq:equi_intf_chan},
\begin{thm}\label{Th:DoF}
%Consider a MIMO-CC system with $K$ users, CC gain $t$, $L$ antennas at the transmitter, and $G$ antennas at each receiver. Assume in each transmission, we send data to a subset $\CK$ of users with size $\Omega \le K$, where $\Omega \ge t+1$. For this setup, 
For the considered MIMO-CC setup, to ensure linear decodability at each user $k \in \CK$,
%in the target set $\CK$, where $|\CK| = \Omega$, 
the number of streams per user (i.e., $\beta$) must satisfy:

%is able to decode $\beta$ parallel streams interference-free.
%if $\beta$ satisfies
\begin{equation} \label{DoF_bndd}
%\begin{array}{l}
    \beta \le \min \left(G, \big(L-(\Omega-t-1)\beta\big)\binom{\Omega-1}{t}\right).
%\end{array}
\end{equation}
%the cache placement and delivery scheme can be designed such that each user in the target set $\CK$, $|\CK| = \Omega$ is able to decode $\beta$ parallel streams interference-free. 
\end{thm}
\begin{proof}

Let us define the equivalent interference channel of a subpacket $W_{\CP,k}^q$ included in $\Bx_{\CK}(s)$ as
\begin{equation}
\label{eq:equi_intf_chan}
%\begin{array}{l}
\bar{\BH}_{\CP,k}(s)   = \Big[\,\BH_{i}\herm(s)\BU_{i}(s)\,\Big]\herm, \quad \forall i \in \CK \backslash (\CP\cup \{k\})\color{black},
%\end{array}
\end{equation}
where $[\;\cdot\;]$ represents the horizontal concatenation of matrices inside the brackets. Note that the definition in~\eqref{eq:equi_intf_chan} is agnostic to the subpacket index $q$ and depends only on the user index $k$ and the packet index $\CP$.
%Following the system model
The ZF beamforming vector $\Bw_{\CP,k}^q$ must null out the inter-stream interference caused by $W_{\CP,k}^q$ to every stream decoded by each user $i \in \CK \backslash (\CP \cup \{k\})$ (the users in $\CP$ can remove the interference with their cache contents). This condition implies that

%Then, for each subpacket $W_{\CP,k}^q$ transmitted by a given $\Bx$, the interference nulling condition of its respective beamformer $\Bw_{\CP,k}^q$ implies that
\begin{equation}\label{eq:beamforming_in_nullspace}
    \Bw_{\CP,k}^q \in \mathrm{Null}\big(\bar{\BH}_{\CP,k}(s)\big).\color{black} %\: \forall \CP \in \SfP(k).
\end{equation}
%Note that for a given user index $k$ and packet index $\CP$, the definition in~\eqref{eq:equi_intf_chan} applies to each subpacket $W_{\CP,k}^q$ transmitted by $\Bx$, regardless of its subpacket index $q$.
%Each column of matrix $\BW_{\CP,k}$ corresponds to the vectors $\Bw_{\CP,k}^q$, all stacked together over $q\in \CN_{\CP,k}$. 
%Thus, $\Bw_{\CP,k}^q$ can be selected %randomly 
%from the null space of $\bar{\BH}_{\CP,k}$. %, which, according to~\eqref{eq:equi_intf_chan}, is determined by choosing the packet index $\CP$. 
%Now, to 
To guarantee that user~$k$ can decode all its $\beta$ parallel streams $W_{\CP,k}^q \in \CM_{k}(s)\color{black}$ in a linear fashion, it is necessary that $\beta \le G$ as the dimensions of the channel matrix $\BH_{k}(s)\color{black}$ are $G \times L$. However, linear decodability also necessitates that the transmit beamformers $\Bw_{\CP,k}^q$ are linearly independent. For subpackets with different packet indices \( \CP \), this condition could be met as the beamformers are chosen from non-coinciding null spaces corresponding to different (either non-overlapping or partially overlapping) user sets (c.f~\eqref{eq:beamforming_in_nullspace}). 
However, for successfully decoding of subpackets with the same packet index $\CP$, the number of such subpackets should be constrained by the dimensions of $\mathrm{Null}\big(\bar{\BH}_{\CP,k}(s)\big)\color{black}$.
As the total number of parallel streams per user is $\beta$, $|\CP| = t$, and $\CP \subseteq \CK \backslash \{k\}$, we have at least $\Big\lceil \tfrac{\beta}{\binom{\Omega-1}{t}}\Big\rceil$ subpackets in $\CM_{k}(s)\color{black}$ with a similar packet index $\CP$. 
On the other hand, from~\eqref{eq:beamforming_in_nullspace}, the beamforming vectors $\Bw_{\CP,k}^q$ are chosen from the null space of $\bar{\BH}_{\CP,k}(s)\color{black}$, which, from~\eqref{eq:equi_intf_chan}, is constructed by concatenating $\Omega-t-1$ matrices $\BH_{k}\herm(s)\BU_{k}(s)\color{black}$, each of size $L \times \beta$. As a result, $\bar{\BH}_{\CP,k}(s)\color{black} \in \mathbb{C}^{(\Omega-t-1)\beta \times L}$. Using 
$\mathrm{nullity(\cdot)}$ to denote the dimensions of the null space, we can use the rank-nullity theorem~\cite{meyer2023matrix} to write

\begin{equation}
 \label{eq:rank-null-theorem}
 \begin{aligned}
%&\mathrm{null}(\BH_{\mathcal{P},k} ) + \mathrm{rank}(\BH_{\mathcal{P},k} ) = L \Rightarrow \nonumber\\ 
\mathrm{nullity}\big(\bar{\BH}_{\CP,k}(s) \big)  &= L - \mathrm{rank}\big(\bar{\BH}_{\CP,k}(s) \big) \\[1ex]&= L - ({\Omega-t-1})\beta.
\end{aligned}
 \end{equation}
%which derives from the fact that $\BH_{\CP,k}$ is formed in~\eqref{eq:equi_intf_chan} by concatenation of $\Omega-t-1$ matrices $\BH_{k'}\herm\BU_{k'}$, each with dimensions $L \times \beta$. 
Thus, for successful decoding of subpackets with the same packet index, it is necessary that

\begin{equation}
\label{eq:ceil_ineq_dof}
%\begin{array}{l}
    \left\lceil  \frac{ \beta}{\binom{\Omega-1}{t}} \right\rceil \le L - (\Omega-t-1) \beta.
%\end{array}
\end{equation}
%Recalling that $S \triangleq \binom{\Omega-1}{t}$ and 
Since the right-hand-side of~\eqref{eq:ceil_ineq_dof} is an integer,~\eqref{eq:ceil_ineq_dof} can be equivalently written as
\begin{equation}
%\begin{array}{l}
    \beta \le \left(L-(\Omega-t-1)\beta \right) \binom{\Omega-1}{t},
%\end{array}
\end{equation}
which, together with the basic decoding criteria of $\beta \le G$, results in~\eqref{DoF_bndd}.
\end{proof}
\begin{thm}\label{Th:linear_achievablity}
   % For every pair $(\Omega,\beta)$  that satisfy Theorem~\ref{Th:DoF}, we can build a scheme with signal-level decoding that is linearly decodable. 
    For every pair \((\Omega,\beta)\) satisfying~\eqref{DoF_bndd} in Theorem~\ref{Th:DoF}, there exists a linearly decodable signal-level coded caching scheme with the DoF of $\Omega \times \beta$.
\end{thm}
\begin{proof}
In the following, a generalized linear scheme with the DoF of $\Omega \times\beta$ is provided.
%to ensure the linear decodability of $\beta$ streams for each user $k \in \CK$ at each transmission interval. 
The \textit{placement phase} is performed as detailed in Section~\ref{section:sys_model}: each file $W \in \CF$ is split into $\binom{K}{t}$ packets $W_{\CP}$ and each user $k \in [K]$ stores a packet $W_{\CP}$ if $k \in \CP$.

In the \textit{delivery phase}, each packet \(W_{\mathcal{P},k}\) of the file $W_k$ requested by a user $k$ is further split into \(\beta \binom{K-t-1}{\Omega-t-1}\) smaller subpackets \(W_{\mathcal{P},k}^q\), with $q\in \Big[\beta \binom{K-t-1}{\Omega-t-1} \Big]$.\footnote{Here, our goal is only to prove the ``achievability'' of the proposed DoF value. In this regard, the subpacketization level is chosen such that it satisfies the requirements in general while it is not necessarily at the minimal level for some specific parameter combinations.}
%For the target user set $\CK$, 
Then, for every subset \(\mathcal{K}\) of users with the cardinality of \(|\mathcal{K}| = \Omega\),  \(S \triangleq \binom{\Omega-1}{t}\) transmission vectors \(\mathbf{x}_{\CK}(s)\color{black}\), \(s \in [S]\) are constructed with Algorithm~\ref{alg:alg_1}. %As a quick explanation, i
In lines~\ref{algline_firstphase_start}-\ref{algline_firstphase_end} of the algorithm, for each user \(k \in \mathcal{K}\), we choose $\beta \times S$ fresh subpackets of its requested file $W_k$. Then, in lines~\ref{algline_secondphase_start}-\ref{algline_secondphase_end}, for each transmission interval \( s \in [S] \), we construct a transmission vector \(\Bx_{\CK}(s)\color{black}\) that delivers $\beta$ subpackets to each user in $\CK$. To show the correctness of the algorithm, we first investigate the linear decodability of $\Bx_{\CK}(s)\color{black}$ and then show that all the missing subpackets are delivered.
% \noindent\rule{0.48\textwidth}{0.6pt}\\[-.4ex]
% \textbf{Algorithm~1:} {Constructing transmission vectors \(\mathbf{x}(s)\)}\\[-4.1ex]

\begin{algorithm}[t]
\small
%\noindent\rule{0.48\textwidth}{0.6pt} % Horizontal line
\caption{Constructing transmission vectors \(\Bx_{\CK}(s)\)\color{black}} \label{alg:alg_1}

\begin{algorithmic}[1]
\ForAll{$k \in \CK$} \label{algline_firstphase_start}
    \State $\SfP_k \gets \{\CP \subseteq  \CK \backslash \{k\}, |\CP| = t\}$
    \ForAll{$\CP \in \SfP_k$}
        \State $\CL_{\CP,k} \gets \{W_{\CP,k}^q \mid W_{\CP,k}^q \textrm{ is not delivered} \}$
        \State $\CN_{\CP,k} \gets$ a subset of $\CL_{\CP,k}$ with size $\beta$
    \EndFor
\EndFor \label{algline_firstphase_end}
\ForAll{$s \in [S]$} \label{algline_secondphase_start}
    \ForAll{$k \in \CK$}
        \State $\CM_k(s) \gets \varnothing$
        \While{$|\CM_k(s)| < \beta$}
            \State $\CP^* \gets \arg\max_{\CP} |\CN_{\CP,k}|$ \label{algline_max_cardinality}
            \State $\hat{W}^{q}_{\CP^*,k} \gets$ a subpacket from $\CN_{\CP^*,k}$
            \State $\CM_k(s) \gets \CM_k(s) \cup \{\hat{W}^{q}_{\CP^*,k}\}$
            \State $\CN_{\CP^*,k} \gets \CN_{\CP^*,k} \backslash \{\hat{W}^{q}_{\CP^*,k}\}$
        \EndWhile
    \EndFor
    \State $\Bx_{\CK}(s) \gets \sum_{k \in \CK}  \sum_{W_{\CP,k}^q \in \CM_{k}(s)}   \Bw_{\CP,k}^q W_{\CP,k}^q\color{black}$ \label{algline_transvec}
\EndFor \label{algline_secondphase_end}
\end{algorithmic}
\end{algorithm}

\noindent\textbf{Linear decodability:}
For each user $k \in \CK$ and each interval $s \in [S]$, selecting the $\CN_{\CP,k}$ set with the largest cardinality in line~\ref{algline_max_cardinality} of the algorithm ensures that the number of subpackets in $\CM_k(s)\color{black}$ with a similar packet index is minimized. This is because, after selecting a particular $\CN_{\CP',k}$ set and moving one of its subpackets to $\CM_k(s)\color{black}$, $|\CN_{\CP',k}|$ is decremented by one, and so this set will not be selected again until $|\CN_{\CP,k}| \le |\CN_{\CP',k}|$ for all $\CP \in \SfP_k \backslash \{\CP'\}$. As a result, as $|\SfP_k| = S$ and also because $|\CM_k(s) |\color{black} = \beta$ before line~\ref{algline_transvec} of the algorithm, the number of subpackets of $W_k$ with a similar packet index transmitted by $\Bx_{\CK}(s)\color{black}$ is upper-bounded by \(\big\lceil \frac{\beta}{S} \big\rceil\), and linear decodability is guaranteed by Theorem~\ref{Th:DoF}.

\noindent\textbf{Missing packet delivery:}
Each user needs $\binom{K-1}{t}$ packets of its requested file $W_k$ (the rest are cached in its memory), and during the delivery phase, each of these requested packets is split into \( \beta\binom{K-t-1}{\Omega-t-1} \) subpackets. Thus, the total number of missing subpackets per user is given by 
\begin{equation*}
   \binom{K-1}{t}\binom{K-t-1}{\Omega-t-1}\beta.
\end{equation*}
%\( \binom{K-1}{t}\binom{K-t-1}{\Omega-t-1}\beta \).  
On the other hand, the subpackets of a packet \( W_{\CP,k} \) are included in a transmission vector \( \Bx_{\CK}(s)\color{black} \) only if user \( k \) is in the set of target users \( \CK \) and \( \CP \subset \CK \). Clearly, since \( |\CP| = t \) and \( |\CK| = \Omega \), the number of sets \( \CK \) satisfying these constraints is 
\begin{equation*}
    \binom{K-1}{\Omega-1}\binom{\Omega-1}{t}.
\end{equation*}
%\( \binom{K-1}{\Omega-1}\binom{\Omega-1}{t} \).  
Furthermore, for every such set \( \CK \), \( \beta \) subpackets of \( W_{\CP,k} \) are delivered using the respective transmission vectors \( \Bx_{\CK}(s)\color{black} \). Consequently, the total number of missing subpackets per intended user, and across all users for each requested packet, exactly matches the number of delivered subpackets of that packet.\end{proof}
\begin{remarknum}\label{subpkt_order}
%We acknowledge that the proposed scheme involves a high subpacketization level. This growth is not unique to our work, but rather a well-known limitation of coded caching–type schemes (e.g., MAN~\cite{maddah2014fundamental}, MISO-CC~\cite{shariatpanahi2016multi,tolli2017multi}). 
% The proposed scheme entails a relatively high subpacketization level, which is a common characteristic of coded caching–type designs (e.g., MAN~\cite{maddah2014fundamental}, MISO-CC~\cite{shariatpanahi2016multi,tolli2017multi}). 
% Our construction deliberately adopts this level of subpacketization in order to remain applicable under \emph{arbitrary} network and design parameters $(K,t,\Omega,\beta)$, without imposing restrictive constraints. 
% This generality allows us to fully characterize and highlight the potential single-shot DoF gains \emph{and the resulting finite-SNR performance} of MIMO-CC, which had not been explored before. 
% For small to moderate network sizes, the subpacketization remains feasible, and for larger networks, well-established methods such as grouping~\cite{lampiris2018adding}, cyclic~\cite{salehi2020lowcomplexity}, or PDA designs~\cite{wan2022multiple} can be leveraged to alleviate the burden, and represent a natural continuation of this work. 
%%%%%%%%%%%%%%%%%%%
To maintain generality and applicability across arbitrary network parameters $K$, $t$, $L$, and $G$, the schemes developed in this paper follow a combinatorial structure similar to that of the MISO-CC scheme in~\cite{tolli2017multi}. While this has enabled us to fully characterize the achievable single-shot DoF gains and finite-SNR performance of MIMO-CC systems, the resulting schemes inevitably require a large subpacketization level, hindering their applicability to networks with a large number of users.
%While the subpacketization level remains feasible for small and moderate network sizes, it becomes impractical for larger systems. 
Nevertheless, the same extension principles used here to generalize the combinatorial structure in~\cite{tolli2017multi} to MIMO setups can be readily applied to other low-subpacketization MISO-CC designs as well~\cite{lampiris2018adding,salehi2020lowcomplexity,wan2022multiple}. This approach has already been taken in~\cite{salehi2021MIMO}, where the cyclic scheme in~\cite{salehi2020lowcomplexity} was extended to MIMO setups with linearly growing subpacketization. 
However, such low-subpacketization designs inevitably introduce applicability constraints of the underlying MISO-CC solution; for instance, the design in~\cite{salehi2021MIMO} is valid only when $\big\lfloor \tfrac{L}{G} \big\rfloor \ge t$, a condition inherited from the cyclic scheme~\cite{salehi2020lowcomplexity}.
\end{remarknum}
\color{black}
\begin{corollary} \label{remark1-Dof}
\normalfont The DoF of $\beta \times \Omega$ is necessarily achievable in every given MIMO setup, as long as $\beta$ and $\Omega$ satisfy~\eqref{DoF_bndd}. As a result, the maximum achievable DoF for the proposed MIMO-CC transmission design is given by solving
\color{black}
% For the considered MIMO setup, if $\beta$ and $\Omega$ satisfy~\eqref{DoF_bndd}, the DoF of $\Omega \times \beta$ is necessarily achievable and can be maximized by solving
%as long as $\beta$ and $\Omega$ satisfy~\eqref{DoF_bndd}. 
%As a result, the maximum achievable DoF for the proposed MIMO-CC transmission design is given by solving
\begin{equation}\label{eq:total_DoF}
\begin{aligned} 
 & {\mathrm{DoF}_{}}(\beta^* , \Omega^*) =  \max_{\beta , \Omega }~ \Omega \times \beta, \\[1ex]
   &\mathrm{s.t.}\:\:{\beta \le \mathrm{min}\bigg({G}, \frac{L \binom{\Omega-1}{t}}{ 1 + (\Omega - t-1)\binom{\Omega-1}{t}} \bigg),}\\[1ex]
    & t+1 \leq \Omega \leq t+L, \quad \Omega \in \mathbb{Z}_+, \quad \beta \in \mathbb{Z}_+.
\end{aligned}
\end{equation}

To find the optimized parameters $\beta^*$ and $\Omega^*$, we first impose an explicit constraint that the largest feasible 
$\beta$ is chosen for each $\Omega$ while maximizing the DoF in~\eqref{eq:total_DoF}:

\begin{equation}\label{eq:beta_omega_relation}
%\begin{array}{l}
\beta= \Bigg\lfloor \min\bigg(G, \frac{L \binom{\Omega-1}{t}}{1 + (\Omega - t-1)\binom{\Omega-1}{t}}\bigg)\Bigg\rfloor,
%\end{array}
\end{equation}
and then simply determine the maximum achievable DoF by searching over \(\Omega = \!t+\!1\) to \(\!t+\!L\) as

\begin{equation}
%\begin{aligned}
\begin{array}{l}
  \Omega^* =  \arg\max\limits_{\substack{\\[1ex] t+1 \le \Omega \le t+L\\[1ex] \Omega \in \mathbb{Z}_+ }} \Omega  \Bigg\lfloor \min\Big(G, \frac{ L \binom{\Omega-1}{t}}{1 + (\Omega - t-1)\binom{\Omega-1}{t}}\Big)\Bigg\rfloor.\\[2mm]
\end{array}
%\end{aligned}
\end{equation}  
Plugging the resulting \(\Omega^*\) into~\eqref{eq:beta_omega_relation} yields \(\beta^*\), and the optimized DoF is given as $\mathrm{DoF}_{} = \Omega^* \times \beta^*$.

 \end{corollary}
%(see Algorithm~\ref{alg:DoF_search}). \color{black} %Through algebraic manipulation, one can determine the maximum DoF, $\mathrm{DoF}_{\max} (\beta^*) = \max_{\beta \leq G} \big(\beta \big\lceil \frac{L}{\beta} \big\rceil + \beta t\big)$, associated with the solution pair $(\beta^*, \Omega^*) = \big(\beta, \big\lceil \frac{L}{\beta} \big\rceil + t\big)$ with $\beta \leq G$.

\begin{lem}\label{lm:beta_G_NBachievable}

When fully utilizing the receiver-side multiplexing gain (i.e., $\beta = G$), linear decoding requires
\begin{equation}\label{eq:Omega_bnd_beta_eq_G}
       %\begin{array}{l}
       \Omega \leq \Big\lfloor \dfrac{L}{G} \Big\rfloor + t + 1 .
       %\Omega \leq \lfloor L/G \rfloor + t + 1 .
       %\end{array}
\end{equation}
In other words, choosing $\beta = G$ limits the range of possible values for $\Omega$, and hence, the maximum achievable DoF may be less than the jointly optimized DoF in~\eqref{eq:total_DoF}.

%While \(\beta = G\) seems a natural candidate for maximizing DoF $\Omega \times \beta$ in Theorem~\ref{Th:linear_achievablity}, it does not always yield optimal result, as the feasible \(\Omega\) choices constrained by~\eqref{DoF_bndd} are limited to
%\begin{equation}\label{eq:Omega_bnd_beta_eq_G}
       %\begin{array}{l}
%       \Omega \leq \lfloor L/G \rfloor + t + 1 .
       %\end{array}
%\end{equation}
\end{lem} 

%\begin{proof}
\noindent \textit{Proof.}
From Theorem~\ref{Th:DoF},
%if \(\beta=G\), 
linear decodability requires:
    \begin{equation} \label{DoF_bnd_betaG}
    %\begin{array}{l}
    \displaystyle (\Omega - t - 1) + \frac{1}{\binom{\Omega - 1}{t}} \leq \frac{L}{\beta}.
    %\end{array}
    \end{equation}
To show that the feasible solution space for \(\Omega\) is limited by~\eqref{eq:Omega_bnd_beta_eq_G}, assume, for contradiction, that a larger value  
\begin{equation}
%\begin{array}{l}
\Omega = \Big\lfloor \dfrac{L}{G} \Big\rfloor + t + 2
%\end{array}
\end{equation}
is also valid for \eqref{DoF_bnd_betaG}. Substituting this and $\beta=G$ into \eqref{DoF_bnd_betaG} gives:
\begin{equation}
       %\begin{array}{l}
\Big\lfloor \dfrac{L}{G} \Big\rfloor + 1 + \dfrac{\displaystyle 1}{\displaystyle\binom{\Big\lfloor \dfrac{L}{G} \Big\rfloor + t + 1}{t}} \leq \frac{L}{G},
%\end{array}
\end{equation}
which, since \( \binom{t+n}{t} \ge 1, \forall n \in \mathbb{N} \), necessitates that:
\begin{equation} \label{ineq_floor}
       %\begin{array}{l}
\Big\lfloor \dfrac{L}{G} \Big\rfloor + 1 < \frac{L}{G}.
%\end{array}
\end{equation}
However, this cannot be valid due to the properties of the floor function, and hence, the largest possible value for $\Omega$ is $\Big\lfloor \tfrac{L}{G}\Big \rfloor + t + 1$.%, 

\begin{figure}[t]
    \centering
    \resizebox{\columnwidth}{!}{%
    %\resizebox{.73\columnwidth}{!}{%
    %\resizebox{.68\columnwidth}{!}{%

    %\resizebox{0.9\columnwidth}{!}{%
   
    \begin{tikzpicture}
    % \begin{axis}
    % [
    % width = 0.8\columnwidth,
    % height = 0.6\columnwidth,
    % % put axis lines at left and bottom
    % axis lines = left,
    % % control axis labels
    % xlabel = \smaller {L  [Number of Antennas]},
    % ylabel = \smaller {DoF [Number of Parallel Streams]},
    % ylabel near ticks,
    % % control legend position
    % legend pos = north west,
    % legend style = {cells={align=center}},
    % % control size of tick marks (10,20,30,etc)
    % ticklabel style={font=\smaller},
    % % control major grids
    % %grid=both,
    % %major grid style={line width=.2pt,draw=gray!30},
    % % control minor grids
    % %grid style={line width=.1pt, draw=gray!10},
    % %minor tick num=5,
    % ybar interval=0.8,
    % ymin = 0,
    % ]
    \begin{axis}[
        axis lines = left,
        xlabel = \smaller {Number of users served per Tx. interval $\Omega$},
        ylabel = \smaller {Achievable DoF},
        ylabel near ticks,
        xlabel near ticks,
        legend pos = north west,
        legend columns=5,
        %legend rows=2,
        legend style = {
            nodes={scale=1.0, transform shape},
            cells={align=center},
            % Adjust the dimensions of the legend here
            at={(0.58,0.95)},
            anchor=north west,
            font=\tiny,
        },
        ybar interval=0.99,
        ymin = 0,
        ymax = 26,
        xmin = 2,
        ticklabel style={font=\smaller},
        grid=both,
    major grid style={line width=.2pt,draw=gray!30},
    ]

    \addplot
    [black,fill = gray!70]
    table
    [x=Omega,y=dofG4]
    {Figs/DoF_prop_behav.tex};
    \addlegendentry{\small $G=4$}

    \addplot 
    [black,fill = blue!30]
    table
    [x=Omega,y=dofG8]
    {Figs/DoF_prop_behav.tex};
    \addlegendentry{\small $G=8$}
    
   %  \addplot 
   %  [black,fill = brown!30]
   %  table
   %  [x=TxAntenna,y=dofG6]
   %  {Figs/DoF_prop.tex};
   %  \addlegendentry{\tiny $G=6$}
    
   %  \addplot 
   %  [black,fill = red!40]
   %  table
   %  [x=TxAntenna,y=dofG8]
   %  {Figs/DoF_prop.tex};
   %  \addlegendentry{\tiny $G=8$}

   %  \addplot 
   %  [black,fill = yellow!50]
   %  table
   %  [x=TxAntenna,y=dofG10]
   %  {Figs/DoF_prop.tex};
   %  \addlegendentry{\tiny $G=10$}
    
   %  \addplot 
   %  [black,fill = cyan!60]
   %  table
   %  [x=TxAntenna,y=dofG12]
   %  {Figs/DoF_prop.tex};
   %  \addlegendentry{\tiny $G=12$}

   %  \addplot 
   %  [black,fill = gray!70]
   %  table
   %  [x=TxAntenna,y=dofG14]
   %  {Figs/DoF_prop.tex};
   %  \addlegendentry{\tiny $G=14$}

   % \addplot 
   %  [black,fill = green!80]
   %  table
   %  [x=TxAntenna,y=dofG16]
   %  {Figs/DoF_prop.tex};
   %  \addlegendentry{\tiny $G=16$}

   %  \addplot 
   %  [black,fill = red!90]
   %  table
   %  [x=TxAntenna,y=dofG18]
   %  {Figs/DoF_prop.tex};
   %  \addlegendentry{\tiny $G=18$}

   %  \addplot 
   %  [black,fill = blue!100]
   %  table
   %  [x=TxAntenna,y=dofG20]
   %  {Figs/DoF_prop.tex};
   %  \addlegendentry{\tiny $G=20$}
    
    \end{axis}
    \end{tikzpicture}
    }
    \caption{Behavior of the solution to \eqref{eq:total_DoF} in Corollary~\ref{remark1-Dof} for \(L \)=16, \(t\)=1. }
    \label{fig:plot_behav}
\end{figure}

\begin{exmp}\label{exmp:dof_behavior} 
Assume $L = 16$ and $t = 1$. In Figure~\ref{fig:plot_behav}, we plot the maximum achievable DoF under linear decodability for different values of $\Omega$, considering two scenarios: $G \in \{4, 8\}$.
For $G = 8$, the maximum DoF of $24$ is achieved with $\Omega^* = 3$ and $\beta^* = G = 8$. This corresponds to one of the two optimal solutions.
In contrast, for $G = 4$, achieving the maximum DoF of $21$ requires selecting $\Omega^* = 7$ and $\beta^* = 3$. Imposing $\beta = G = 4$ in this case would limit the feasible $\Omega$ to at most $6$, which in turn constrains the achievable DoF to $20$ (Lemma~\ref{lm:beta_G_NBachievable}).
%Assume $L = 16$ and $t=1$. In Fig.~\ref{fig:plot_behav}, we have plotted the maximum achievable DoF with linear decodability for different selections of $\Omega$ when $G \in \{4,8\}$. As can be seen, while for $G=8$ the optimized DoF is achieved with $\beta^* = G = 8$, for $G=4$ one needs to choose $\beta^* = 3$ and $\Omega^* = 7$ to achieve the maximum DoF. In fact, for the $G=4$ case, the optimized value of $\Omega^* = 7$ already falls beyond the limit in Lemma~\ref{lm:beta_G_NBachievable} if $\beta = G = 4$.
% Consider a setup with \( (L, t) = (16,1) \) where $G\in\{4,8\}$. The behavior of the solution in corollary \ref{remark1-Dof} and the imposed limitation of the case $\beta=G$ discussed in lemma~\ref{lm:beta_G_NBachievable} are illustrated in Fig.~\ref{fig:plot_behav}. 
 %If we set \(\beta=G=4\), then the number of served users are limited to $\Omega\leq \lfloor 5/4 \rfloor + 1 +1=3$, and the maximum achievable linear DoF is \(6\). However, by solving~\eqref{eq:total_DoF}, we can find out that a larger DoF of \(8\) is also possible for the same setup by selecting \( (\Omega^*=4,\beta^*)=(4,2)\).
\end{exmp}

\begin{remarknum}\label{underloading}
There may exist multiple pairs of \((\Omega,\beta)\) that result in the same DoF while satisfying the linear decodability constraints of Theorem~\ref{Th:DoF}. In fact, as can be seen for the $G=8$ case in Example~\ref{exmp:dof_behavior}, even the optimized $\textrm{DoF}_{}$ from solving~\eqref{eq:total_DoF} may be achievable by multiple choices of $(\Omega^*,\beta^*)$.
%the desired DoF in Theorem~\ref{Th:linear_achievablity} (or the optimized DoF in~\eqref{eq:total_DoF}).
 %as a special case
In such cases, the transmitter side load---defined as the total number of parallel streams, i.e., $\Omega \times \beta$---is identical across all candidate solutions. Among these, in finite SNR, we prioritize the solutions that spread the streams across users as much as possible, thereby maximizing the spatial degrees of freedom at the receivers.
%maximizes the available null-space dimensions, 
The resulting increase in the null-space dimensions available at each receiver expands the feasible solution space and enables more flexibility for the system to jointly optimize Tx and Rx beamformers---i.e., to maximize the desired terms while suppressing inter-user and inter-stream interference. As a result, the SINR per user improves, directly enhancing the symmetric rate. In the case of the network in Example~\ref{exmp:dof_behavior}, this means selecting $(\Omega^*,\beta^*) = (4,6)$ over $(\Omega^*,\beta^*) = (3,8)$, as in the former case, the receivers are not fully loaded (i.e., $\beta^* < G$), and hence, there is more freedom in designing receive beamformer to enhance the symmetric rate.
   % There might exist multiple $\Omega^*$ and $\beta^*$ values resulting in the optimized DoF in~\eqref{eq:total_DoF}. In such a case, we select a solution with the smallest $\beta^*$ value to enhance the symmetric rate by exploiting the receiver-side beamforming gain and transmitter-side beamforming gain.
%     There might exist multiple pairs of \(\Omega^*\) and \(\beta^*\) that achieve the optimal DoF in~\eqref{eq:total_DoF}. In such cases, we select the solution with the smallest \(\beta^*\) to maximize the symmetric rate by leveraging the receiver-side beamforming gain. % and transmitter-side beamforming gains.  
% The receiver-side beamforming gain arises from a larger null space, \(\mathrm{nullity}(\BU_k) = G - \beta\), in the linear receiver \(\BU_k \in \mathbb{C}^{G \times \beta}\). This larger null space provides more dimensions to suppress interference, particularly at finite SNR, expanding the feasible solution space and enabling more flexible transmitter and receiver beamforming optimization. Consequently, higher power can be allocated per stream when serving fewer streams (\(\beta\)) per user, leading to a significant enhancement in the symmetric rate.  
%     For example, consider a setup with $(L,G,t) = (16,8,1)$. For this setup, the optimized DoF of \( \mathrm{DoF}_{\max} = 24 \) can be achieved at $(\Omega^*,\beta^*) \in \{(3,8), (4,6)\}$. However, $(\Omega^*,\beta^*) = (4,6)$ is preferred at finite-SNR, as the users are not fully loaded ($\beta < G$) and hence there is more freedom in designing receive beamformers to enhance the symmetric rate.
\end{remarknum}

\begin{lem}\label{lm:wsa_Prop_achievable_DOF_gaps}
The achievable DoF of $G(t+\big\lfloor \tfrac{L}{G} \big\rfloor)$ in~\cite{salehi2021MIMO} is always less than or equal to the achievable DoF in Corollary~\ref{remark1-Dof}. Nevertheless, the DoF gap between the two schemes is %upper bounded by 
at most \( 2(G-1) \).  
%The achievable DoF value in~\cite{salehi2022multi} is smaller than equal to the achievable DoF value in corollary~\ref{remark1-Dof}. However, its distance with the DoF value in corollary~\ref{remark1-Dof} is at most $2G$.   
\end{lem}

\begin{proof}
    Let $L \triangleq nG + r$, {with } $n \geq 1$ { and } $0 \leq r < G$, 
    %  \begin{equation*}\begin{array}{l}
    % L = nG + r, \quad \text{where } n \geq 1 \text{ and } 0 \leq r < G,
    % \end{array}\end{equation*}
    and $\Omega \triangleq t + b + 1$, {with } $0 \leq b < L$.
   %  \begin{equation*}\begin{array}{l}
   %  \Omega = t + b + 1, \quad \text{where } 0 \leq b < L.
   % \end{array}\end{equation*}
    From~\eqref{DoF_bnd_betaG} we have:
    \begin{align}
\mathrm{DoF} &= (t + b + 1)\beta \label{eq:total_DoF12} \\[.5ex]
\text{s.t.} \quad 
\text{i)} \: &\beta \le G, \: \: \text{ii)} \:\beta \le \big(L - b\beta\big)\binom{t + b}{t} \nonumber
\end{align}
%     \begin{equation}\label{eq:total_DoF12}
% \begin{array}{l}
%   {\mathrm{DoF}_{}} = (t+b+1) \beta, \\[1ex]
%     \mathrm{s.t.}\:\: \textrm{i)}\: \beta \le {G},\: \qquad  \textrm{ii)}\:  \beta \leq \big(L-b\beta\big)\binom{t+b}{t}.
% \end{array}
% \end{equation}
%Meanwhile, the achievable DoF from~\cite{salehi2022multi} is given by:
%    \begin{equation} \label{DoF_bnd_betaG1}
%    \begin{array}{l}{\mathrm{DoF}_{\text{\cite{salehi2022multi}}}} = G(t+\lfloor \nicefrac{L}{G}\rfloor). % =Gt+nG.
%    \end{array}\end{equation}
Now, we can examine the maximum DoF gap between \(\mathrm{DoF}_{}\) and \(\mathrm{DoF}_{\text{\cite{salehi2021MIMO}}} = G\big(t+\big\lfloor \tfrac{L}{G}\big\rfloor\big)\) as follows:

    \begin{equation*}\label{eq:total_DoF2}
 \begin{array}{l}
    %\begin{aligned}
  {\mathrm{DoF}_{}}-{\mathrm{DoF}_{\!\!\text{\cite{salehi2021MIMO}}}}  = (t+b+1) \beta-Gt-\Big\lfloor \dfrac{L}{G}\Big\rfloor G, \\[.1ex]
  \hspace{23mm}\overset{(a)}{\leq}  %\beta (t+1)+L-1- Gt-\lfloor \nicefrac{L}{G}\rfloor G\\[.1ex]
 % \hspace{25mm}\overset{(2)}{=}
  \beta (t+1)-Gt+L - \Big\lfloor \dfrac{L}{G}\Big\rfloor G-1 \\[.1ex] 
   \hspace{23mm}\overset{(b)}{\leq} \beta (t+1)-Gt+G-2 \\[.1ex]
   \hspace{23mm}\overset{(c)}{\leq} G (t+1)-Gt+G- 2 = 2G-2.%\\
   % \end{aligned}
 \end{array}
\end{equation*}
Here, \((a)\) follows from the inequality \(b\beta \leq L-1\), obtained from the necessary condition for linear decodability in Theorem~\ref{Th:DoF}, %\((2)\) follows from \(r = L - \big\lfloor \nicefrac{L}{G}\big\rfloor G\), 
\((b)\) follows from \(L - \big\lfloor \tfrac{L}{G}\big\rfloor G = r \leq G-1\), and \((c)\) follows from the necessary condition \(\beta \leq G\) for linear decodability in Theorem~\ref{Th:DoF}.
\end{proof}\color{black}
\begin{exmp}
\label{exmp:x0serv}
\normalfont 
The delivery algorithm in the proof of Theorem~\ref{Th:linear_achievablity} is reviewed in a particular network setup %for the network in Example~\ref{exmp:x2serv}, 
with $(L,G,t)=(6,4,1)$. Assume $(\Omega,\beta) = (3,4)$, ensuring linear decodability according to Theorem~\ref{Th:DoF}.
%For this setup, assume an arbitrary pair ($\Omega$, $\beta$) such as ($3$, $4$), satisfying linear decodability in Theorem~\eqref{DoF_bndd}, where $S=\binom{\Omega-1}{t}=2$ transmissions are required for every choice of the target user set $\CK$ with $|\CK| = 3$.
%(see Fig.~\ref{fig:Sec_3_Tx_vec} ).
With this selection, we need $S=\binom{\Omega-1}{t}=2$ transmissions per each selection of the target user set $\CK$ with $|\CK| = 3$. Let us consider the first transmission (i.e., $s=1$) for the user set $\CK = \{1,2,3\}$, and
%Let us focus on the first transmission ($s=1$) for the user set $\CK = \{1,2,3\}$, and
%, each packet is split into $4 \times 8 = 32$
%$\mathrm{gcd}(4,2) \times 8 = 16$
%subpackets and we need $S=2$ transmissions for users in $\CK$.
%To validate the proposed Theorem~\ref{Th:DoF}, let's reconsider the previous example~\ref{exmp:x2serv}. %Here, we set $K = 10$, $G = 2$ $|\CP|$ = $t$ = 1, $|\CK|$ = $\Omega$ = 3, $\CK  = \{1, 2, 3\}$, and $S = \binom{3-1}{1} = 2$. 
%Here, we set $K = 10$, $G = 2$, $|\CP| = t = 1$, $|\CK| = \Omega = 3$, $\CK = \{1, 2, 3\}$, and $S = \binom{3-1}{1} = 2$. Each user request $\beta = 2$ subpackets from $W_\CP^q(k)$ in each interval $s \in [2]$, transmitted via $\Bw_{\CP,k}$ beamformers. 
%Let us 
use $A$, $B$, $C$ to denote the files requested by users~1-3, respectively.
According to Algorithm~\ref{alg:alg_1}, the first step is to select $\beta S = 8$ subpackets for each user $k \in \CK$. Clearly, there is only one choice for the supersets of packet indices:
%given as
%Following the Theorem, the indices $\CP$, belong to the subfile indices
$\SfP_1 = \{{\{2\}}, {\{3\}}\}$, $\SfP_2 = \{{\{1\}}, {\{3\}}\}$, and $\SfP_3 = \{{\{1\}}, {\{2\}}\}$.
%(braces for packet indices are removed for notational simplicity).
However, depending on the number of remaining undelivered subpackets in $\CL_{\CP,k}$ per each packet index $\CP$, we may have multiple choices for the sets of subpackets $\CN_{\CP,k}$ (as $\CN_{\CP,k} \subseteq \CL_{\CP,k}$ and $|\CN_{\CP,k}| = \beta= 4$). Without loss of generality, let us assume $\CN_{\CP,k} = \{W_{\CP,k}^1,\allowbreak W_{\CP,k}^2,\allowbreak W_{\CP,k}^3,\allowbreak W_{\CP,k}^4\}$ for all $k \in \CK$ and $\CP \in \SfP_k$.

% \begin{figure*}[t]For each $\CP' \in \SfP_k$, select a subset of $\beta$ subpackets of
%     \centering
%     \includegraphics[height=4cm]{Figs/Sec_3_Decoding4.eps} 
%     \vspace{-3mm}
%     \caption{MIMO CC Unicast, streams' Decoding of first user.} 
%     \label{fig:Sec_3_Decoding}
% \end{figure*}
The next step is to select $\beta = 4$ subpackets for each user~$k$. Without loss of generality, assume that by following Algorithm~\ref{alg:alg_1}, we select $\CM_1(1)\color{black}= \{A_{\{2\}}^1$, $A_{\{2\}}^2$, $A_{\{3\}}^1$, $A_{\{3\}}^2\}$, $\CM_2(1)\color{black}= \{B_{\{1\}}^1$, $B_{\{1\}}^2$, $B_{\{3\}}^1$, $B_{\{3\}}^2\}$, and $\CM_3(1)\color{black}= \{ C_{\{1\}}^1$,  $C_{\{1\}}^2$, $C_{\{2\}}^1$, $C_{\{2\}}^2\}$, resulting in the transmission vector: %and the rest of the subpackets are left for the second transmission. 
%Then, the transmission vector is given as
\begin{align*}
\Bx_{\CK}(1)\color{black}&= \Bw_{\{2\},1}^1 A_{\{2\}}^1 + \Bw_{\{2\},1}^2 A_{\{2\}}^2 
+ \Bw_{\{3\},1}^1 A_{\{3\}}^1 \\
&\quad + \Bw_{\{3\},1}^2 A_{\{3\}}^2 + \cdots 
+ \Bw_{\{2\},3}^2 C_{\{2\}}^2
\end{align*}

where, for example, the transmit beamforming vector $\Bw_{{\{2\}},1}^{1}$ is designed
to null out the interference caused by $A_{\{2\}}^1$ to the reception of data streams requested by user~3 (i.e., the subpackets in $\CM_3(1)\color{black}$), which, 
%($C_{\{1\}}^1$, $C_{\{1\}}^2$,  $C_{\{2\}}^1$, and $C_{\{2\}}^2$). 
Using~\eqref{eq:equi_intf_chan} and~\eqref{eq:beamforming_in_nullspace}, translates to
%\begin{equation}
    $\Bw_{{\{2\}},1}^{1} \in \mathrm{Null}\big([\BH_3(1) \BU_3(1)]\herm$\big).\color{black}
%\end{equation}
%such that the interference caused by $A_2^1$ to the reception of user~3 data streams ($C_1^1$, $C_1^2$,  $C_2^1$ and $C_2^2$) is completely removed at the transmitter. 
% \begin{figure}[t]
%     %\begin{subfigure}{0.3\columnwidth}
%         \hspace{-5.mm}\centering
%         \includegraphics[height = 3.cm]{Figs/Sec_3_Decoding4.eps} 
%         \vspace{-3mm}
%     \caption{MIMO CC Unicast, streams' Decoding of first user.} 
%     \label{fig:Sec_3_Decoding}
% \end{figure}

Now, let us review the linear decoding process at user~1, which receives
%\begin{equation*}
%\begin{array}{l}
    $\By_1(1) = \BH_1 \Bx_{\CK}(1) + \Bz_1(1)\color{black}$. % (Fig.~\ref{fig:Sec_3_Decoding}).
%\end{array}
%\end{equation*}
%
%As $B_1$ and $C_1$ are available in the cache memory of user~1, it can remove their respective terms from $\By_1(1)$, i.e., $B_1^1$ and $C_1^1$, and multiply the resulting signal in the receive beamforming vectors $\Bu_{1,1}, \Bu_{1,2} \in \mathbb{C}^{G \times 1}$ to have
By definition, the interference from $B_{\{3\}}^{1}$, $B_{\{3\}}^{2}$, $C_{\{2\}}^1$, and $C_{\{2\}}^2$ is removed over every stream sent to user~1 using beamforming vectors $\Bw_{{\{3\}},2}^q$ and $\Bw_{{\{2\}},3}^q$, $q\in\{1,2\}$. On the other hand, user~1 has $B_{\{1\}}^q$ and $C_{\{1\}}^q$, $q\in\{1,2\}$, cached in its memory, so it can reconstruct and remove their respective interference terms from
%in its cache memory, it can eliminate their corresponding terms, $B_1^1$ and $C_1^1$, from 
$\By_1(1)$. 
Finally, for a fixed $q \in \{1,2\}$, $\Bw_{{\{2\}},1}^q$ and $\Bw_{{\{3\}},1}^q$ can be designed to be linearly independent as they are chosen from different null spaces ($\mathrm{Null}\big([\BH_3\herm(1) \BU_3(1)]\herm\big)\color{black}$ and $\mathrm{Null}\big([\BH_2\herm(1) \BU_2(1)]\herm\big)$\color{black}, respectively), and for a fixed $\CP \in \{{\{2\}},{\{3\}}\}$, $\Bw_{\CP,1}^1$ and $\Bw_{\CP,1}^2$ can also be selected to be orthogonal as the rank of each null space is given by $\mathrm{nullity\big(\bar{\BH}_{\CP,1}(1)\big)}\color{black} = 6-4 =2$.
%
%beamforming vectors $\Bw_{\{2\},1}^q$ and $\Bw_{\{3\},1}^q$ are linearly independent as ($\Bw_{\{\CP\},k}^q$ are orthogonal for $q$=1,2 and given $k$ and $\CP$), using~\eqref{eq:beamforming_in_nullspace}, we have
%\begin{equation*}
%    \begin{array}{l}
%        \Bw_{\{2\},1}^q \in \mathrm{Null}(\bar{\BH}_{\{2\},1}), q=1,2 \quad \bar{\BH}_{\{2\},1} = [\BH_3\herm \BU_3]\herm, \\[.5ex]
%        \Bw_{\{3\},1}^{q} \in \mathrm{Null}(\bar{\BH}_{\{3\},1}), q=1,2\quad \bar{\BH}_{\{3\},1} = [\BH_2\herm \BU_2]\herm, \\
%    \end{array}
%\end{equation*}
% \begin{equation*}
%     \begin{aligned}
%         &\Bw_{\{2\},1} \in \mathrm{Null}(\BH_{\{2\},1}), \quad \BH_{\{2\},1} = [\BH_3\herm \BU_3]\herm, \\
%         &\Bw_{\{3\},1} \in \mathrm{Null}(\BH_{\{3\},1}), \quad \BH_{\{3\},1} = [\BH_2\herm \BU_2]\herm, \\
%     \end{aligned}
% \end{equation*}
%where each subpacket with the same index $\CP$ but distinct $q$ can be randomly assigned a beamforming vector according to the Null space dimension
So, decoding all of the intended data terms $A_{\{2\}}^1$, $A_{\{2\}}^2$, $A_{\{3\}}^1$ and $A_{\{3\}}^2$ is possible at user~1 using the receiver-side ZF beamforming matrix $\BU_1(1) \color{black}\in \mathbb{C}^{4 \times 4}$, designed to suppress any relevant inter-stream interference. %Here, each subpacket with the same index $\CP$ but distinct $q$ can be randomly assigned a beamforming vector based on the null space dimension. 
Similarly, users~2 and~3 can each linearly decode four streams, 
%ensured by Theorem~\ref{Th:DoF}, as \(\big\lceil \frac{\beta}{S} \big\rceil  =2 \leq \mathrm{nullity(\bar{\BH}_{\CP,k})} = 2 \), 
and the total DoF of 12 is achievable.
\end{exmp}

\color{black}

%\input{LogDet}

% The DoF value in the previous subsection is achievable if we rely on unicasting only(Do we need unicasting DoF Analysis?). Then, we discuss that multicasting improves the rate, but the full DoF may not be achievable if we rely on multicasting only. Finally, we discuss how the achievable DoF may change if we rely on multicasting only. This section can include an example that is repeated in other sections.

% Here we also note that the full DoF with improved performance is possible if we mix unicasting and multicasting (but we leave the complete analysis for future work?)\\
\section{Linear Multicast Transmission Schemes for MIMO-CC}
%\section{Linear Schemes for MIMO-CC Multicast}
\label{section:DoF_MC}

As discussed in Section~\ref{section:DoF}, the maximum MIMO-CC DoF under the linear decodability constraints of Theorem~\ref{Th:DoF} can be achieved using unicast beamforming combined with signal-domain interference cancellation.
%The placement and delivery strategies proposed in the proof of Theorem~\ref{Th:DoF} can be implemented using any linear transmitter and receiver structures (as the number of unicast streams per user is explicitly limited by $\beta\leq G$).
%, and any linear receiver structure such as LMMSE (as detailed in Section~\ref{section:Linear_BF}) can be used to separate the streams received by each user. 
However, signal domain processing imposes implementation challenges~\cite{salehi2022enhancing}, and 
relying fully on unicast beamforming severely degrades the finite-SNR performance\color{black}.
Similarly, maximizing the number of parallel streams to match the DoF may not even be desirable for rate optimization in finite-SNR~\cite{tolli2017multi,salehi2022multi}.
%On the other hand, while the alternative design in~\cite{naseritehrani2024multicast} based on optimizing the covariance matrices of the transmitted multicast signals improves the finite-SNR rate, it requires non-linear receiver processing in general, and the rate optimization problem may also be too complex for practical implementation as the number of the rate constraints of the MAC region increases exponentially w.r.t. the total number of parallel streams in each transmission. 
%
%\color{blue}
In this section, we introduce a new class of generalized linear multicast transmission strategies that may not necessarily achieve the maximum number of parallel streams (similar to the enhanced DoF value in~\eqref{eq:total_DoF}) but are designed to maximize the delivery rate at a given SNR level constrained by linear processing conditions at each receiver. All the proposed strategies are based on the original multi-server (MS) scheme in~\cite{shariatpanahi2018physical}, take advantage of maximal multicasting opportunities (i.e., XOR codewords of size $t+1$), and are symmetric in the sense that each target user receives an equal number of streams per each transmission.
The linear beamforming used to realize the proposed scheduling framework in MIMO-CC class follows an iterative design adapted from~\cite{mahmoodi2021low}; to keep the focus on the novel scheduling scheme, its details are relegated to Appendix~\ref{section:Linear_BF}. \color{black}
\begin{remarknum}
%\normalfont
    The proposed class of linear multicast transmission schemes is a subset of all possible schemes for a given network. The symmetric rate achieved through these schemes may not be globally \color{black} optimal, and, for example, non-linear or non-symmetric schemes with better performance may be found. However, as it is practically impossible to consider all feasible transmission strategies, we focus only on a subset of strategies with a well-defined structure and realistic practical implementability.
\end{remarknum}

We start by reviewing the original MS scheme in~\cite{shariatpanahi2018physical} and assuming that the number of users in each transmission is set to $\Omega$. With this scheme, 
%for each subset $\CK$ of all $K$ users with $|\CK| = \Omega$, we construct an XOR codeword for every possible subset $\CT \subseteq \CK$ with $|\CT| = t+1$. 
in the delivery phase, each requested packet $W_{\CP,k}$ is further split into $\binom{K-t-1}{\Omega-t-1}$ subpackets denoted as $W_{\CP,k}^q$, and for each subset $\CK$ of users with $|\CK| = \Omega$, a particular transmission vector $\Bx_{\CK,\mathrm{MS}}$ is constructed as:
\begin{equation} \label{sigconst_dof_loss}
    %\begin{array}{l}
    \displaystyle \Bx_{\CK,\mathrm{MS}} = \sum\limits_{\CT \in \SfS^{\CK}} \Bw_{\CT} X_{\CT},
    %\end{array}
\end{equation}
where $\CT \subseteq \CK$ represents a codeword index, and
\begin{equation}
\label{eq:index_superset}
    {\SfS}^{\CK} = \big\{\CT \subseteq \CK, |\CT| = t+1 \big\}
\end{equation}
denotes the superset of requested codeword indices. Moreover,  
%\begin{equation}\label{codeword_conventional}
%    \begin{array}{l}
     $X_{\CT} = \bigoplus_{k \in \CT} W_{\CT \backslash \{k\},k}^q$ 
%  \end{array}
%\end{equation}
represents a codeword (recall that $W_{\CT \backslash \{k\},k}^q$ denotes a subpacket of the file $W_k$ requested by user $k$)\color{black}, and $\Bw_{\CT}$ is the multicast beamformer vector associated with $X_{\CT}$. The super index $q$ increases sequentially and is used to avoid the repetition of subpackets.

The first option for a cache-enabled MIMO system is to apply the MS scheme directly, i.e., to build the transmission vectors 
%for every subset of users $\CK$ with $|\CK| = \Omega$
similarly as~\eqref{sigconst_dof_loss} but to use spatial multiplexing at each receiver to separate the parallel streams. 
{Throughout the rest of the paper, we call this solution the Extended Multi-Server (Ext-MS) scheme. It can be easily verified that the number of parallel streams per user in the Ext-MS scheme is $\binom{\Omega-1}{t}$, and following Theorem~\ref{Th:DoF}, its linear decodability requires that} 
\begin{equation}
    \begin{array}{l}
    \displaystyle \binom{\Omega-1}{t} \le G, \quad \binom{\Omega-1}{t} \cdot (\Omega - t - 1) \leq L - 1 .
    \end{array}
    \label{eq:ldextms}
\end{equation}
%Moreover, following similar steps as in Lemma~\ref{lm:wsa_Prop_achievable_DOF_gaps}, it can be shown that the DoF gap of the Ext-MS scheme with the enhanced DoF of~\eqref{eq:total_DoF} is at most $G(t+1)+L-1$.
%
%
%\color{blue}
%\begin{remarknum}\label{ldextms}
%\normalfont
%    The extended MS scheme ensures linear decodability if, given a transmitter spatial multiplexing gain \( L \), CC-gain \( t \), and a number of streams per intended user \( \beta \), the number of served users \( \Omega \) satisfies:
%    \begin{equation}
%    \begin{array}{l}
%    \beta \cdot (\Omega - t - 1) \leq L - 1,
%    \end{array}
%    \label{eq:ldextms}
%    \end{equation}
%    where \( \beta = \binom{\Omega-1}{t} \). Furthermore, the resulting DoF gap relative to Theorem~\ref{Th:DoF} is bounded by $G(t+1)+L$. 
%\end{remarknum}
%\begin{proof}
%    The proof follows directly by substituting the number of streams per intended user, \( \beta = \binom{\Omega-1}{t} \), in the extended MS scheme into~\eqref{DoF_bndd} of Theorem~\ref{Th:DoF}, completed by straightforward algebraic manipulations.  A straightforward algebraic manipulation confirms that the DoF gap is bounded by $G(t+1)+L$ following similar steps in lemma~\ref{lm:wsa_Prop_achievable_DOF_gaps}.
%\end{proof}
%\color{black}
%On the other hand, as the number of antennas at each receiver is $G$, such a direct application of the MS strategy to the MIMO case could pose two challenges. 
While the Ext-MS scheme is an easy and straightforward extension of the MS scheme, it faces two critical challenges.
First, if $G < \binom{\Omega-1}{t}$, linear receiver processing is not possible, and the complex successive interference cancellation (SIC) structure is needed to decode the parallel streams. Second, if $G \gg \binom{\Omega-1}{t}$, the solution is very inefficient as the number of streams decoded by each user is much smaller than the maximum possible value (i.e., $G$). To address both scenarios, our proposed schemes introduce an underlying scheduling mechanism that enables setting the number of parallel streams sent to each target user, indicated by $\beta$, to any number from a predefined set %. In essence, this means one can reduce the number of parallel streams to keep 
while maintaining the linear decodability. %or increase it to make the scheme more efficient (in terms of the total number of streams), depending on the requirements.
In other words, for each $\Omega$, we first find the set $\CB_{\Omega}$ such that for any $\beta \in \CB_{\Omega}$,
%Let us define $\CB_{\Omega}$ %use $\beta$ %to denote the number of parallel streams per user and show the 
%to contain a set of possible $\beta$ values for a given $\Omega$. %(we will describe how to find $\CB_{\Omega}$ shortly in Theorem~\ref{Th:Scheduling Mechanism}).
%For every $\Omega$ and $\beta \in \CB_{\Omega}$, 
we can build a symmetric linear transmission strategy that transmits $\beta$ parallel streams to each of the $\Omega$ users in $\CK$ in each transmission %and does so 
using codewords of size $t+1$ while also ensuring linear decodability.
%that the users can decode their streams with a linear structure. 
Then, for a given $\mathrm{SNR}$ value, we pick $\Omega$ and $\beta \in \CB_{\Omega}$ values that maximize the symmetric rate as: 
% outlined below:
 %$R_{\mathrm{sym}(\Omega,\beta,\mathrm{SNR})}$
%  \begin{equation}\label{symm_rate_general}
%     (\Omega^\ast,\beta^\ast) = \max_{\Omega, \beta \in \CB_{\Omega}} R_{\mathrm{sym}}(\Omega,\beta,\mathrm{SNR}).
% \end{equation}
\begin{equation} \label{symm_rate_general}
%\begin{array}{l}
     %(\Omega^\ast,\beta^\ast) = 
     \displaystyle\max\limits_{\Omega, \beta \in \CB_{\Omega}} \displaystyle R_{\mathrm{sym}}\big(\Omega,\beta,\mathrm{SNR}\big).
%\end{array}
\end{equation}
%\vspace{-4mm}
%\vspace{-6mm}
\subsection{Enhanced Multicast Scheduling for MIMO-CC}\label{sec:scheduling}
%Throughout this section, we consider a MIMO-CC system with $K$ users, CC gain $t$, $L$ antennas at the transmitter, and $G$ antennas at each receiver. 
%In this section, we propose a flexible scheduling scheme designed for the MIMO-CC transmission strategy, which serves a symmetric number of streams per user, and outline them through theorems and lemmas. 

%In each transmission, users in a subset $\CK$ of size $t+1 \le \Omega \le K$ receive parts of their requested data. 
Let us define 

%${\SfS}^{\CK} = \{\CT \subseteq \CK, |\CT| = t+1 \}$, and
% \begin{equation}
%     \beta_0 = \frac{t+1}{\mathrm{gcd}(t+1,\Omega)}, \quad B_0 = \frac{\Omega}{\mathrm{gcd}(t+1,\Omega)},
% \end{equation}
\begin{equation} %\label{beta_strm1}
\begin{array}{l}
   \displaystyle \beta_0 = \frac{t+1}{\mathrm{gcd}(t+1,\Omega)}, \quad B_0 = \frac{\Omega}{\mathrm{gcd}(t+1,\Omega)},\\[1ex]
\end{array}
\end{equation}
where $\mathrm{gcd}(\cdot)$ denotes the greatest common divisor.
We first introduce a base scheduling where each target user receives exactly $\beta_0$ codewords in each transmission. This is done in Theorem~\ref{Th:Base_scheduling} using an appropriate partitioning of the codeword index superset ${\SfS}^{\CK}$ as defined in~\eqref{eq:index_superset}\color{black}. Then, in Theorem~\ref{Th:Scheduling Mechanism}, we extend the base scheduling to suggest a more general set of possible $\beta$ values.  

\begin{lemma}
For any given $t$ and $\Omega$, $|\SfS^{\CK}| = \binom{\Omega}{t+1}$ and $\binom{\Omega-1}{t}$ are divisible by $B_0$ and $\beta_0$, respectively.
\end{lemma}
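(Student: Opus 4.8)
The plan is to reduce everything to the single elementary identity $(t+1)\binom{\Omega}{t+1} = \Omega\binom{\Omega-1}{t}$ and then finish by a coprimality argument. First I would set $d = \mathrm{gcd}(t+1,\Omega)$ and write $t+1 = d\,\beta_0$ and $\Omega = d\,B_0$, which is precisely the pair $(\beta_0, B_0)$ defined just before the lemma. By the defining property of the greatest common divisor, the cofactors obtained after dividing out $d$ are coprime, i.e.\ $\mathrm{gcd}(B_0,\beta_0)=1$; I would record this as the one structural fact the rest of the argument rests on.

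Next I would divide the identity $(t+1)\binom{\Omega}{t+1} = \Omega\binom{\Omega-1}{t}$ through by $d$, obtaining the clean relation $\beta_0\binom{\Omega}{t+1} = B_0\binom{\Omega-1}{t}$. Since every term here is an integer, this single equation couples the two divisibility statements I want to prove: the left side is manifestly a multiple of $\beta_0$ and the right side is manifestly a multiple of $B_0$, yet the two sides are equal.

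Finally I would read off both claims via Euclid's lemma. Because the two sides are equal, $B_0$ divides $\beta_0\binom{\Omega}{t+1}$; combined with $\mathrm{gcd}(B_0,\beta_0)=1$ this forces $B_0 \mid \binom{\Omega}{t+1}$, which is the first assertion (recalling $|\SfS^{\CK}| = \binom{\Omega}{t+1}$). Symmetrically, $\beta_0$ divides $B_0\binom{\Omega-1}{t}$, and coprimality again yields $\beta_0 \mid \binom{\Omega-1}{t}$, the second assertion.

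I do not expect a genuine obstacle: the ingredients are just the standard relation between $\binom{\Omega}{t+1}$ and $\binom{\Omega-1}{t}$, the fact that dividing $t+1$ and $\Omega$ by their gcd yields coprime integers, and Euclid's lemma. The only point worth making explicit for rigor is that $\beta_0\binom{\Omega}{t+1}$ and $B_0\binom{\Omega-1}{t}$ are equal \emph{as integers}, so that divisibility transfers across the equality; with that observation in place the two conclusions follow immediately and symmetrically.
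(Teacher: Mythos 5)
Your proof is correct. It starts from the same seed as the paper's---the absorption identity $(t+1)\binom{\Omega}{t+1} = \Omega\binom{\Omega-1}{t}$---but closes the argument by a genuinely different mechanism. The paper invokes B\'ezout's identity, writing $\gcd(\Omega,t+1) = a\Omega + b(t+1)$ for integers $a,b$, and then expands each quotient directly, showing $\binom{\Omega}{t+1}/B_0 = \binom{\Omega-1}{t}/\beta_0 = a\binom{\Omega}{t+1} + b\binom{\Omega-1}{t}$, which is manifestly an integer. You instead divide the identity through by $d=\gcd(\Omega,t+1)$ to obtain the integer equation $\beta_0\binom{\Omega}{t+1} = B_0\binom{\Omega-1}{t}$, note that the cofactors satisfy $\gcd(\beta_0,B_0)=1$, and finish with Euclid's lemma. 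The two routes are cousins (Euclid's lemma in this generality is itself typically derived from B\'ezout), but the trade-off is real: the paper's version produces an explicit closed form for the common quotient as an integer combination of the two binomial coefficients, while yours never needs the B\'ezout coefficients $a,b$ at all, reducing both divisibility claims to a one-line transfer across a single equation between integers---a cleaner and slightly more self-contained argument. Note also that your equation, divided by $\beta_0 B_0$, recovers the paper's observation that the two quotients are equal, so nothing is lost.
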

\begin{proof}
The proof follows the
%To prove the divisibility of \( B_0 \mid \binom{\Omega}{t+1} \) and \( \beta_0 \mid \binom{\Omega-1}{t} \), we apply the equation \( \gcd(x, y) = c_i x + c_j y \), where \( c_i, c_j \in \mathbb{Z} \), known as 
Bézout's identity (or Bézout's lemma) in number theory~\cite{bezout1779theorie}, which asserts that the gcd %(\(\gcd\))
of two integers can be expressed as a linear combination of them 
%\( x \) and \( y \) 
with integer coefficients.
Using this lemma, we can write 
\begin{equation}
    \gcd(\Omega, t+1) = a \Omega + b (t+1)
\end{equation}
%$\gcd(\Omega, t+1) = a \Omega + b (t+1)$ 
for two integers $a$ and $b$, and as a result
%\( c_i \) and \( c_j \). We then prove the following divisions are integers:
% \begin{equation*} \label{beta_strm1}
% \begin{array}{l}
% %B_0 \mid \binom{\Omega}{t+1} = 
% \nicefrac{\binom{\Omega}{t+1}}{B_0} =
% %\frac{\gcd(\Omega, t+1)}{\Omega}\binom{\Omega}{t+1} = 
% \frac{a \Omega + b (t+1)}{\Omega} \binom{\Omega}{t+1} 
% = a \binom{\Omega}{t+1} + b \binom{\Omega-1}{t}, \\ [2ex]
% \nicefrac{\binom{\Omega-1}{t}}{\beta_0} = 
% %\beta_0 \mid \binom{\Omega-1}{t} = \frac{\gcd(\Omega, t+1)}{t+1}\binom{\Omega-1}{t} = 
% \frac{a \Omega + b (t+1)}{t+1} \binom{\Omega-1}{t}
% = a \binom{\Omega}{t+1} + b \binom{\Omega-1}{t},
% \end{array}
% \end{equation*}
% \begin{equation*} \label{beta_strm1}
% \begin{aligned} 
% \frac{\displaystyle \binom{\Omega}{t+1}}{B_0} &=
% \frac{a \Omega + b (t+1)}{\Omega} \binom{\Omega}{t+1} \\[2ex]
% &= a \binom{\Omega}{t+1} + b \binom{\Omega-1}{t}, \\[4ex]
% \frac{\binom{\Omega-1}{t}}{\beta_0} &= 
% \frac{a \Omega + b (t+1)}{t+1} \binom{\Omega-1}{t}
% \\[2ex] &= a \binom{\Omega}{t+1} + b \binom{\Omega-1}{t},
% \end{aligned}
% \end{equation*}
 \begin{align} \label{beta_strm1}
\frac{\displaystyle\binom{\Omega}{t+1}}{B_0} 
&= \frac{a \Omega + b (t+1)}{\Omega} \binom{\Omega}{t+1}  \nonumber \\
&= a \binom{\Omega}{t+1} + b \binom{\Omega-1}{t},  \nonumber \\[0ex]
\frac{\displaystyle\binom{\Omega-1}{t}}{\beta_0} 
&= \frac{a \Omega + b (t+1)}{t+1} \binom{\Omega-1}{t}  \nonumber \\
&= a \binom{\Omega}{t+1} + b \binom{\Omega-1}{t}, 
\end{align}
and the divisibility constraints are met.
\end{proof}

%This subsection includes two theorems. In Theorem~\ref{Th:Base_scheduling}, we propose a base scheduling solution to find the minimum feasible value for $\beta$, denoted by $\beta_0$, such that all the codewords are of the same size of $t+1$ and each user receives the same number of codewords. Next, in Theorem~\ref{Th:Scheduling Mechanism}, we discuss how the feasible set $\CB_{\Omega}$ could be designed using the base scheduling solution.

\begin{thm}
\label{Th:Base_scheduling}
For the considered MIMO-CC system, 
%if $\beta_0 = 1$ (i.e.,  \(\mathrm{gcd}(\Omega, t+1) = t+1\)), 
one can partition $\SfS^{\CK}$ into $S_0 = \nicefrac{ \textstyle\binom{\displaystyle \Omega}{\displaystyle t+1}}{\displaystyle B_0}$ supersets $\tilde{\SfS}^{\CK}(\tilde{s})$, $\tilde{s} \in [S_0]$\color{black}, 
%where $B_0 = \nicefrac{\Omega}{\mathrm{gcd}(t+1,\Omega)}$,
%and $\mathrm{gcd}(\cdot)$ denotes the greatest common divisor, 
such that for every $\tilde{s} \in [S_0]$\color{black}, 
\begin{equation}
\begin{aligned}
&\bigcup_{\CT \in \tilde{\SfS}^{\CK}(\tilde{s})} \CT = \CK, \\ 
&\Big|\big\{\CT \in \tilde{\SfS}^{\CK}(\tilde{s}) \,\big|\, k \in \CT \big\}\Big| = \beta_0, 
\quad \forall k \in \CK.
\end{aligned}
\end{equation}\color{black}
% \begin{equation*}
% \color{red}\bigcup_{\CT \in \tilde{\SfS}^{\CK}(\tilde{s})} \CT = \CK, \color{black}
% \end{equation*}
% %$\color{red}\bigcup_{\CT \in \tilde{\SfS}^{\CK}(\tilde{s})} \CT = \CK\color{black}$, 
% and \color{red}
% \begin{equation*}
%     \Big|\big\{\CT \in \tilde{\SfS}^{\CK}(\tilde{s}) \big| k \in \CT \}\Big| = \beta_0, \quad \forall k \in \CK.
% \end{equation*}
In other words\color{black}, user $k$ appears in exactly $\beta_0$ distinct
%\begin{equation}
%    \beta_0 = \frac{t+1}{\mathrm{gcd}(t+1,\Omega)}
%\end{equation}
sets $\CT \in \tilde{\SfS}^{\CK}(\tilde{s})\color{black}$.
\end{thm}

\begin{proof}
The proof is based on two existing theorems on hypergraph factorization  in~\cite{baranyai1974factrization,Bahmanian2014ConnectedTheorem}. By definition, a hypergraph $(\CV,\SfE)$ consists of a finite set of vertices $\CV$ and an edge multi-superset $\SfE$, where every edge $\CE \in \SfE$ is itself a multi-subset of $\CV$. For a positive integer $r$, an $r$-factor in a hypergraph $(\CV,\SfE)$ is a spanning $r$-regular sub-hypergraph of $(\CV,\SfE)$, i.e., a hypergraph with the same vertex set $\CV$ but with an edge superset $ \hat{\SfE} \subseteq \SfE$ such that every vertex in $\CV$ is included in exactly $r$ edges $\CE \in \hat{\SfE}$. The $r$-factorization of $(\CV,\SfE)$ is then defined as the partitioning of $\SfE$ into multiple equal-sized sub-supersets 
\begin{equation*}
 \hat{\SfE}_i, \ i \in \Big\{1,\cdots,\displaystyle \frac{\displaystyle|\SfE|}{\displaystyle|\hat{\SfE}_i|} \Big\},
\end{equation*}
%$\hat{\SfE}_i$, $i \in \big\{1,\cdots,\displaystyle \nicefrac{\displaystyle|\SfE|}{\displaystyle|\hat{\SfE}_i|} \big\}$, 
such that every hypergraph $(\CV,\hat{\SfE}_i)$ is an $r$-factor of $(\CV,\SfE)$. 

For a positive integer $h$, the hypergraph $(\CV,\SfE)$ is said to be $h$-uniform if $|\CE| = h$ for each $\CE \in \SfE$. A complete $h$-uniform hypergraph $K_n^h$ is defined as a hypergraph where $|\CV| = n$ and $\SfE$ includes every subset of $\CV$ with size $h$. The well-known Baranyai theorem in~\cite{baranyai1974factrization} states that ``if $\tfrac{\displaystyle n}{\displaystyle h}$ is an integer, $1$-factorization of $K_n^h$ is indeed possible''. The Baranyai theorem was later extended in numerous works~\cite{Bahmanian2014ConnectedTheorem,bahmanian2018extending}, among which, in~\cite{Bahmanian2014ConnectedTheorem} it was shown that ``$K_n^h$ has a connected $\frac{\displaystyle h}{\displaystyle\mathrm{gcd}(n,h)}$-factorization.''

Now, to prove Theorem~\ref{Th:Base_scheduling}, let us first consider the case $\beta_0 = 1$, i.e.,  \(\mathrm{gcd}(\Omega, t+1) = t+1\). Consider the complete $(t+1)$-uniform hypergraph $K_{\Omega}^{t+1}$, where the set of vertices is the same as the target user set $\CK$ and the set of edges includes every selection of users from $\CK$ with size $t+1$. Clearly, for this hypergraph, the superset of edges $\SfE$ is the same as $\SfS^{\CK}$. The original Baranyai theorem~\cite{baranyai1974factrization} states that $K_{\Omega}^{t+1}$ has a 1-factorization, and as each 1-factor should span the vertex set and each edge has a size of $t+1$, the number of edges in each 1-factor is $ \tfrac{\displaystyle\Omega}{\displaystyle t+1} = B_0$. As a result, the total number of 1-factors is 
\begin{equation}
   \frac{\displaystyle |\SfE|}{\displaystyle B_0} = \frac{\displaystyle|\SfS^{\CK}|}{\displaystyle B_0} = \displaystyle\frac{ %\textstyle
   \displaystyle\binom{\displaystyle\Omega}{\displaystyle t+1}}{\displaystyle B_0} 
\end{equation}
%$\nicefrac{\displaystyle |\SfE|}{\displaystyle B_0} = \nicefrac{\displaystyle|\SfS^{\CK}|}{\displaystyle B_0} = \displaystyle\nicefrac{ \textstyle \binom{\displaystyle\Omega}{\displaystyle t+1}}{\displaystyle B_0}$.
% \begin{equation*}
%     \dfrac{|\SfE|}{B_0} = \dfrac{|\SfS^{\CK}|}{B_0} = \dfrac{\binom{\Omega}{t+1}}{B_0}
% \end{equation*}
%, and the proof is complete.

Next, we consider the general case when $\mathrm{gcd}(\Omega, t+1) \neq t+1$. Again, starting from the complete $(t+1)$-uniform hypergraph $K_{\Omega}^{t+1}$, we can use the extension of the Baranyai's theorem in~\cite{Bahmanian2014ConnectedTheorem} to ensure that $K_{\Omega}^{t+1}$ has a $\tfrac{\displaystyle t+1}{\displaystyle \gcd(\Omega,t+1)} = \beta_0$-factorization. Clearly, as each $\beta_0$-factor spans the whole vertex set and each vertex appears exactly $\beta_0$ times, each $\beta_0$-factor provides us with one desired superset $\tilde{\SfS}^{\CK}(\tilde{s})\color{black}$. Moreover, as the number of vertices is $\Omega$, each vertex appears $\beta_0$ times, and the size of each edge is $t+1$, the number of edges in a $\beta_0$-factor is 
\begin{equation}
    \frac{\displaystyle\Omega \beta_0}{\displaystyle(t+1)} = \frac{\displaystyle \Omega}{\displaystyle \gcd(\Omega,t+1)} = B_0
\end{equation}
%$\nicefrac{\displaystyle\Omega \beta_0}{\displaystyle(t+1)} = \nicefrac{\displaystyle \Omega}{\displaystyle \gcd(\Omega,t+1)} = B_0$.
As a result, the total number of $\beta_0$-factors is 
\begin{equation}
\frac{\displaystyle|\SfE|}{\displaystyle B_0} = \frac{\displaystyle \binom{\displaystyle \Omega}{\displaystyle t+1}}{\displaystyle B_0},
\end{equation}
%$\nicefrac{\displaystyle|\SfE|}{\displaystyle B_0} = \nicefrac{\binom{\displaystyle \Omega}{\displaystyle t+1}}{\displaystyle B_0}$,
and the proof is complete.
\end{proof}

\begin{remarknum}
%\normalfont
    The proof of Theorem~\ref{Th:Base_scheduling} only shows the existence of the intended partitioning of codeword indices. In order to build such a partitioning, one may use exhaustive search, heuristic solutions, or existing algorithms that are applicable under particular constraints. For example, if $t=1$ and $\Omega$ is even, the partitioning problem reduces to the well-known round robin tournament scheduling which has been thoroughly studied in the literature~\cite{harary1966theory}. Also, for the slightly more general case where $t$ can take any value but $\gcd(\Omega,t+1) = t+1$, the proof of the original Baranyai theorem~\cite{baranyai1974factrization,rasmussen2008round} can be used to design an efficient partitioning algorithm. A detailed description of this solution is provided in~\cite{brouwer1979uniform}. % and is excluded here for the sake of brevity.
    %However, as a quick explanation, the solution is briefly explained in Appendix~X. 
\end{remarknum}

% \begin{rem}
%     To the best of our knowledge, no simplified partitioning algorithm exists in the literature if $\gcd(\Omega,t+1) \neq t+1$. One heuristic solution for this case is to first consider another setup with the same number of users per transmission $\Omega$ but with the CC gain $\hat{t} = {\gcd(\Omega,t+1)}-1$. Then, as $\gcd(\Omega,\hat{t}+1) = \hat{t}+1$, one can find a suitable partitioning for this setup, using, for example, the solution in~\cite{brouwer1979uniform}. Finally, the resulting partitioning can be \emph{concatenated} with itself $\nicefrac{t+1}{\hat{t}+1}$ times to yield a suitable partitioning, albeit with increased subpacketization and transmission count, for the original setup. The exact formulation and analysis of this heuristic solution is left for our future work.
% \end{rem}

\noindent\textbf{Base scheduling.} 
%Using Theorem~\ref{Th:Base_scheduling}, we can design a transmission strategy where each target user receives exactly $\beta_0$ codewords in each transmission. Assume the set of target users is $\CK$. 
We first find the supersets $\tilde{\SfS}^{\CK}(\tilde{s})\color{black}$, $\tilde{s} \color{black}\in [S_0]$ using Theorem~\ref{Th:Base_scheduling}, and then, we simply design $S_0$ transmission vectors $\Bx_{\CK}(s)\color{black}$ as % follows:
%\begin{equation}
    $\Bx_{\CK}(s) = \sum_{\CT\in \tilde{\SfS}^{\CK}(\tilde{s})}  \Bw_{\CT} X_{\CT}$\color{black}.
%\end{equation}
Clearly, the base scheduling requires $S_0$ transmit intervals for every subset $\CK$ of users with $|\CK| = \Omega$, but the subpacketization is not affected. 
% Clearly, this transmission requires an increased number of transmissions by a factor of $S_0$ but does not increase the subpacketization. %In Theorem~\ref{Th:Scheduling Mechanism}, we show how one could control the number of parallel streams per user more elaboratively.

\begin{thm}\label{Th:Scheduling Mechanism}
%For the considered MIMO-CC setup, one option for the 
For given $\beta_0$ and $S_0$ and for two general integers $\delta$ and $\eta$ satisfying $\frac{\displaystyle \delta S_0}{\displaystyle \eta} \in \mathbb{N}$, a set of feasible $\beta$ values can be built as

% \begin{equation}
% \begin{array}{l}
%     \CB_{\Omega} = \big\{ \eta \beta_0 \mid \frac{\delta S_0}{\eta} \in \mathbb{N}, \eta  \leq \min (\frac{L S_0}{1 + (\Omega - t - 1) S_0 \beta_0},  \frac{G}{\beta_0}) \big\}.
% \end{array} \label{eq:lin_fullmc_mac}
% \end{equation}
% \begin{align} \label{eq:lin_fullmc_mac}
% &\CB_{\Omega} =\nonumber \\& \Bigg\{ \eta \beta_0 \,\Big|\, \frac{\delta S_0}{\eta} \in \mathbb{N},\ 
% \eta \leq \min\Big( \frac{L S_0}{1 + (\Omega - t - 1) S_0 \beta_0}, 
% \frac{G}{\beta_0} \Big) \Bigg\}
% \end{align}
\begin{align} 
\CB_{\Omega} = \Bigg\{ \eta \beta_0 \,\Bigg|\, 
&\frac{\delta S_0}{\eta} \in \mathbb{N}, \nonumber \\[-2.5ex]
&\eta \leq \min\left( \frac{L S_0}{1 + (\Omega - t - 1) S_0 \beta_0},\ 
\frac{G}{\beta_0} \right) 
\Bigg\}.
\end{align}\label{eq:lin_fullmc_mac}
In other words, for each $\beta \in \CB_{\Omega}$,
one could build a linear CC scheme comprising only XOR codewords of size $t+1$, such that with each transmission, each user in the target user set $\CK$ with $|\CK| = \Omega$ can decode $\beta$ parallel streams using a linear receiver.
\end{thm}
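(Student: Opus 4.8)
The plan is to build the required scheme by stacking copies of the \emph{base scheduling} of Theorem~\ref{Th:Base_scheduling} and then fusing several of its supersets into a single transmission. Concretely, I would run the base scheduling $\delta$ times on the same target set $\CK$, each time with a fresh batch of subpackets, producing $\delta S_0$ supersets of the form $\SfS^{\CK}(s)$, each of which covers $\CK$ and places every user $k\in\CK$ in exactly $\beta_0$ codewords. I would then partition these $\delta S_0$ supersets into $\nicefrac{\delta S_0}{\eta}$ groups of $\eta$ supersets each (an integer number of groups precisely because $\nicefrac{\delta S_0}{\eta}\in\mathbb{N}$), and let each group define one transmission vector $\Bx=\sum_{\CT}\Bw_{\CT}X_{\CT}$ collecting the $\eta B_0$ codewords of its $\eta$ supersets. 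Since each base superset contributes $\beta_0$ codewords containing any fixed user $k$, such a transmission carries exactly $\eta\beta_0=\beta$ streams to every $k\in\CK$, as required.

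Next I would control how often a single codeword index $\CT$ recurs inside one transmission, because codewords sharing the same $\CT$ must draw their beamformers from a common null space. Within one run of the base scheduling the $S_0$ supersets partition $\SfS^{\CK}$, so each index $\CT$ occurs exactly once per run and hence $\delta$ times in total. Distributing these $\delta$ occurrences as evenly as possible over the $\nicefrac{\delta S_0}{\eta}$ transmissions is feasible by a standard balanced (transportation-type) assignment with row sums $\eta$ and column sums $\delta$ (note $\nicefrac{\delta S_0}{\eta}\in\mathbb{N}$ forces $\delta\ge\lceil\nicefrac{\eta}{S_0}\rceil$, so enough copies are available), and it guarantees that no transmission contains more than $\lceil\nicefrac{\eta}{S_0}\rceil$ codewords with the same index $\CT$.

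I would then carry out the beamforming and dimension count exactly as in the proof of Theorem~\ref{Th:DoF}. For each codeword $X_{\CT}$ I set $\Bw_{\CT}\in\mathrm{Null}(\bar{\BH}_{\CT})$, where $\bar{\BH}_{\CT}=[\BH_{k'}\herm\BU_{k'}]\herm$ stacks the $\beta$-dimensional receive subspaces of the $\Omega-t-1$ non-intended users $k'\in\CK\setminus\CT$; for generic i.i.d.\ channels this null space has dimension $L-(\Omega-t-1)\beta$. Codewords with distinct indices are drawn from different null spaces and yield (generically) independent effective channels, whereas the at most $\lceil\nicefrac{\eta}{S_0}\rceil$ codewords sharing one index admit linearly independent beamformers only if $\lceil\nicefrac{\eta}{S_0}\rceil\le L-(\Omega-t-1)\beta$. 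As the right-hand side is an integer this is equivalent to $\nicefrac{\eta}{S_0}\le L-(\Omega-t-1)\eta\beta_0$, which rearranges to $\eta\le\nicefrac{L S_0}{1+(\Omega-t-1)S_0\beta_0}$, the first term of the minimum in~\eqref{eq:lin_fullmc_mac}. On the receiver side, each user must separate its $\beta$ desired streams with $G$ antennas, forcing $\beta=\eta\beta_0\le G$, i.e.\ $\eta\le\nicefrac{G}{\beta_0}$, the second term. Interference from undesired codewords vanishes since $\BU_k\herm\BH_k\Bw_{\CT}=0$ whenever $k\notin\CT$, so each user linearly decodes its $\beta$ streams.

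The main obstacle I anticipate is not the dimension arithmetic but justifying that all the coupled linear-independence requirements hold simultaneously: the transmit beamformers $\Bw_{\CT}$ are defined through the receive matrices $\BU_{k'}$, while the receive matrices must in turn separate the effective channels $\BH_k\Bw_{\CT}$. As in Theorem~\ref{Th:DoF} and Example~\ref{exmp:x0serv}, I would resolve this by fixing generic $\BU_{k'}$ first, choosing the $\Bw_{\CT}$ from the resulting null spaces, and invoking a genericity argument for i.i.d.\ Gaussian $\BH_k$ to certify that the desired effective channels stay linearly independent and that the beamformers within a shared null space can be taken independent; the balanced assignment of the second step is exactly what keeps the required count $\lceil\nicefrac{\eta}{S_0}\rceil$ within the available null-space dimension.
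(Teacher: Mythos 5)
Your proposal is correct and takes essentially the same route as the paper's proof: both stack $\delta$ replicas of the base scheduling from Theorem~\ref{Th:Base_scheduling}, group $\eta$ supersets per transmission so each user receives $\eta\beta_0$ streams, bound the number of codewords sharing an index $\CT$ in one transmission by $\lceil\nicefrac{\eta}{S_0}\rceil$, and then reuse the rank--nullity count from the proof of Theorem~\ref{Th:DoF} to obtain exactly the two constraints $\eta\le\nicefrac{LS_0}{1+(\Omega-t-1)\beta_0 S_0}$ and $\eta\beta_0\le G$ in~\eqref{eq:lin_fullmc_mac}. The only cosmetic difference is that you enforce the repetition bound via a balanced (transportation-type) assignment, whereas the paper gets it automatically by cutting the concatenated $B_0\times\delta S_0$ table into consecutive blocks of $\eta$ columns.
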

\begin{figure}[t]
    %\begin{subfigure}{0.3\columnwidth}
        \hspace{-1.65mm}\centering
        \includegraphics[height = 2.90cm]{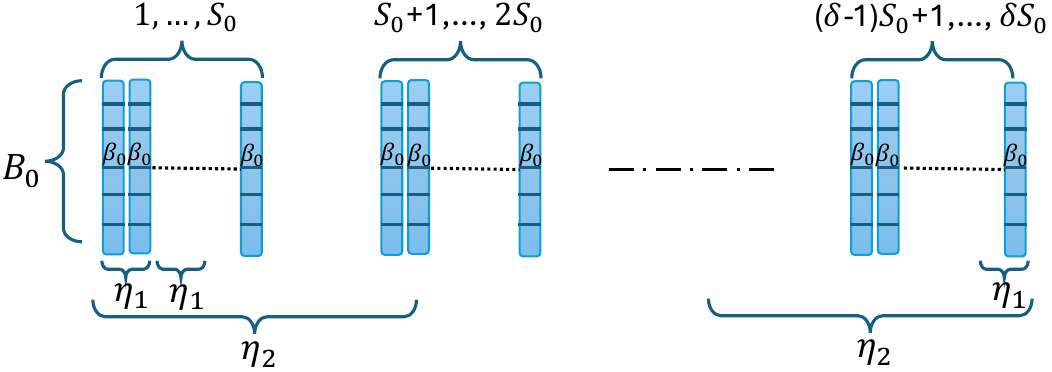} \label{fig:schduling}
        %\vspace{-5.mm}
    \caption{MIMO-CC multicast scheduling: a base scheduling block of size \(B_0 \times S_0\) is repeated $\delta$ times, and $\eta$ columns are selected from the resulting table for each interval, with two arbitrary options for $\eta$.% are illustrated in the figure.%\color{blue}MIMO-CC multicast scheduling: a base scheduling block of size \(B_0 \times S_0\) is repeated $\delta$ times, and $\eta$ columns are selected from the resulting table for each interval. Two arbitrary selections of $\eta$ are illustrated in the figure.% for arbitrary given \(\Omega\), \(t\), and \(\beta \in \CB_{\Omega}\), constructed using base scheduling blocks of size \(B_0 \times S_0\), where each user is assigned \(\beta_0\) streams, repeated \(\delta\) times. Selecting \(\eta_1\) or \(\eta_2\) columns from each base block (\(\eta_2 > \eta_1\)) enables flexible codeword allocation per intended user per transmission.
    \color{black}}\label{fig:ISIT_sysm}
\end{figure}
\begin{proof}
%\vspace{-3mm}
According to Theorem~\ref{Th:Base_scheduling}, 
%for the given MIMO-CC setup, 
one could partition $\SfS^{\CK}$ into $S_0$ supersets $\tilde{\SfS}^{\CK}(\tilde{s})\color{black}$ such that for every $\tilde{s} \in [S_0]$\color{black}, $\bigcup_{\CT \in \tilde{\SfS}^{\CK}(\tilde{s})} \CT = \CK\color{black}$ and each user $k \in \CK$ appears in exactly $\beta_0$ sets $\CT \in \tilde{\SfS}^{\CK}(\tilde{s})\color{black}$. Let us consider one such partitioning and build a $B_0 \times S_0$ table where the column $\tilde{s}\color{black}\in [S_0]$ of the table includes all index sets $\CT \in \tilde{\SfS}^{\CK}(\tilde{s})\color{black}$. By definition, each user $k \in \CK$ appears exactly $\beta_0$ times in each column. Now, assume we first concatenate $\delta$ copies of this table, where $\delta$ can be any integer, to get a larger table of size $B_0 \times \delta S_0$ (in practice, this means increasing the subpacketization by a factor of $\delta$ to avoid retransmission of the same data), and then, we again partition the resulting table into smaller tables of size $B_0 \times \eta$, where the integer parameter $\eta$ is selected such that $\nicefrac{\displaystyle \delta S_0}{\displaystyle \eta}$ is also an integer. This concatenation and partitioning process is shown in Fig.~\ref{fig:ISIT_sysm}.

By definition, each user $k \in \CK$ appears exactly $\eta \beta_0$ times in each resulting small table. Let us use $\hat{\SfS}^{\CK}(\hat{s})$, $\hat{s} \in \big[\nicefrac{\displaystyle \delta S_0}{\displaystyle \eta}\big]$ to denote the multi-superset including all the codeword indices in the $\hat{s}$-th small table (we need a multi-superset as there could be repetition in codeword indices if $\eta > S_0$). Then, one could build $\nicefrac{\displaystyle \delta S_0}{\displaystyle \eta}$ transmission vectors $\Bx_{\CK}(\hat{s})$ as follows

\begin{equation}
\begin{aligned}\label{sig_model_mc}
%\begin{array}{l}\label{sig_model_mc}
   \displaystyle \Bx_{\CK}(\hat{s}) = \sum\limits_{\CT\in \hat{\SfS}^{\CK}(\hat{s})}  \Bw_{\CT}^{\hat{q}} X_{\CT}^{\hat{q}}\color{black},  
%\end{array}
\end{aligned}
\end{equation}
where the super index ${\hat{q}}$ increases sequentially and is used to distinguish between the codewords (and beamformers) with the same index $\CT$. Each transmission vector $\Bx_{\CK}(\hat{s})\color{black}$ delivers exactly $\eta \beta_0$ subpackets to each user $k \in \CK$ using the codewords of size $t+1$. 

Clearly, a necessary condition for linear decodability of the transmission vectors $\Bx_{\CK}(\hat{s})\color{black}$ is $\eta \beta_0 \le G$.
However, as discussed in the proof of Theorem~\ref{Th:DoF}, one should also ensure that the number of subpackets $W_{\CP,k}^q$ with the same packet index $\CP$ received by each user $k \in \CK$ is constrained by the remaining spatial multiplexing order at that user (i.e., the rank of $\mathrm{Null}\big(\bar{\BH}_{\CP,k}(\hat{s})\big)\color{black}$, where $\bar{\BH}_{\CP,k}(\hat{s})\color{black}$ is defined in~\eqref{eq:equi_intf_chan}). In the proposed scheduling mechanism (see Fig.~\ref{fig:ISIT_sysm}), the number of subpackets with the same packet index delivered to a user is given by $\lceil \nicefrac{\displaystyle \eta}{\displaystyle S_0} \rceil$. 
Following the procedure outlined in the proof of Theorem~\ref{Th:DoF}, the rank of $\mathrm{Null}\big(\bar{\BH}_{\CP,k}(\hat{s})\big)\color{black}$ is 

\begin{equation}
\begin{aligned}
\mathrm{nullity}\big(\bar{\BH}_{\CP,k}(\hat{s})\big)  &= L - \mathrm{rank}\big(\bar{\BH}_{\CP,k}(\hat{s}) \big) \\[1ex]&= L - ({\Omega-t-1})\eta\beta_0.
\end{aligned}
\end{equation}
%$ L-(\Omega-t-1)\eta\beta_0$. 
As a result, for linear decodability, the following condition should hold

\begin{equation}
    %\begin{array}{l}
         \Big\lceil \frac{\eta}{{S_0}} \Big\rceil {\leq} L-(\Omega-t-1)\eta\beta_0 \Leftrightarrow{}
         \eta \leq \!\!\Bigg\lfloor\!\!\frac{LS_0}{1+(\Omega-t-1)\beta_0 S_0}\!\!\Bigg\rfloor, 
      %  \end{array}
\end{equation}
which %is the same criteria in~\eqref{eq:lin_fullmc_mac}, and the proof is 
completes the proof.
\end{proof}

\begin{remarknum}\label{rem: full_Mc_mlstr}
    %\textbf{MENTION FLEXIBILITY ON SUBPACKETIZATION HERE}
    \color{black}
    %\normalfont
    % In practice, we may not even need to explicitly enforce extra subpacketization in the bit-domain when $\delta>1$. Instead, we could rely on splitting the encoded version of the codeword into arbitrary sizes (with the granularity of symbols) determined by their allocated rates in the symmetric rate optimization problem.
    % %(as outlined in Section~\ref{section:Linear_BF}). 
    % This approach not only avoids extra subpacketization in the bit domain but could also improve the achievable rate by allowing more flexibility in the size of sub-streams~\cite{naseritehrani2023low}. 
    When the transmitted data is split into multiple parallel sub-streams, two approaches are possible. One option is to strictly subpacketize the file into fixed-size subpackets, enforcing max-min fairness per sub-stream. Alternatively, we can adopt flexible spatial splitting, in which the encoded codeword is divided into sub-streams of arbitrary sizes determined by their allocated rates in the symmetric rate optimization problem~\cite{naseritehrani2023low}. This avoids introducing additional subpacketization in the bit domain and expands the feasible rate region. Nevertheless, in practice, to maintain flexible splitting without introducing significant overhead or impractically small sub-streams, the file size \( F \) should be sufficiently large to accommodate fine-grained splitting.
    %%%%%%%%%Second version   
% When splitting the transmitted data into multiple parallel sub-streams, two options arise. One is strict subpacketization into fixed-size subpackets, enforcing max-min fairness per sub-stream. A more flexible approach, however, is to split the encoded codeword into sub-streams with arbitrary sizes based on the allocated rates in the symmetric rate optimization (Section IV), avoiding extra bit-level subpacketization. While theoretically sound, this strategy requires the file size \( F \) to be sufficiently large to ensure practical splitting without introducing overhead or sub-streams that are too small.
    %~\cite{naseritehrani2023low}. 
    %The advantage of this transmission design~\cite{naseritehrani2023low} could be its reduced subpacketization due to its inherent symmetry, as opposed to other cases that the symmetry should be established by
    %circularly splitting each codeword during multiple transmissions over time.
\end{remarknum}
%\vspace{-1mm}
%\subsection{Scheduling Examples}

\begin{exmp}
\color{black} 
\label{Example_subsection}
In this example, we review how the proposed scheduling in Theorem~\ref{Th:Scheduling Mechanism} could be applied to an example MIMO-CC network of $K \ge 10$ users with $(L,G,t) = (10,3,1)$.
%with CC gain $t=1$, where the transmitter has $L=10$ antennas and each user has $G=3$ antennas. 
Applying the DoF analysis from~\eqref{eq:total_DoF}, it can be verified that $\textrm{DoF}_{} = 15$ is achievable in this network by setting $\Omega = \Omega^* = 5$ and $\beta = \beta^*=3$. However, this DoF could only be achieved by signal-domain processing for this particular scenario. Now, considering the bit domain transmission of XOR's, we investigate the feasible pairs of $\Omega$ and $\beta$ obtained by Theorem~\ref{Th:Scheduling Mechanism} for an example subset of $\Omega \in\{2,4,5,7\}$, chosen to showcase distinct scheduling results (for example, $\Omega\in\{3,6\}$ cases are omitted as they can be shown to result in similar scheduling solutions as $\Omega\in\{4,5\}$, respectively).
For notational simplicity, we remove brackets and commas when explicitly mentioning the codeword index sets $\CT$.
%while avoiding the review of similar cases. In the rest of this section, we ignore the set brackets for notational simplicity. We also assume that users~1,~2,~3, and so on, request files $A$, $B$, $C$ and so forth, respectively.

%In this subsection, we study the design of the feasible set of streams $\CB_\Omega$ in Theorem~\ref{Th:Scheduling Mechanism} and determine the set of possible $\beta$ values for a given $\Omega$. We aim to build a symmetric linear transmission strategy that transmits $\beta$ parallel streams to $\Omega$ users in each transmission. 

%Consider the following setup for this investigation: $G = 3$, $t = 1$, $L = 10$, $\Omega^* = 5$, and $\beta^* = 3$. The values for $\Omega$ are $\{2, 3, 4, 5, 6, 7, 8, 9, 8, 9, 10, 11\}$, with DoF being $\{6, 6, 12, 10, 12, \text{N/A}, 8, \text{N/A}, 10, \text{N/A}\}$. This setup provides a framework for evaluating different scenarios and analyzing the efficiency of the proposed transmission strategy.

$\bullet$ $\Omega = 2$: In this case, $\gcd(\Omega,t+1) =  2$ and $\beta_0 = B_0 = S_0 = 1$. According to the Theorem~\ref{Th:Scheduling Mechanism}, the feasible set $\CB_{\Omega = 2}$ includes every integer $\eta$ such that $\eta\leq \min( \nicefrac{G}{\beta_0}=3,\lfloor \frac{10}{1+(2-1-1)\times1} \rfloor=10)$, and $\nicefrac{\displaystyle \delta}{\displaystyle \eta}$ is an integer for some integer $\delta$ (naturally, we are interested in the smallest $\delta$ to avoid unnecessary extra processing and subpacketization). As a result, we have $\CB_{\Omega=2} = \{1,2,3\}$, corresponding to total parallel stream counts of $\{2,4,6\}$, respectively, and the subpacketization may also increase proportionally to the selected $\beta$ value (it could be avoided as discussed in Remark~\ref{rem: full_Mc_mlstr}). For example, if we select $\eta=\delta=2$, the transmission vectors for $\CK=\{1,2\}$ will be $(A_2 \oplus B_1)^1\Bw_{12}^1$ and $(A_2 \oplus B_1)^2\Bw_{12}^2$.%$(A_2^1 \oplus B_1^1)\Bw_{12}^1$ and $(A_2^2 \oplus B_1^2)\Bw_{12}^2$.

% \color{blue}
% $\bullet$ $\Omega = 3$: In this case, $\gcd(\Omega,t+1) = 1$, $\beta_0 = 2$, $B_0 = 3$, and $S_0 = 1$. Let us assume $\CK = [3]$. Then, one could select the index supersets $\CS^{\CK}(s)$ in Theorem~\ref{Th:Base_scheduling} as
% %\begin{align*}
%     $\CS^{\CK}(1) = \{ 12, 13, 23 \}$ 
% %\end{align*}
% corresponding to transmission vectors $\Bx(s)$ as
% \begin{equation} \label{eq:sample_trans_omega3}
% \begin{aligned}
%     \Bx(1) &= (A_2 \oplus B_1) \Bw_{12} + (A_3 \oplus C_1) \Bw_{13} , \\
%     &+ (B_3 \oplus C_2) \Bw_{23},
% \end{aligned}
% \end{equation}
% respectively. According to Theorem~\ref{Th:Scheduling Mechanism}, the feasible set $\CB_{\Omega=3}$ includes every integer $\eta$ such that $\eta \leq \min( 3/2 ,\lfloor\frac{10\times 1}{1+(3-1-1)\times 2}\rfloor =  1 )$ and $\nicefrac{\delta}{\eta} \in \mathbb{N} $ for some $\delta \in \mathbb{N}$. This results in $\CB_{\Omega=3} = \{1,2,3\}$, corresponding to total parallel stream counts of $\{4,8,12\}$, respectively. For example, if we select $\eta=1$ and $\delta=1$, we can simply transmit  $\Bx(1)$ in~\eqref{eq:sample_trans_omega3} without any need to increase the subpacketization (as $\delta=1$), and the users can employ a linear receiver to extract all the required terms.
% \color{black}

$\bullet$ $\Omega = 4$: In this case, $\gcd(\Omega,t+1) = 2$, $\beta_0 = 1$, $B_0 = 2$, and $S_0 = 3$. Let us assume $\CK = [4]$. Then, one could select the index supersets $\tilde{\SfS}^{\CK}(\tilde{s})$ in Theorem~\ref{Th:Base_scheduling} as
%\begin{align*}
    $\tilde{\SfS}^{\CK}(1) = \{ 12,34 \}$, $\tilde{\SfS}^{\CK}(2) = \{ 13,24 \}$, and $\tilde{\SfS}^{\CK}(3) = \{ 14,23 \}$\color{black},
%\end{align*}
corresponding to transmission vectors $\Bx_{\CK}(\tilde{s})\color{black}$ as

\begin{equation*} \label{eq:sample_trans_omega4}
\begin{aligned}
    \Bx_{\CK}(1) &= (A_2 \oplus B_1) \Bw_{12} + (C_4 \oplus D_3) \Bw_{34}, \\
    \Bx_{\CK}(2) &= (A_3 \oplus C_1) \Bw_{13} + (B_4 \oplus D_2) \Bw_{24}, \\
    \Bx_{\CK}(3) &= (A_4 \oplus D_1) \Bw_{14} + (B_3 \oplus C_2) \Bw_{23},
\end{aligned}
\end{equation*}
\color{black}
respectively. According to Theorem~\ref{Th:Scheduling Mechanism}, the feasible set $\CB_{\Omega=4}$ includes every integer $\eta$ such that $\eta \leq \min( 3 ,\lfloor\frac{10\times 3}{1+(4-1-1)\times 3}\rfloor =  4 )$ and $\nicefrac{\displaystyle 3\delta}{\displaystyle \eta} \in \mathbb{N} $ for some $\delta \in \mathbb{N}$. This results in $\CB_{\Omega=4} = \{1,2,3\}$, corresponding to total parallel stream counts of $\{4,8,12\}$, respectively. For example, if we select $\eta=3$ and $\delta=1$, we can simply transmit a superposition of all the transmission vectors $\Bx_{\CK}(1)$-$\Bx_{\CK}(3)\color{black}$   in~\eqref{eq:sample_trans_omega4} without any need to increase the subpacketization (as $\delta=1$), and the users can employ a linear receiver to extract all the required terms.

$\bullet$ $\Omega=5$: In this case, $\gcd(\Omega,t+1) = 1$ and $\beta_0 = 2$, $B_0 = 5$, and $S_0 = 2$. Let us assume $\CK = [5]$. Then, one could select the index supersets $\tilde{\SfS}^{\CK}(\tilde{s})\color{black}$ in Theorem~\ref{Th:Base_scheduling} as
\begin{align*}
    \tilde{\SfS}^{\CK}(1) = \{12,23,34,45,15\}, \\
    \tilde{\SfS}^{\CK}(2) = \{13,24,35,14,25\},
\end{align*}\color{black}
and, for example, the transmission vector corresponding to $\tilde{\SfS}^{\CK}(1)$ is given as
\begin{equation*}
    \begin{aligned}
        \Bx_{\CK}&(1) = (A_2 \oplus B_1) \Bw_{12} + (B_3 \oplus C_2) \Bw_{23} + \\ & (C_4 \oplus D_3) \Bw_{34} + (D_5 \oplus E_4) \Bw_{45} + (E_1 \oplus A_5) \Bw_{15}.
    \end{aligned}
\end{equation*}\color{black}
According to Theorem~\ref{Th:Scheduling Mechanism}, $\CB_{\Omega=5}$ includes every integer $2\eta$ such that $\eta \leq \min(3, \lfloor\frac{10\times 2}{1+(5-1-1)\times 2\times 2}\rfloor = 1) = 1$ and $\nicefrac{\displaystyle 2\delta}{\displaystyle \eta}$ is an integer for some $\delta \in \mathbb{N}$. As a result, only $\CB_{\Omega=5}\! =\! \{2\}$ is possible given the linear decodability constraint,
corresponding to a total of 10 parallel streams.
%corresponding to the total parallel stream counts of $10$. %, and it is not possible to select any other $\beta$ value except $\beta_0 = 2$ while keeping 
%the linear decodability.

% \color{blue}
% $\bullet$ $\Omega=6$: In this case, $\gcd(\Omega,t+1) = 2$ and $\beta_0 = 1$, $B_0 = 3$, and $S_0 = 5$. Let us assume $\CK = [6]$. Then, one could select the index supersets $\CS^{\CK}(s)$ in Theorem~\ref{Th:Base_scheduling} as
% \begin{align*}
%     \CS^{\CK}(1) = \{12,35,46\}, \\
%     \CS^{\CK}(2) = \{13,26,45\},\\
%     \CS^{\CK}(3) = \{14,23,56\},\\
%     \CS^{\CK}(4) = \{15,24,36\},\\
%     \CS^{\CK}(5) = \{16,25,34\},
% \end{align*}
% and, for example, the transmission vector corresponding to $\CS^{\CK}(1)$ is given as
% \begin{equation*}
%     \begin{aligned}
%         \Bx&(1) = (A_2 \oplus B_1) \Bw_{12} + (E_3 \oplus C_5) \Bw_{35} + \\ & (D_6 \oplus F_4) \Bw_{46}.% + (D_5 \oplus E_4) \Bw_{45} + (E_1 \oplus A_5) \Bw_{15}.
%     \end{aligned}
% \end{equation*}
% According to Theorem~\ref{Th:Scheduling Mechanism}, $\CB_{\Omega=6}$ includes every integer $\eta$ such that $\eta \leq \min(3, \lfloor\frac{10\times 5}{1+(6-1-1)\times 1\times 5}\rfloor = 2) = 2$ and $\nicefrac{5\delta}{\eta}$ is an integer for some $\delta \in \mathbb{N}$. As a result, only $\CB_{\Omega=6} = \{1,2\}$ is possible given the linear decodability constraint,  corresponding to the total parallel stream counts of $6$, and $12$, respectively. 
% \color{blue}

$\bullet$ $\Omega = 7$: In this case, $\gcd(\Omega,t+1) = 1$ and $\beta_0 = 2$, $B_0 = 7$, and $S_0 = 3$. While it is possible to write down the base scheduling, from Theorem~\ref{Th:Scheduling Mechanism} we can see that $\CB_{\Omega = 7}$ includes every integer $2\eta$ such that $\eta \leq \min(3 , \lfloor\frac{10\times 3}{1+(7-1-1)\times 3\times 2}\rfloor =  0)$ and $\nicefrac{\displaystyle 3\delta}{\displaystyle \eta}$ is an integer for some $\delta$. Clearly, in this case, $\CB_{\Omega = 7} = \varnothing$, and there exists no scheduling with linear decodability.

\end{exmp}
\color{black}
%%%%%%%%%%%%%%%%%%%%%%%%%%%%%%%%%%
%See Appendix~\ref{section:Linear_BF} for details.
%The optimized transmit and receive beamforming design, which constitutes a key enabling component of the proposed scheduling framework, is detailed in Appendix~\ref{section:Linear_BF}.

%\color{red}Complementing this class of scheduling framework, the optimized transmit and receive beamforming design, which constitutes an
%a key
%enabling component of the proposed framework, is discussed in Appendix~\ref{section:Linear_BF}\color{black}.

%\input{Linear}

%\input{SimulationResults}

\section{Simulation Results}
\label{section:Simulations}

Numerical results are generated for various combinations of network parameters $t$, $L$, $G$, and delivery parameters $\Omega$, $\beta$, to compare different transmission strategies. Without loss of generality, the network size is set to $K=20$ users unless specified otherwise. Channel matrices are modeled as i.i.d. complex Gaussian, and the SNR is defined as $\frac{P_T}{N_0}$, where $P_T$ is the power budget at the transmitter and $N_0$ denotes the fixed noise variance. Throughout this section, the keywords UC and MC refer to full unicast scheduling (Theorem~\ref{Th:DoF} and~\ref{Th:linear_achievablity} in Section~\ref{section:DoF}) and full multicast scheduling (Theorem~\ref{Th:Scheduling Mechanism} in Section~\ref{section:DoF_MC}), respectively.
%\footnote{Note that for the MC scheme, when $\beta = \binom{\Omega-1}{t}$, the resulting scheme will be the direct extension of the baseline multi-server scheme in~\cite{shariatpanahi2018physical} to the MIMO setup, as discussed in Section~\ref{section:DoF_MC}.} 
Moreover, Ext-MS denotes the extended multi-server scheme explained in Sec.~\ref{section:DoF_MC}, where design parameters  are \color{black} selected as $t+1 \leq \Omega\leq t+L$ \color{black} and $\beta=\binom{\Omega-1}{t}$ per~\eqref{eq:ldextms}; Ext-Sh refers to the extension of the shared-cache model to the MIMO case in~\cite{salehi2022multi}, achieving a DoF of $G(t+\lfloor \nicefrac{L}{G} \rfloor)$ with optimized Tx-Rx beamforming, where design parameters are selected as $\beta=G$ and $\Omega=t + \lfloor \nicefrac{L}{G} \rfloor $; 
%COV represents the covariance-based design in~\cite{naseritehrani2024multicast}; 
and MU-MIMO denotes the baseline case without any CC technique, but benefiting from local caching gain by serving $L$ users per interval, each receiving one stream.

     \begin{figure}[t]
         %\begin{figure}[t]
    \centering
    \resizebox{\columnwidth}{!}{%
    %\resizebox{.68\columnwidth}{!}{%

    %\resizebox{0.9\columnwidth}{!}{%
   
    \begin{tikzpicture}
    % \begin{axis}
    % [
    % width = 0.8\columnwidth,
    % height = 0.6\columnwidth,
    % % put axis lines at left and bottom
    % axis lines = left,
    % % control axis labels
    % xlabel = \smaller {L  [Number of Antennas]},
    % ylabel = \smaller {DoF [Number of Parallel Streams]},
    % ylabel near ticks,
    % % control legend position
    % legend pos = north west,
    % legend style = {cells={align=center}},
    % % control size of tick marks (10,20,30,etc)
    % ticklabel style={font=\smaller},
    % % control major grids
    % %grid=both,
    % %major grid style={line width=.2pt,draw=gray!30},
    % % control minor grids
    % %grid style={line width=.1pt, draw=gray!10},
    % %minor tick num=5,
    % ybar interval=0.8,
    % ymin = 0,
    % ]
    \begin{axis}[
        axis lines = left,
        xlabel = \smaller {Transmitter-side spatial multiplexing gain $L$},
        ylabel = \smaller {Achievable DoF},
        ylabel near ticks,
        xlabel near ticks,
        legend pos = north west,
        legend columns=3,
        %legend rows=2,
        legend style = {
            nodes={scale=0.99, transform shape},
            cells={align=center},
            % Adjust the dimensions of the legend here
            at={(0.01,1)},
            anchor=north west,
            font=\small
        },
        ybar interval=0.8,
        ymin = 0,
        ymax = 100,
        xmin = 8,
        xmax=28,
        ticklabel style={font=\smaller},
        grid=both,
        major grid style={line width=.2pt,draw=gray!30},
        bar width=0.5cm, % Adjust the bar width here
    ]

 \addplot 
    [black,fill=brown!30]
    table
    [x=TxAntenna,y=dofGmimo]
    {Figs/DoF_Scalability1.tex};
    \addlegendentry{\smaller MU-MIMO}

    \addplot 
    [black,fill=black!50]
    table
    [x=TxAntenna,y=dofG8propt1]
    {Figs/DoF_Scalability1.tex};
    \addlegendentry{\smaller $G=8,t=1$}
    
    \addplot 
    [black,fill=red!50]
    table
    [x=TxAntenna,y=dofG16propt1]
    {Figs/DoF_Scalability1.tex};
    \addlegendentry{\smaller $G=16,t=1$}
    
    \addplot
    [black,fill=magenta!40]
    table
    [x=TxAntenna,y=dofG8propt3]
    {Figs/DoF_Scalability1.tex};
    \addlegendentry{\smaller $G=8,t=3$}

  \addplot
    [black,fill=blue!30]
    table
    [x=TxAntenna,y=dofG16propt3]
    {Figs/DoF_Scalability1.tex};
    \addlegendentry{\smaller $G=16,t=3$}

    \end{axis}
    \end{tikzpicture}
    }
    \caption{Achievable DoF of UC and MU-MIMO.
    %The effect of $L$, $G$, and $t$ on the achievable DoF for $t\in\{1,3\}$, and $G\in\{8,16\}$.
    }
    \label{fig:plot_scal}
%\end{figure}
         %\input{Figs/Plot_Scalability}
         % \captionof{Symmetric rate versus SNR for different $\Brt$ values, where $P=6$, $\sigma=2.29$ and $L=\alpha=4$.}
         % \label{fig: compare_t}
      \end{figure}

In Fig.~\ref{fig:plot_scal}, we evaluate the scalability of the achievable DoF (Corollary~\ref{remark1-Dof} in Section~\ref{section:DoF}) in comparison to the baseline MU-MIMO solution. Specifically, we examine how $L$, $G$, and $t$ parameters impact the achievable DoF. %As either $t$ or $G$ increases, the achievable DoF can be flexibly adjusted, leading to a corresponding noticeable boost. 
As can be seen, integrating CC into MIMO communication can significantly boost the achievable DoF, since the CC gain $t$ is scaled by the receiver-side spatial gain $G$ and added to the transmitter-side spatial gain $L$. This is in contrast with baseline MIMO setups, where the DoF is limited by $L$ in the best case.
In addition, this figure confirms that the effect of the channel rank, as the DoF value becomes limited or remains unchanged by $L$ when $L > G$, regardless of any increase in $G$.  
For example, when $L=8$, setting $G=16$ provides no additional benefit compared to $G=8$ for a fixed $t$.
% However, the effect of the channel rank needs to be considered, as the DoF value becomes limited or remains unchanged by $L$ when $L > G$, regardless of any increase in $G$.  
% For example, when $L=8$, setting $G=16$ provides no additional benefit compared to $G=8$ for a fixed $t$.

      \begin{figure}[t]
         %\begin{figure}[ht]
    \centering
%\resizebox{\columnwidth}{!}{%

    \resizebox{\columnwidth}{!}{%

    \begin{tikzpicture}

    \begin{axis}
    [
    % put axis lines at left and bottom
    axis lines = center,
    % control axis labels
    xlabel near ticks,
    xlabel = \smaller {$L$},
    ylabel = \smaller {Achievable DoF},
    ylabel near ticks,
    ymin = 0,
    xmax = 50,
    legend style={
        nodes={scale=0.9, transform shape},
        at={(0,1)}, % Adjust the position (0,1) is top left
        anchor=north west,
        draw=black, % Set the color of the legend box border
        outer sep=2pt, % Adjust the space outside the legend box
        font=\tiny, % Adjust the font size
    },
    % control legend position
    %legend pos = north west,
    % control size of tick marks (10,20,30,etc)
    ticklabel style={font=\smaller},
    % control major grids
    grid=both,
    major grid style={line width=.2pt,draw=gray!30},
    % control minor grids
    %grid style={line width=.1pt, draw=gray!10},
    %minor tick num=5,
    ]

   \addplot
    [mark = pentagon, mark size=1.6 , blue!100]
    table[y=dofucTH1t2G8,x=TxAntenna]{Figs/DoF_Prop_SOTA.tex};  
    \addlegendentry{\small UC scheduling-Th~\ref{Th:DoF}};
    % \addlegendentry{\small \textit{G} = 16, \textit{t} = 2, Prop};
    \addplot
    [mark = +, mark size=1.6 , red!100]
    table[y=dofmcTH3t2G8,x=TxAntenna]{Figs/DoF_Prop_SOTA.tex};
    \addlegendentry{\small MC scheduling-Th~\ref{Th:Scheduling Mechanism}
    %~\cite{salehi2021MIMO}
    };

    \addplot
    [mark = diamond, mark size=1.6 , black!100]
    table[y=dofextMSt2G8,x=TxAntenna]{Figs/DoF_Prop_SOTA.tex};
    \addlegendentry{\small Ext-MS~\cite{shariatpanahi2016multi}};
    
    \addplot
    [mark = star, mark size=1.6 , green!100]
    table[y=dofwsat2G8,x=TxAntenna]{Figs/DoF_Prop_SOTA.tex};
    \addlegendentry{\small Ext-Sh~\cite{salehi2021MIMO}};

      \addplot
    [mark = star , mark size=1.6 , cyan!100]
    table[y=dofmimoG8,x=TxAntenna]{Figs/DoF_Prop_SOTA.tex};
    \addlegendentry{\small MU-MIMO };
    % \addplot
    % [dash dot, mark = , black!90]
    % table[y=dofpropt0G16,x=TxAntenna]{Figs/data_Plot8J.tex};
    % \addlegendentry{\small \textit{G} = 16, \textit{t} = 0};

    \end{axis}

    \end{tikzpicture}
    }
    %\vspace{-1mm}
    \caption{The achievable DoF of UC, MC schemes, $(G,t)=(8,2)$.}
    \label{fig:Plot13}
%\end{figure}
         % \captionof{Symmetric rate versus SNR for different $\alpha$ values with $L=\alpha$, $P=3$, $\sigma=3.77$ and $\Brt=2$.}
         % \label{fig: compare_alpha}
      \end{figure}
In 
%Fig.~\ref{fig:Plot12} and 
Fig.~\ref{fig:Plot13}, we evaluate the achievable DoF of the UC
%in Theorem~\ref{Th:DoF}
and MC schemes,
%in Theorem~\ref{Th:Scheduling Mechanism}, 
highlighting their enhanced flexibility and performance. These approaches are compared against three benchmarks: Ext-MS, Ext-Sh, and the baseline MU-MIMO scheme. 
The results reveal limitations of the Ext-Sh scheme
%in~\cite{salehi2021MIMO}
imposed by the integer constraint $\nicefrac{L}{G}$, restricting its ability to adapt the DoF to the arbitrary system settings properly. 
%imposed by the integer constraint $\nicefrac{L}{G}$ imposed in~\cite{salehi2021MIMO}. 
%In particular
%: % a) When \( G \gg \binom{\Omega-1}{t} \), the Ext-MS scheme becomes inefficient as the number of streams decoded by each user, i.e., $\binom{\Omega-1}{t}$, falls far below the decodability limit \(G\). For instance, in Fig.~\ref{fig:Plot12}, for $L=22$, $t=2$, and $G=16$, the Ext-MS scheme is limited to $\Omega_{\textrm{MS}} \leq 5$ and $\textrm{DoF}_{\textrm{MS}} \leq 30$, significantly lower than the proposed $\textrm{DoF}^* = 64$. 
%b)
It can also be seen that the Ext-MS scheme fails to achieve DoF values close to the optimized value in~\eqref{eq:total_DoF}, when \( G < \binom{\Omega-1}{t} \) and linear decodability is imposed. For example, when $(L, G,t) = (30,8,2)$ and under the linear decodability constraint, Ext-MS can only achieve $\textrm{DoF}_{\textrm{MS}} \leq 30$, 
%scheme restricts at $\Omega_{\textrm{MS}} = 5$ users with limited achievable $\textrm{DoF}_{\textrm{MS}} \leq 30$, 
while our proposed UC scheme achieves $\textrm{DoF}^* = 49$.
The figure also illustrates that the achievable DoF of the MC scheme closely tracks that of the UC scheme.
These findings underscore the flexibility of our proposed solutions in accommodating a wide range of system parameters while achieving large DoF gains compared to the state of the art.

% In Fig.~\ref{fig:plot_1}, we illustrate the impact of the CC gain in MIMO systems for MC schemes of Theorem~\ref{Th:Base_scheduling} and MIMO without CC technique. For a fair comparison, we employ the best finite-SNR performance among the considered schemes~\cite{naseritehrani2024multicast}, approaching the maximum DoF. The \emph{COV scheme} serves as the benchmark for high performance, achieving the maximum DoF. It is observed that even the small CC gain of $t=1$ and $t=2$, \emph{MC scheduling schemes} in Theorem~\ref{Th:Base_scheduling} can flexibly scale up to higher achievable DoF and significant symmetric rate performance improvement while consistently outperforming the \emph{MIMO} schemes ($t=0$ and $\Omega = L$ for the best performance). This is , especially prominent with larger $G$. Moreover, it is seen that MC schemes can closely track the performance of the nonlinear \emph{COV scheme}, despite its linear design. However, CoV schemes can not be simulated for the larger setups of Tx $L$ and RX $G$ spatial multiplexing gain and CC gain $t$, thus offering a practical alternative to the more complex nonlinear designs like COV. 

      \begin{figure}[t]
          %\vspace{9pt}
         % \includegraphics[height=4.9cm]{compare_simga.eps}
         %\input{Figs/Plot_DoF_SOTA}
        %\begin{figure}[ht]
    \centering
%\resizebox{\columnwidth}{!}{%

   \resizebox{\columnwidth}{!}{%
     %\resizebox{0.67\columnwidth}{!}{%

    \begin{tikzpicture}

    \begin{axis}
    [
    % put axis lines at left and bottom
    axis lines = center,
    % control axis labels
    xlabel near ticks,
    xlabel = \smaller {SNR [dB]},
    ylabel = \smaller {Symmetric Rate [files/s]},
    ylabel near ticks,
    ymin = 0,
    xmax = 30,
    legend style={
        nodes={scale=0.99, transform shape},
        at={(0,1)}, % Adjust the position (0,1) is top left
        anchor=north west,
        draw=black, % Set the color of the legend box border
        outer sep=2pt, % Adjust the space outside the legend box
        font=\smaller, % Adjust the font size
    },
    % control legend position
    %legend pos = north west,
    % control size of tick marks (10,20,30,etc)
    ticklabel style={font=\smaller},
    % control major grids
    grid=both,
    major grid style={line width=.2pt,draw=gray!30},
    % control minor grids
    %grid style={line width=.1pt, draw=gray!10},
    %minor tick num=5,
    ]

    % \addplot
    % %[mark = square, black!90]
    % [mark = o, mark size=1.2, black!100]
    % table[y=L7G21COVOmega5beta2,x=SNR]{Figs/Symmetric_rate_Prop_SOTA_Comparison.tex};
    % \addlegendentry{\footnotesize COV-$\Omega=5$, $\beta=2$}

    % \addplot
    % %[mark = square, black!90]
    % [dashed,mark = o, mark size=1.2, blue!100]
    % table[y=L7G21MCOmega8beta1,x=SNR]{Figs/Symmetric_rate_Prop_SOTA_Comparison.tex};
    % \addlegendentry{\footnotesize MC-$\Omega=8$, $\beta=1$} 
    
    % \addplot
    % [dash dot, mark = o, mark size=1.2 , black!90]
    % %[ red!100]
    % table[y=L7G2t1MSOmega3beta2,x=SNR]{Figs/Symmetric_rate_Prop_SOTA_Comparison.tex};
    % \addlegendentry{\small MS-$\Omega=2$, $\beta=3$}
    % \addplot
    % %[dash dot, mark = o , cyan!90]
    % %[dotted,mark = o, mark size=1.2, cyan!100]
    % [dotted,mark = o, mark size=1.2,mark options={solid, fill=cyan}, cyan!100]
    % table[y=L7G2t1UCOmega5beta2,x=SNR]{Figs/Symmetric_rate_Prop_SOTA_Comparison.tex};
    % \addlegendentry{\small UC-$\Omega=5$, $\beta=2$}

%%%%%%%%%%%%%%%%%%%%%%%%%%%%%%%%%%%%%%%%
    
   \addplot
    [dashed,mark = square,  mark options={solid, fill=blue!100}, mark size=2.4, blue!100]
    table[y=L11G8t2MCOmega4beta6,x=SNR]{Figs/Symmetric_rate_Prop_SOTA_Comparison.tex};
    \addlegendentry{\small MC, $(11,8,2,4,6)$
    %-\footnotesize$\Omega=4$, $\beta=6$%-L=11,G=8,t=2
    } 

    \addplot
    [,mark = square, mark size=2.4,mark options={solid, fill=red}, red!100]
    table[y=L11G8t2MIMOOmega11beta1LC,x=SNR]{Figs/Symmetric_rate_Prop_SOTA_Comparison.tex};
    \addlegendentry{\small MU-MIMO, $(11,8,2,11,1)$
    %-\footnotesize$\Omega=11$, $\beta=1$%-L=11,G=8,t=2
    }
    
%%%%%%%%%%%%%%%%%%%%%%%%%%%%%%%%%%%%%%%%%%%%%%%%%%%%%%%%%%%%%%%%%%%%%%%%

%%%%%%%%%%%%%%%%%%%%%%%%%%%%%%%%%%%%%%%%%%%%%%%
    \addplot
    %[mark = square, black!90]
    [dashed,mark = o,  mark options={solid, fill=blue!100}, mark size=2.4, blue!100]
    table[y=L7G5t2MCOmega5beta3,x=SNR]{Figs/Symmetric_rate_Prop_SOTA_Comparison.tex};
    \addlegendentry{\small MC, $(7,5,2,5,3)$ %$\Omega=5$, $\beta=3$
    %-L=7,G=5,t=2
    }
    
 \addplot
     %[dash dot, mark = o , cyan!90]
    %%[dotted,mark = o, mark size=1.2, cyan!100]
    [,mark = o, mark size=2.4,mark options={solid, fill=red}, red!100]
    table[y=L7G5t0MIMOOmega7beta1LC,x=SNR]{Figs/Symmetric_rate_Prop_SOTA_Comparison.tex};
    \addlegendentry{\small MU-MIMO, $(7,5,2,7,1)$
    %-L=7,G=5,t=2
    }
% %%%%%%%%%%%%%%%%%%%%%%%%%%%%%%%%%%%%%%%%

    % \addplot
    % %[mark = square, black!90]
    % [mark = triangle, mark size=1.8, black!50]
    % table[y=L11G31COVOmega5beta3,x=SNR]{Figs/Symmetric_rate_Prop_SOTA_Comparison.tex};
    % \addlegendentry{\small COV-\footnotesize$\Omega=5$, $\beta=4$-L=11,G=3,t=1}

    \addplot
    %[mark = square, black!90]
    [dashed,mark = star, mark size=2.4, blue!100]
    table[y=L11G31MCOmega7beta2,x=SNR]{Figs/Symmetric_rate_Prop_SOTA_Comparison.tex};
    \addlegendentry{\small MC, $(11,3,1,7,2)$
    %-\footnotesize$\Omega=7$, $\beta=2$
    %-L=11,G=3,t=1
    } 

    % % \addplot
    % % %[mark = square, black!90]
    % % [dash dots,mark = star, mark size=1.2, blue!100]
    % % table[y=L11G31MSOmega4beta3,x=SNR]{Figs/Symmetric_rate_Prop_SOTA_Comparison.tex};
    % % \addlegendentry{\small MC-\footnotesize$\Omega=4$, $\beta=3$-L=11,G=3,t=1} 

    \addplot
    [,mark = star, mark size=2.4, red!100]
    table[y=L11G3t0MIMOOmega11beta1LC,x=SNR]{Figs/Symmetric_rate_Prop_SOTA_Comparison.tex};
    \addlegendentry{\small MU-MIMO, $(11,3,1,11,1)$
    %-\footnotesize$\Omega=11$, $\beta=1$
    %-L=11,G=3,t=1
    }

    \end{axis}

    \end{tikzpicture}
    }
    %\vspace{-1mm}
    \caption{Symmetric rate of MU-MIMO vs MC for different $(L,G,t,\Omega,\beta)$. 
    %schemes to linear MIMO schemes with no CC  %($t=0$)
    }
    \label{fig:COV_ZF_MC_Scalability}
%\end{figure}
      \end{figure}
      
In Fig.~\ref{fig:COV_ZF_MC_Scalability}, we illustrate the impact of the CC gain $t$ on the symmetric rate in MIMO systems for the MC and baseline MU-MIMO schemes for the following system setups: $(L,G,t) \in \{(11,3,1), (7,5,2), (11,8,2)\}$. For a fair comparison, the scheduling parameters ($\Omega$ and $\beta$) are selected to achieve the best performance within the given SNR range, even if this does not necessarily correspond to utilizing the full number of streams implied by the degrees of freedom (DoF). %, which may not necessarily utilize the number of streams corresponding to the DoF.
%adopt the best finite-SNR performance among the considered schemes, approaching the maximum DoF. 
 %For the baseline MIMO scheme in the finite-SNR, $\Omega = L$ yields the best performance regardless of Rx-side antennas $G$, where each user gets one stream. This avoids overloading the system with too many users or splitting power across multiple streams per user. 
 %each user gets **one stream**
%This avoids overloading the system with too many users or splitting power across multiple streams per user.
 %The {COV scheme} serves as a high-performance benchmark, achieving the maximum DoF.  
As can be seen, even with a small CC gain of $t\in\{1,2\}$, the MC scheme can
significantly enhance the symmetric rate compared to the baseline MIMO solution throughout the entire SNR range by flexibly choosing the best scheduling option while benefiting from the CC gain.

     \begin{figure}[t]
         %\begin{figure}[ht]
    \centering
%\resizebox{\columnwidth}{!}{%

   \resizebox{\columnwidth}{!}{%
     %\resizebox{0.67\columnwidth}{!}{%

    \begin{tikzpicture}

    \begin{axis}
    [
    % put axis lines at left and bottom
    axis lines = center,
    % control axis labels
    xlabel near ticks,
    xlabel = \smaller {SNR [dB]},
    ylabel = \smaller {Symmetric Rate [files/s]},
    ylabel near ticks,
    ymin = 0,
    xmax = 30,
    legend style={
        nodes={scale= 0.99, transform shape},
        at={(0,1)}, % Adjust the position (0,1) is top left
        anchor=north west,
        draw=black, % Set the color of the legend box border
        outer sep=2pt, % Adjust the space outside the legend box
        font=\smaller, % Adjust the font size
    },
    % control legend position
    %legend pos = north west,
    % control size of tick marks (10,20,30,etc)
    ticklabel style={font=\smaller},
    % control major grids
    grid=both,
    major grid style={line width=.2pt,draw=gray!30},
    % control minor grids
    %grid style={line width=.1pt, draw=gray!10},
    %minor tick num=5,
    ]

    \addplot
    %[mark = square, black!90]
    [dashed,mark = square, mark size=2.4 ,  mark options={solid, fill=blue!100}, blue!100]
    table[y=L11G8t2MCOmega4beta6,x=SNR]{Figs/Symmetric_rate_Prop_SOTA_Comparison.tex};
    \addlegendentry{\footnotesize MC-$\Omega=4$, $\beta=6$} 

     \addplot
    %[mark = square, black!90]
    [dashed,mark = asterisk, mark size=2.4,  mark options={solid, fill=blue!100}, blue!100]
    table[y=L11G8t2MCOmega6beta3,x=SNR]{Figs/Symmetric_rate_Prop_SOTA_Comparison.tex};
    \addlegendentry{\footnotesize MC-$\Omega=6$, $\beta=3$} 
    
    \addplot
    [dash dot, mark = triangle, mark size=2.4,  mark options={solid, fill=black!90} , black!90]
    %[ red!100]
    table[y=L11G8t2MSOmega4beta3,x=SNR]{Figs/Symmetric_rate_Prop_SOTA_Comparison.tex};
    \addlegendentry{\footnotesize Ext-MS-$\Omega=4$, $\beta=3$}

    \addplot
    [dash dot, mark = x, mark size=2.4 , mark options={solid, fill=red!100}, red!100]
    %[ red!100]
    table[y=L11G8t2WSAUCOmega3beta8,x=SNR]{Figs/Symmetric_rate_Prop_SOTA_Comparison.tex};
    \addlegendentry{\footnotesize %\cite{salehi2022multi}
    Ext-Sh-$\Omega=3$, $\beta=8$}
    
   %  \addplot
   %  %[dash dot, mark = o , cyan!90]
   % % [dotted,mark = square, mark size=1.2, cyan!100]
   %  [dotted,mark = square, mark size=1., mark options={solid, fill=cyan}, line width = 0.8pt, cyan!100]
   %  table[y=L11G8t2UCOmega4beta8,x=SNR]{Figs/Symmetric_rate_Prop_SOTA_Comparison.tex};
   %  \addlegendentry{\footnotesize UC-$\Omega=4$, $\beta=8$}

    \addplot
    %[dash dot, mark = o , cyan!90]
   % [dotted,mark = square, mark size=1.2, cyan!100]
    [ dotted %dashed
    ,mark = +, mark size=2.4, mark options={solid, fill=cyan}, line width = 0.6pt, cyan!100]
    table[y=L11G8t2UCOmega4beta6,x=SNR]{Figs/Symmetric_rate_Prop_SOTA_Comparison.tex};
    \addlegendentry{\footnotesize UC-$\Omega=4$, $\beta=6$}

    \addplot
    %[dash dot, mark = o , cyan!90]
   % [dotted,mark = square, mark size=1.2, cyan!100]
    [,mark = o, mark size=2.4, mark options={solid, fill=cyan}, line width = 0.5pt, cyan!100]
    table[y=L11G8t2UCOmega6beta3,x=SNR]{Figs/Symmetric_rate_Prop_SOTA_Comparison.tex};
    \addlegendentry{\footnotesize UC-$\Omega=6$, $\beta=3$}

    \end{axis}

    \end{tikzpicture}
    }
    %\vspace{-1mm}
    \caption{The symmetric rates of MC, UC, Ext-MS and Ext-Sh with $(L,G,t)=(11,8,2)$.%$t=0$ 
    }
    \label{fig:Plot_SOTA_MC_UC_MC_MS_WSA_t2}
%\end{figure}
         % \captionof{Symmetric rate versus SNR for different $\Brt$ values, where $P=6$, $\sigma=2.29$ and $L=\alpha=4$.}
         % \label{fig: compare_t}
      \end{figure}
      
In Fig.~\ref{fig:Plot_SOTA_MC_UC_MC_MS_WSA_t2}, we extend the symmetric rate evaluation in Fig.~\ref{fig:COV_ZF_MC_Scalability} by comparing MC and UC schemes with Ext-MS and Ext-Sh for a setup with $(L,G,t)=(11,8,2)$. The figure reveals several key observations: 1) The multicasting gain stemming from the bit-level design of MC and Ext-MS schemes has a significant effect on the symmetric rate at finite SNR. In fact, despite delivering a significantly smaller number of parallel streams, the Ext-MS scheme outperforms Ext-Sh and UC for SNR values smaller than 10$\mathrm{dB}$. This observation aligns with previous results for MISO-CC schemes~\cite{tolli2017multi,salehi2022multi}. 2) For a given scheme, if the SNR is small, it is not desirable to increase the number of parallel streams as much as possible. For example, for the SNR value of 5$\mathrm{dB}$, the MC scheme with $(\Omega,\beta) = (6,3)$ outperforms the $(\Omega,\beta) = (4,6)$ scheduling. 3) Comparing MC with Ext-MS and Ext-Sh, we observe that the larger scheduling space provided by our proposed algorithms significantly enhances performance across the entire SNR range, outperforming the state-of-the-art. %4) Comparing UC-$(\Omega,\beta) = (4,6)$ with MC-$(\Omega,\beta) = (6,3)$, we observe that while the better multicasting gain of MC makes it superior at small SNR values, it is outperformed by UC at 30$\mathrm{dB}$ because of the larger number of parallel streams delivered by the UC scheme. This confirms the importance of the DoF parameter at high SNR.

      \begin{figure}[t]
         %\begin{figure}[ht]
    \centering
%\resizebox{\columnwidth}{!}{%

   \resizebox{\columnwidth}{!}{%
     %\resizebox{0.67\columnwidth}{!}{%

    \begin{tikzpicture}

    \begin{axis}
    [
    % put axis lines at left and bottom
    axis lines = center,
    % control axis labels
    xlabel near ticks,
    xlabel = \smaller {SNR [dB]},
    ylabel = \smaller {Symmetric Rate [files/s]},
    ylabel near ticks,
    ymin = 0,
    xmax = 30,
    legend style={
        nodes={scale=0.89, transform shape},
        at={(0,1)}, % Adjust the position (0,1) is top left
        anchor=north west,
        draw=black, % Set the color of the legend box border
        outer sep=2pt, % Adjust the space outside the legend box
        font=\tiny, % Adjust the font size
    },
    % control legend position
    %legend pos = north west,
    % control size of tick marks (10,20,30,etc)
    ticklabel style={font=\smaller},
    % control major grids
    grid=both,
    major grid style={line width=.2pt,draw=gray!30},
    % control minor grids
    %grid style={line width=.1pt, draw=gray!10},
    %minor tick num=5,
    ]

    \addplot
    %[mark = square, black!90]
    [mark = , mark size=2.4, red!100]
    table[y=L10G5t1COVOmega4beta4,x=SNR]{Figs/Symmetric_rate_Prop_SOTA_Comparison.tex};
    \addlegendentry{\small COV-$\Omega=4$, $\beta=4$}

    \addplot
    %[mark = square, black!90]
    [dash dot,mark = star, mark size=2.4, red!100]
    table[y=L10G5t1MCOmega4beta4,x=SNR]{Figs/Symmetric_rate_Prop_SOTA_Comparison.tex};
    \addlegendentry{\small MC-$\Omega=4$, $\beta=4$}

     \addplot
    %[mark = square, black!90]
    [dash dot,mark = triangle,mark options={solid, fill=red!100}, mark size=2.4, red!100]
    table[y=L10G5t1UCmega4beta4,x=SNR]{Figs/Symmetric_rate_Prop_SOTA_Comparison.tex};
    \addlegendentry{\small UC-$\Omega=4$, $\beta=4$}

    \addplot
    %[mark = square, black!90]
    [mark = star, mark size=2.4, black!80]
    table[y=L10G5t1MCOmega6beta2,x=SNR]{Figs/Symmetric_rate_Prop_SOTA_Comparison.tex};
    \addlegendentry{\small MC-$\Omega=6$, $\beta=2$}

      \addplot
    %[mark = square, black!90]
    [dash dot, mark = o,mark options={solid, fill=black!80}, mark size=2.4, black!80]
    table[y=L10G5t1MSOmega4beta3,x=SNR]{Figs/Symmetric_rate_Prop_SOTA_Comparison.tex};
    \addlegendentry{\small MC-$\Omega=4$, $\beta=3$}

      \addplot
    %[mark = square, black!90]
    [dashed, mark = +,mark options={solid, fill=black!80}, mark size=2.4, black!80]
    table[y=L10G5t1MCOmega3beta4,x=SNR]{Figs/Symmetric_rate_Prop_SOTA_Comparison.tex};
    \addlegendentry{\small MC-$\Omega=3$, $\beta=4$}

    \addplot
    %[mark = square, black!90]
    %[mark = o, mark size=1.2, blue!100, line width=.5pt]
      [mark = star, mark size=2.4,mark options={solid, fill=blue!100}, blue!100]
    table[y=L10G5t1MCOmega10beta1,x=SNR]{Figs/Symmetric_rate_Prop_SOTA_Comparison.tex};
    \addlegendentry{\small MC-$\Omega=10$, $\beta=1$} 

       \addplot
    %[mark = square, black!90]
    [dashed,mark = o, mark size=2.4,mark options={solid, fill=blue!100},, blue!100]
    table[y=L10G5t1MCOmega5beta2,x=SNR]{Figs/Symmetric_rate_Prop_SOTA_Comparison.tex};
    \addlegendentry{\small MC-$\Omega=5$, $\beta=2$} 

       \addplot
    %[mark = square, black!90]
    [dash dot,mark = +,  mark options={solid, fill=blue!100}, mark size=2.4, blue!100]
    table[y=L10G5t1MCOmega2beta5,x=SNR]{Figs/Symmetric_rate_Prop_SOTA_Comparison.tex};
    \addlegendentry{\small MC-$\Omega=2$, $\beta=5$}

    \end{axis}

    \end{tikzpicture}
    }
    %\vspace{-1mm}
    \caption{%Symmetric rate performance;
    The effect of the scheduling decision on the symmetric rate.
    %Comparing the MC, COV and UC schemes, 
    %with the same achievable DoF for setup
    %with the same DoF. %for $L=10$,$G=5$, $t=1$.  
    }
    \label{fig:Plot_SOTA_MC_UC_COV_MC_same_DOF}
%\end{figure}
         % \captionof{Symmetric rate versus SNR for different $\alpha$ values with $L=\alpha$, $P=3$, $\sigma=3.77$ and $\Brt=2$.}
         % \label{fig: compare_alpha}
      \end{figure}
       Fig.~\ref{fig:Plot_SOTA_MC_UC_COV_MC_same_DOF} compares the  symmetric rate performance of different scheduling schemes under the same DoF,
%Fig.~\ref{fig:Plot_SOTA_MC_UC_COV_MC_same_DOF} evaluates how the scheduling decision affects the symmetric rate, 
for a setup with $(L,G,t)=(10,5,1)$, which can support DoF of $16$ with both UC and MC approaches. All curves with \(\Omega \times \beta = 16\) exhibit the same slope at high SNR. However, in the UC case, this slope becomes evident only at very high SNR, not yet observable even at 30 dB. While the COV design~\cite{naseritehrani2024multicast} serves as a bound for symmetric rate under a given scheduling, it involves significant complexity. Remarkably, the proposed linear MC beamforming solution closely follows the performance of the COV design, demonstrating its effectiveness with much lower complexity.
In addition, we have considered two pairs of scheduling alternatives, where each pair delivers the same total number of parallel streams ($12$ and $10$) but with different $(\Omega,\beta)$ values. From the figure, it can be observed that the scheduling decision with a smaller $\beta$ outperforms the other. This observation aligns with the discussion in Remark~\ref{underloading}. 
For a given number of transmitted streams, distributing the streams across a larger number of users greatly enhances the symmetric rate performance at finite SNR.
Similarly, adopting the setup ($\Omega$=6, $\beta$=2) instead of ($\Omega$=4, $\beta$=4) below the SNR values of 16dB, benefits from the spatial underloading condition at both the transmitter and receiver sides.

\section{Conclusion}
\label{sec:conclusions}
This paper investigated the application of coded caching (CC) to enhance communication efficiency and performance in MIMO systems. First, the number of users per transmission and the spatial multiplexing order per user were optimized to improve the achievable single-shot degrees of freedom (DoF). Then, a new class of MIMO-CC schemes with maximal multicasting gain for enhanced finite-SNR performance and adhering to linear decodability was introduced. Numerical simulations confirmed the enhanced DoF and improved finite-SNR performance of the new schemes.

\appendices
%\subsection{Linear Beamforming for MIMO-CC}
\section{Linear Beamforming for MIMO-CC}
\label{section:Linear_BF}

Here, we discuss how linear transmit and receive beamformers can be designed for the proposed class of MIMO-CC multicast transmission schemes. The objective is to support multicast transmission to multiple user groups with partially overlapping user sets, using receiver-side processing to separate group-specific streams. The solution builds on the approach in~\cite{mahmoodi2021low}, extending it to accommodate partially overlapping multicast groups.
%
%introduced in Section~\ref{section:DoF_MC}.
%The goal is to enable multicast transmission to multiple groups of users with partially overlapping user sets and to use receiver-side processing to separate different group-specific streams. 
%We also propose an iterative solution, adapting the solution in~\cite{mahmoodi2021low} for non-overlapping multicast groups, for the fast and efficient computation of the transmit beamformers.
%
%In this section, we discuss how linear transmit and receive beamformers could be designed for the proposed transmission strategies for MIMO-CC systems, along withan iterative solution for the fast computation of transmit beamformers.
%
%\subsection{Transmit and Receive Beamformer Design}
%We start with the transmission vector $\Bx$ in~\eqref{sig_model_mc}, as it is the most general design containing multicast terms.
%
Let us start with the general transmission vector design in~\eqref{sig_model_mc} and ignore the $\hat{s}$ index (the same process is repeated for each transmission). 
Let us define $\CD^{\CK}$ to include all the codewords (i.e., every $X_{\CT}^{\hat{q}}$) in transmit signal $\Bx$. We also define $\hat{\SfS}^{\CK}_k = \{\CT \in \hat{\SfS}^{\CK} \mid k \in \CT \}$, $\CD^{\CK}_k = \{X_{\CT}^{\hat{q}} \in \CD^{\CK} \mid \CT \in \hat{\SfS}^{\CK}_k  \}$ and $\Bar{\CD}^{\CK}_k = \CD^{\CK} \backslash \CD^{\CK}_k$.
%\begin{equation*}
%    \begin{array}{l}
%        \CD^{\CK}_k = \{X_{\CT}^{\hat{q}} \in \CD^{\CK} \mid \CT \in \hat{\SfS}^{\CK}_k  \}, \\ 
%        \Bar{\CD}^{\CK}_k = \{X_{\CT}^{\hat{q}} \mid \CT \in \hat{\SfS}^{\CK} \backslash {\hat{\SfS}}^{\CK}_k, X_{\CT}^{\hat{q}} \textrm{ is transmitted by } \Bx  \}.
%    \end{array}
%\end{equation*}
Then, the signal received by user~$k \in \CK$ in~\eqref{eq:RX_signal} is
\begin{equation}\label{rx_sig_model_Mcuc_basic}
%\begin{aligned}
\begin{array}{l}
\By_k = \BH_k \Big( {\sum_{\CT, \hat{q}: X_\CT^{\hat{q}} \in \CD^{\CK}_k} \Bw_{\CT}^{\hat{q}} X_{\CT}^{\hat{q}} }  \\
\qquad \qquad + {\sum_{\CT, \hat{q}: X_\CT^{\hat{q}} \in \bar{\CD}^{\CK}_k }   \Bw_{\CT}^{\hat{q}} X_{\CT}^{\hat{q}}} \Big) + \Bz_k, %
\end{array}
%\end{aligned}
\end{equation}
% where $\Bz_k$ denotes the noise vector and 
where the first and second summations represent the intended and interference signals, respectively. 
%Let us now review the rate optimization problem. 
%where 
Denoting $\Bu_{k,\CT}^{\hat{q}}$ as the receiver beamforming vector for decoding the intended stream $X_{\CT}^{\hat{q}} \in \CD_k^{\CK}$ at user~$k \in \CK$, the corresponding SINR term
%streams $\CT$ and sub-streams $\hat{q}$, respectively, 
%the SINR terms 
$\gamma_{k,\CT}^{\hat{q}}$ is given as:
\begin{equation}
\begin{array}{l}
\label{eq:snrmc}
\gamma_{k,\CT}^{\hat{q}}   = \frac{\displaystyle | {\Bu_{k,\CT}^{\hat{q}}}\herm \BH_k \Bw_{\CT}^{\hat{q}}|^2}{ \hspace{-10mm}{\displaystyle \sum\limits_{{\bar{\CT},  \bar{q}: X_{\bar{\CT}}^{\bar{q}} \in \CD^{\CK} \backslash \{X_{\CT}^{\hat{q}}\}}}} %X_{\bar{\CT}}^{\bar{q}} \neq X_{\CT}^{\hat{q}}}}}
\displaystyle|  {\Bu_{k,\CT}^{\hat{q}}}\herm \BH_k \Bw_{\bar{\CT}}^{\bar{{{q}}}} |^2  + N_0 \|\Bu_{k,{\CT}}^{\hat{q}}\|^2}.
\end{array}
\end{equation}
%Section~\ref{section:MCMG}
Similarly to~\cite{naseritehrani2024multicast}, we aim to minimize the worst-case delivery time among all users in $\CK$. This is realized by maximizing the minimum achievable rate 
%over all sub-streams denoted by $\chi$ in the set $\SfD_k^{\CK}$ 
across all partially overlapping groups of $\CT \in \hat{\SfS}_k^{\CK}$, formulated as 
\begin{equation}
    \begin{array}{l}\label{eq:optmization_problem}
     \hspace{-3mm}\max_{\Bw_{\CT}^{\hat{q}}, \Bu_{k,\CT}^{\hat{q}}} \!\!
 %\min_{ k\in  \CK} \min_{\CT\in \hat{\SfS}_k^{\CK} }  \sum\nolimits_{{\hat{q}} \in \CQ_\CT} \log(1 +  \gamma_{k,\CT}^{\hat{q}}), \\ [2ex] 
 \min_{ \CT \in  \hat{\SfS}^{\CK}}  \sum_{{\hat{q}} \in \CQ_{\CT}} \min_{k \in \CT} \log(1 +  \gamma_{k,\CT}^{\hat{q}}), \\ [2ex] 
  \qquad s.t.~~\sum\nolimits_{\CT \in \hat{\SfS}^{\CK}, \hat{q} \in \CQ_\CT } \|\Bw_{\CT}^{\hat{q}}\|^2  
	  \leq P_T ,
    \end{array}
\end{equation}
% \begin{equation}\label{eq:optmization_problem}
% \begin{aligned}
% \hspace{-1mm}\max\nolimits_{\Bw_{\CT}^{\hat{q}}, \Bu_{k,\CT}^{\hat{q}}} \,\,
% &\!\!\!\!\min\nolimits_{\CT \in \SfS^{\CK}} \sum\nolimits_{\hat{q} \in \CQ_{\CT}} \min\nolimits_{k \in \CT} \log(1 + \gamma_{k,\CT}^{\hat{q}}) \\[-0.4ex]
% \text{s.t.} \quad
% &\sum\nolimits_{\CT \in \hat{\SfS}^{\CK}, \hat{q} \in \CQ_\CT} \|\Bw_{\CT}^{\hat{q}}\|^2 \leq P_T
% \end{aligned}
% \end{equation}
%where $\CQ_\CT = \{ {\hat{q}} \mid X_{\CT}^{\hat{q}} \in {\CD}_k^{\CK} \}$,
where $\CQ_\CT = \{ {\hat{q}} \mid X_{\CT}^{\hat{q}} \in {\CD}^{\CK} \}$ and
$P_T$ is the transmission power. % and the SINR terms $\gamma_{k,\CT}^{\hat{q}}$ are defined as
% \begin{equation*}
% \begin{array}{l}
% \label{eq:snrmc}
% \gamma_{k,\CT}^{\hat{q}}   = \frac{| {\Bu_{k,\CT}^{\hat{q}}}\herm \BH_k \Bw_{\CT}^{\hat{q}}|^2}{ \hspace{-10mm}{\sum\limits_{{\bar{\CT},  \bar{q}: X_{\bar{\CT}}^{\bar{q}} \in (\CD_k^{\CK}\cup \bar{\CD}_k^{\CK}) \backslash \{X_{\CT}^{\hat{q}}\}}}} %X_{\bar{\CT}}^{\bar{q}} \neq X_{\CT}^{\hat{q}}}}}
% |  {\Bu_{k,\CT}^{\hat{q}}}\herm \BH_k \Bw_{\bar{\CT}}^{\bar{{{q}}}} |^2  + N_0 \|\Bu_{k,{\CT}}^{\hat{q}}\|^2}\nonumber.
% \end{array}
% \end{equation*}
\color{black}
The optimization problem in ~\eqref{eq:optmization_problem} can be solved by alternate optimization of $\{\Bu_{k,\CT}^{\hat{q}}\}$ and $\{\Bw_{k}^{\hat{q}}\}$. For given $\{\Bw_{k}^{\hat{q}}\}$, the 
rate-optimal $\{\Bu_{k,\CT}^{\hat{q}}\}$, maximizing the objective of~\eqref{eq:optmization_problem} and employed to separate the $\beta$ data terms intended for user~$k$, correspond to (scaled) MMSE receivers~\cite{kaleva2016decentralized}:
\begin{equation}
\begin{array}{l}
\label{eq:receive_beamformer}
\Bu_{k,\CT}^{\hat{q}} = \Big(\BH_k \BW\BW\herm \BH_k\herm + N_0 \BI\Big)^{-1} \BH_k {\Bw}_{\CT}^{\hat{q}} , \\  \qquad \qquad\qquad  \forall \CT \in \hat{\SfS}^{\CK}, k \in \CT, {\hat{q}}\in\CQ_\CT, 
%\forall k \in \CK, \CT \in \hat{\SfS}_k^{\CK}, {\hat{q}}\in\CQ_\CT,
\end{array}
\end{equation}
 %Therefore, one could use the LMMSE receiver calculated as:
%
%Calculating $\{\Bu_{k,\CT}^{\hat{q}}\}$ for given $\{\Bw_{\CT}^{\hat{q}}\}$ values is simply done using~\eqref{eq:receive_beamformer}. 
where $\BW = \big[{\Bw}_{\CT}^{\hat{q}}\big]$ is formed by concatenation of all transmit beamforming vectors ${\Bw}_{\CT}^{\hat{q}}$ (for every transmitted stream $X_{\CT}^{\hat{q}} \in \CD^{\CK}$). %stream index $\CT \in \hat{\SfS}^{\CK}$ and sub-stream index $\hat{q} \in \CQ_{\CT}$.
However, for given $\{\Bu_{k,\CT}^{\hat{q}}\}$, the (sub-)optimal solution to $\{\Bw_{k}^{\hat{q}}\}$ can be found through a tailored version of the solution in~\cite{kaleva2016decentralized}, coupled with the iterative KKT-based method in~\cite{mahmoodi2021low}. The details are relegated to~\cite{kaleva2016decentralized, mahmoodi2021low}.

\color{black}

\bibliographystyle{IEEEtran}
\bibliography{conf_short,IEEEabrv,references,whitepaper}
%\bibliography{Bib/conf_short,Bib/IEEEabrv,Bib/references,Bib/whitepaper}
\end{document}